\newcolumntype{M}[1]{>{\centering\arraybackslash}m{#1}}
\newcolumntype{N}{@{}m{0pt}@{}}
\newtheorem{theorem}{Theorem}[section]
\newtheorem{corollary}[theorem]{Corollary}
\newtheorem{notation}[theorem]{Notation}
\newtheorem{claim}{Claim}
\newtheorem{proposition}[theorem]{Proposition}
\newtheorem{lemma}[theorem]{Lemma}
\newtheorem{remark}[theorem]{Remark}
\def\cqedsymbol{\ifmmode$\lrcorner$\else{\unskip\nobreak\hfil
\penalty50\hskip1em\null\nobreak\hfil$\lrcorner$
\parfillskip=0pt\finalhyphendemerits=0\endgraf}\fi}
\newcommand{\Ga}{\mathcal{G}}
\newcommand{\Win}{\mathcal{W}}
\newcommand{\LL}{\mathcal{L}}
\newcommand{\LLD}{\mathcal{L}^-}
\newcommand{\RR}{\mathcal{R}}
\newcommand{\RRD}{\mathcal{R}^-}
\newcommand{\DD}{\mathcal{D}}
\newcommand{\NN}{\mathcal{N}}
\newcommand{\strat}{\mathcal{S}}
\newcommand{\QBF}{\textsc{3-QBF}}
\newcommand{\true}{{\sf T}}
\newcommand{\false}{{\sf F}}
\newcommand{\quickset}[1]{\left\lbrace #1 \right\rbrace}
\newcommand{\edge}[1]{\quickset{#1}}
\newcommand{\segment}[2]{\llbracket #1, #2 \rrbracket}
\newcommand{\ind}{\textup{ind}}
\newcommand{\prob}[2]{\textup{\textsc{AchievementPos}(#1,#2)}}
\newcommand{\makermaker}[1]{\textup{\textsc{MakerMaker}(#1)}}
\newcommand{\makerbreaker}[1]{\textup{\textsc{MakerBreaker}(#1)}}
\definecolor{mygray}{RGB}{220,220,220}
\definecolor{myyellow}{RGB}{254,248,196}
\title{A unified convention for achievement positional games}
\author[1]{Florian Galliot}
\author[2]{Jonas Sénizergues}
\affil[1]{Aix-Marseille Université, CNRS, I2M, UMR 7373, 13453 Marseille, France}
\affil[2]{Univ. Bordeaux, CNRS, Bordeaux INP, LaBRI, UMR 5800, F-33400 Talence, France}
\date{}
\begin{document}

\maketitle

\begin{abstract}
    We introduce achievement positional games, a convention for positional games which encompasses the Maker-Maker and Maker-Breaker conventions. We consider two hypergraphs, one red and one blue, on the same vertex set. Two players, Left and Right, take turns picking a previously unpicked vertex. Whoever first fills an edge of their color, blue for Left or red for Right, wins the game (draws are possible). We establish general properties of such games. In particular, we show that a lot of principles which hold for Maker-Maker games generalize to achievement positional games. We also study the algorithmic complexity of deciding whether Left has a winning strategy as the first player when blue edges and red edges have respective sizes at most $p$ and $q$. This problem is in {\sf P} for $p,q \leq 2$, but it is {\sf NP}-hard for $p \geq 3$ and $q=2$, {\sf coNP}-complete for $p=2$ and $q \geq 3$, and {\sf PSPACE}-complete for $p,q \geq 3$ even when the 3-edges are the same for both colors. That last result has an interesting consequence on the Maker-Maker convention: for 3-uniform hypergraphs, which is the only case whose complexity is currently open (for starting positions of the game), we show {\sf PSPACE}-completeness for positions obtained after one round of play.
\end{abstract}

\section{Introduction}\strut
\indent\textbf{Positional games.} \textit{Positional games} have been introduced by Hales and Jewett \cite{Hales1963} and later popularized by Erd\H{o}s and Selfridge~\cite{erdos}. The game board is a hypergraph $H=(V,E)$, where $V$ is the vertex set and $E \subseteq 2^V$ is the edge set. Two players take turns picking a previously unpicked vertex of the hypergraph, and the result of the game is defined by one of several possible \textit{conventions}. The two most popular conventions are called \textit{Maker-Maker} and \textit{Maker-Breaker}. As they revolve around trying to fill an edge \textit{i.e.} pick all the vertices of some edge, they are often referred to as ``achievement games''. In the Maker-Maker convention, whoever fills an edge first wins (draws are possible), whereas in the \textit{Maker-Breaker} convention, Maker aims at filling an edge while Breaker aims at preventing him from doing so (no draw is possible). These games have counterparts in the form of ``avoidance games''. In the \textit{Avoider-Avoider} convention, whoever fills an edge first loses, whereas in the \textit{Avoider-Enforcer} convention, Avoider aims at avoiding filling an edge while Enforcer aims at forcing her to do so. For all these conventions, the central question is to find out what the result of the game is with optimal play (who wins, or is it a draw). The study of positional games mainly consists, for various conventions and classes of hypergraphs, in finding necessary and/or sufficient conditions for such player to have a winning strategy (on the number of edges, the size of the edges, the structure of the hypergraph...), and determining the complexity of the associated algorithmic problems.

The Maker-Maker convention was the first one to be introduced, in 1963 by Hales and Jewett \cite{Hales1963}. The game of \textit{Tic-Tac-Toe} is a famous example. As a simple \textit{strategy-stealing} argument \cite{Hales1963} shows that the second player cannot have a winning strategy, the question is whether the given hypergraph $H$ is a first player win or a draw with optimal play. 
This decision problem is trivially tractable for hypergraphs of rank 2 \textit{i.e.} whose edges have size at most 2, but it is {\sf PSPACE}-complete for hypergraphs that are 4-uniform \textit{i.e.} whose edges have size exactly 4 \cite{makermaker4,MBrank4}.
Maker-Maker games are notoriously difficult to handle. They are \textit{strong games}, meaning that both players must manage offense and defense at the same time, by trying to fill an edge while also preventing the opponent from doing so first. As such, it is unclear which player should benefit even from some basic hypergraph operations such as adding an edge (see the \textit{extra edge paradox} \cite{Bec08}).

The Maker-Breaker convention was introduced for that reason, in 1978 by Chv\'atal and Erd\H{o}s  \cite{CE78}. It is by far the most studied, as it presents some convenient additional properties compared to the Maker-Maker convention thanks to the players having complementary goals. The most crucial property is subhypergraph monotonicity: if Maker has a winning strategy on a subhypergraph of $H$, then he may disregard the rest of the hypergraph and apply that strategy to win on $H$, as he has no defense to take care of. The problem of deciding which player has a winning strategy when, say, Maker starts, is tractable for hypergraphs of rank 3 \cite{MBrank3} but {\sf PSPACE}-complete for 4-uniform hypergraphs \cite{MBrank4} (a very recent improvement on the previously known results for 6-uniform hypergraphs \cite{MBrank6} and 5-uniform hypergraphs \cite{MBrank5}). Consider the board game \textit{Hex} for instance, where two players take turns placing tokens of their color to try and connect opposite sides of the board. It is well-known \cite{gardner} that a draw is technically impossible, and that successfully connecting one's sides of the board is equivalent to blocking the opponent from connecting theirs. As such, Hex is often presented as a Maker-Breaker game, but it only is so by theorem. By nature, it actually is an achievement game which does not fall under either Maker-Maker or Maker-Breaker conventions, as the first player to fill an edge wins but the players have different edges to fill (corresponding here to ``horizontal paths'' and ``vertical paths'' respectively).

\bigskip

\indent\textbf{Unified achievement games.} We introduce \textit{achievement positional games}. Such a game is a triple $\Ga=(V,E_L,E_R)$, where $(V,E_L)$ and $(V,E_R)$ are hypergraphs which we see as having \textit{blue edges} and \textit{red edges} respectively. There are two players, Left and Right, taking turns picking a previously unpicked vertex: Left aims at filling a blue edge, while Right aims at filling a red edge. Whoever reaches their goal first wins the game, or we get a draw if this never happens. Achievement positional games include all Maker-Maker and Maker-Breaker games. Indeed, Maker-Maker games correspond to the case $E_L=E_R$, while Maker-Breaker games correspond to the case $E_R=\varnothing$ when identifying Maker with Left and Breaker with Right and renaming Breaker wins as draws (although there is another way to embed Maker-Breaker games into achievement positional games, as we explain later in this paragraph). We note that this idea of ``edge-partizan'' positional games appears briefly in \cite{incidence}, albeit in the context of scoring games, when showing that the scoring versions of the Maker-Maker and Maker-Breaker conventions belong to Milnor's universe.

Even though, usually, Maker-Maker games are seen as ``symmetrical'' with both players having the same goal, while Maker-Breaker games are seen as ``asymmetrical'' with players having complementary goals, things may not necessarily be as black and white. This is one of the motivations behind the introduction of a unified convention. Indeed, Maker-Maker games have the inconvenient property of being unstable under the players' moves, in the sense that ``precolored'' positions of the game cannot be seen as starting positions. Even though the game seems symmetrical before it begins, things change as soon as the players start making moves: when a player picks a vertex, all edges containing that vertex can thereafter only be filled by that player. As for Maker-Breaker games, one can notice that Breaker wins if and only if she fills a \textit{transversal} of the hypergraph \textit{i.e.} a set of vertices that intersects every edge of the hypergraph. Therefore, one way to interpret such games as achievement positional games would be to define $(V,E_R)$ as the \textit{transversal hypergraph} of $(V,E_L)$, meaning that $E_R$ is the set of all minimal transversals of the hypergraph $(V,E_L)$. Note that the function which maps a hypergraph to its transversal hypergraph is an involution (as long as no edge is a superset of another edge, which can be freely assumed for the game), so we get a symmetric situation where each of $(V,E_L)$ and $(V,E_R)$ is the transversal hypergraph of the other. As such, the asymmetrical nature of the Maker-Breaker convention is actually questionable, since Maker could just as easily be seen as the ``Breaker'' of his opponent's transversals. This point of view may actually be truer to the game compared to simply setting $E_R=\varnothing$. Indeed, additionally to Breaker wins still being deemed ``wins'' instead of being renamed as ``draws'', it seems fair that Breaker's true goal (the transversals) should be explicitly included in the input game rather than being implied by Maker's goal.

\bigskip

\indent\textbf{Objectives and results.} The aim of this paper is twofold. We first establish elementary properties of achievement positional games in general. In particular, we look at some general principles which hold in the Maker-Maker convention, such as strategy stealing for instance, to see if they generalize to achievement positional games. For all those that we consider, we show that this is indeed the case, emphasizing the fact that most properties of Maker-Maker games come from their ``achievement'' nature rather than symmetry. We also define the \textit{outcome} of an achievement positional game as the result with optimal play when Left starts and when Right starts. We list all existing outcomes, as well as all possibilities for the outcome of a disjoint union of two games depending on their individual outcomes. Our second objective is the study of the algorithmic complexity of the game, depending on the size of both players' edges. The problem we consider is the existence of a winning strategy for Left as the first player. For some edge sizes, the algorithmic complexity of this problem ensues from previous results on positional games. We obtain results for all the remaining edge sizes, which can be seen in Table \ref{tab:results}. A corollary of our work is that deciding whether the first player has a winning strategy for the Maker-Maker game on a 3-uniform hypergraph after one round of (non-optimal) play is {\sf PSPACE}-complete, which is the first general complexity result on the Maker-Maker convention for edges of size 3.

\begin{table}[h]
\begin{center}
\begin{tabular}{|c|M{1.7cm}|M{1.7cm}|M{1.7cm}|M{1.7cm}|M{1.7cm}|} \hline
\diagbox{$q$}{$p$} & 0 , 1 & 2 & 3 & 4 & 5+ \\
\hline
0 , 1 & \makecell{{\sf LSPACE} \\ \small [trivial]} & \makecell{{\sf LSPACE} \\ \small \cite{RW20}} &  \makecell{{\sf P} \\ \small \cite{MBrank3}} & \makecell{{\sf PSPACE}-c \\ \small \cite{MBrank4}} & \makecell{{\sf PSPACE}-c \\ \small \cite{MBrank5}} \\
\hline
2 & \makecell{{\sf LSPACE} \\ \small [trivial]} &  \cellcolor{myyellow} \makecell{{\sf P} \\ \small [Th. \ref{theo:22}]} & \cellcolor{myyellow} {\makecell{{\sf NP}-hard \\ \small [Th. \ref{theo:32}]}} & \makecell{{\sf PSPACE}-c \\ \small \cite{MBrank4}} & \makecell{{\sf PSPACE}-c \\ \small \cite{MBrank5}} \\
\hline
3+ & \makecell{{\sf LSPACE} \\ \small [trivial]} & \cellcolor{myyellow} \makecell{{\sf coNP}-c \\ \small [Th. \ref{theo:23}]} & \cellcolor{myyellow} \makecell{{\sf PSPACE}-c \\ \small [Th. \ref{theo:33}]} & \makecell{{\sf PSPACE}-c \\ \small \cite{MBrank4}} & \makecell{{\sf PSPACE}-c \\ \small \cite{MBrank5}} \\
\hline
\end{tabular}
\end{center}
\caption{Algorithmic complexity of deciding whether Left has a winning strategy as the first player, for blue edges of size at most $p$ and red edges of size at most $q$. Yellow cells correspond to this paper's main results.}\label{tab:results}
\end{table}

We note that an extended abstract of this work was published at EuroComb'25 \cite{eurocomb}, however the {\sf PSPACE}-hardness proof for the case $(p,q)=(3,3)$ was different and did not imply {\sf PSPACE}-hardness for intermediate positions of the Maker-Maker game on 3-uniform hypergraphs, whereas the proof presented in this paper does. On the other hand, a recent paper has built on this extended abstract to improve on one of our results: namely, the case $(p,q)=(3,2)$, which Theorem \ref{theo:32} proves to be {\sf NP}-hard, has been shown to be {\sf PSPACE}-complete \cite{makermaker4}.

In Section \ref{section2}, we introduce achievement positional games as well as their associated decision problem, and we define the notion of outcome. We then establish general properties of achievement positional games in Section \ref{section3}, including the outcome of a disjoint union for which part of the proof is deferred to Appendix \ref{appendix}. Section \ref{section4} is
dedicated to algorithmic studies depending on the size of the edges. Finally, Section \ref{section5} concludes the paper and lists some perspectives.

\section{Preliminaries}\label{section2}

\subsection{Definitions}\strut
\indent In this paper, a \textit{hypergraph} is a pair $(V,E)$ where $V$ is a finite \textit{vertex set} and $E \subseteq 2^V \setminus \{\varnothing\}$ is the \textit{edge set}. An edge of size $k$ may be called a {\em $k$-edge}. An \textit{achievement positional game} is a triple $\Ga=(V,E_L,E_R)$ where $(V,E_L)$ and $(V,E_R)$ are hypergraphs. The elements of $E_L$ are called \textit{blue edges}, whereas the elements of $E_R$ are called \textit{red edges}. Two players, Left and Right, take turns picking a vertex in $V$ that has not been picked before. We say a player \textit{fills} an edge if that player has picked all the vertices of that edge. The blue and red edges can be seen as the winning sets of Left and Right respectively, so that the result of the game is determined as follows:
\begin{itemize}[noitemsep,nolistsep]
    \item If Left fills a blue edge before Right fills a red edge, then Left wins.
    \item If Right fills a red edge before Left fills a blue edge, then Right wins.
    \item If none of the above happens before all vertices are picked, then the game is a draw.
\end{itemize}

The player who starts the game may be either Left or Right. Therefore, when talking about winning strategies (\textit{i.e.} strategies that guarantee a win) or non-losing strategies (\textit{i.e.} strategies that guarantee a draw or a win), we will always specify which player is assumed to start the game. For instance, we may say that ``Left has a winning strategy on $\Ga$ as the first player''.

\subsection{Updating the edges}\strut
\indent After Left picks a vertex $u$, any blue edge $e$ that contains $u$ behaves like $e \setminus \{u\}$, in the sense that Left only has to pick the vertices in $e \setminus \{u\}$ to fill $e$. Moreover, after Left picks a vertex $u$, any red edge $e'$ that contains $u$ is ``dead'' and may be ignored for the rest of the game, as Right will never be able to fill $e'$. Of course, analogous observations can be made for Right. We thus introduce notations that help materialize this idea of updating edges during the game, and show that any attainable mid-game state can be seen as a fresh new achievement positional game with no move played.

Given a set of edges $E$ and a set of vertices $S$, define $E^{+S}=\{e \setminus S \mid e \in E\}$ and $E^{-S}=\{e \in E \mid e \cap S = \varnothing\}$. 
From an initial achievement positional game $\Ga=(V,E_L,E_R)$ where Left has picked a set of vertices $V_L$ and Right has picked a set of vertices $V_R$, and assuming the game has not ended yet, we obtain the achievement positional game $\Ga_{V_L}^{V_R}=(V \setminus (V_L \cup V_R),E_L^{+V_L-V_R},E_R^{+V_R-V_L})$. We may refer to $\Ga_{V_L}^{V_R}$ as the \textit{updated game}, to the elements of $E_L^{+V_L-V_R}$ as the \textit{updated blue edges} and to the elements of $E_R^{+V_R-V_L}$ as the \textit{updated red edges}. In all these notations, we may omit curly brackets for sets of vertices, \textit{i.e.} $\Ga_u^v$ if Left has picked $u$ and Right has picked $v$.

For example, let $\Ga=(\{\alpha,\beta_1,\beta_2,\gamma_1,\gamma_2,\gamma_3,\gamma_4\},\{\{\alpha,\beta_1,\gamma_1\},\{\alpha,\beta_1,\gamma_2\},\{\alpha,\beta_2,\gamma_3\},\{\alpha,\beta_2,\gamma_4\}\},\varnothing)$ be the blue ``butterfly'' pictured in Figure \ref{fig:example}. Suppose that Left starts, and consider the following sequence of moves:
\begin{itemize}[nolistsep,noitemsep]
    \item Left picks $\alpha$: $\Ga_{\alpha}=(\{\beta_1,\beta_2,\gamma_1,\gamma_2,\gamma_3,\gamma_4\},\{\{\beta_1,\gamma_1\},\{\beta_1,\gamma_2\},\{\beta_2,\gamma_3\},\{\beta_2,\gamma_4\}\},\varnothing)$.
    \item Right picks $\beta_1$: $\Ga_{\alpha}^{\beta_1}=(\{\beta_2,\gamma_1,\gamma_2,\gamma_3,\gamma_4\},\{\{\beta_2,\gamma_3\},\{\beta_2,\gamma_4\}\},\varnothing)$.
    \item Left picks $\beta_2$: $\Ga_{\alpha,\beta_2}^{\beta_1}=(\{\gamma_1,\gamma_2,\gamma_3,\gamma_4\},\{\{\gamma_3\},\{\gamma_4\}\},\varnothing)$.
    \item Right picks $\gamma_3$: $\Ga_{\alpha,\beta_2}^{\beta_1,\gamma_3}=(\{\gamma_1,\gamma_2,\gamma_4\},\{\{\gamma_4\}\},\varnothing)$.
    \item Left picks $\gamma_4$ and wins the game.
\end{itemize}
If Right had picked $\gamma_1$ or $\gamma_2$ instead of $\beta_1$, then Left would have won the same way. If Right had picked $\beta_2$, $\gamma_3$ or $\gamma_4$ instead of $\beta_1$, then Left would have picked $\beta_1$ and won in similar fashion. All in all, Left has a winning strategy (in three moves) as the first player on the blue butterfly. Obviously, if Right starts, then optimal play leads to a draw with Right picking $\alpha$. The butterfly, in its red or blue version, will be useful for constructing games such as hardness gadgets.

\begin{figure}[h]
    \centering
    \includegraphics[scale=.5]{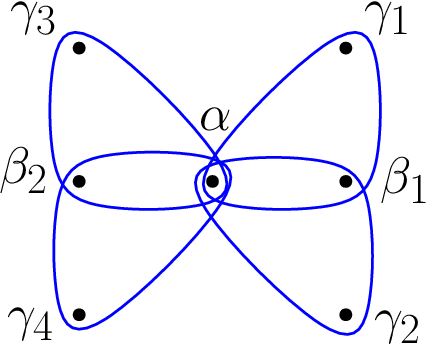}
    \caption{A blue butterfly.}\label{fig:example}
\end{figure}

\subsection{Decision problems}\strut
\indent As with all positional games, we study the complexity of achievement positional games depending on the size of the edges. For this, we must define an adequate decision problem. Given an achievement positional game in which all blue edges have size at most $p$ and all red edges have size at most $q$, the basic question should be of the following form:\\

\begin{tabularx}{\textwidth}{cXc}
    \hphantom{mmm} &
    Does $\,\,\left\{\begin{array}{l} \text{Left} \\ \text{Right}
    \end{array}\right.\,\,$ have a $\,\,\left\{\begin{array}{l} \text{winning} \\ \text{non-losing}
    \end{array}\right.\,\,$ strategy as $\,\,\left\{\begin{array}{l} \text{first} \\ \text{second}
    \end{array}\right.\,\,$ player?
    & \hphantom{mmm}
\end{tabularx}\\

It actually suffices to address one of these combinations. First of all, we need only consider strategies for Left, as the question for Right is the same up to swapping $E_L$ and $p$ with $E_R$ and $q$ respectively. Moreover, the problem of non-losing strategies is the complement of the problem of winning strategies. Finally, we can assume that Left starts, as this problem admits a linear-time reduction to the case where Right starts. Indeed, Left has a winning strategy on $\Ga=(V,E_L,E_R)$ as the second player if and only if, for all $u \in V$, we have $\{u\} \not\in E_R$ and Left has a winning strategy on $\Ga^u$ as the first player (note that the maximum size of a blue (resp. red) edge in of $\Ga^u$ is at most the maximum size of a blue (resp. red) edge in $\Ga$). All in all, we introduce the following decision problem. \\

\begin{tabularx}{0.95\textwidth}{|l @{} l @{} X|}
	\hline
	\multicolumn{3}{|l|}{\,\,\prob{$p$}{$q$}} \\ \hline
	Input $\,$ & : \,\, & An achievement positional game $\Ga=(V,E_L,E_R)$ such that every element of $E_L$ has size at most $p$ and every element of $E_R$ has size at most $q$. \\
	Output $\,$ & : \,\, & {\sf T} if Left has a winning strategy on $\Ga$ as the first player, {\sf F} otherwise. \\ \hline
\end{tabularx} \\

\begin{proposition}\label{prop:pspace}
    \prob{$p$}{$q$} is in {\sf PSPACE} for all $p,q$.
\end{proposition}

\begin{proof}
    Since the length of the game is bounded by the number of vertices, \prob{$p$}{$q$} belongs to the class {\sf AP} of problems decidable in polynomial time by an alternating Turing machine. The result ensues from the fact that {\sf AP}={\sf PSPACE}~\cite{chandra1981alternation}.
\end{proof}

This problem encompasses the decision problems associated with the Maker-Maker and Maker-Breaker conventions. \\

\begin{tabularx}{0.95\textwidth}{|l @{} l @{} X|}
	\hline
	\multicolumn{3}{|l|}{\,\,\makermaker{$k$}} \\ \hline
	Input $\,$ & : \,\, & A hypergraph $H$ in which every edge has size at most $k$. \\
	Output $\,$ & : \,\, & {\sf T} if the first player has a winning strategy for the Maker-Maker game on $H$, {\sf F} otherwise. \\ \hline
\end{tabularx} \\

\begin{tabularx}{0.95\textwidth}{|l @{} l @{} X|}
	\hline
	\multicolumn{3}{|l|}{\,\,\makerbreaker{$k$}} \\ \hline
	Input $\,$ & : \,\, & A hypergraph $H$ in which every edge has size at most $k$. \\
	Output $\,$ & : \,\, & {\sf T} if Maker has a winning strategy as the first player for the Maker-Breaker game on $H$, {\sf F} otherwise. \\ \hline
\end{tabularx} \\

Note that \makermaker{$k$} is the subproblem of \prob{$k$}{$k$} where $E_L=E_R$. Moreover, \makerbreaker{$k$} is equivalent to \prob{$k$}{$0$}. Indeed, in \prob{$k$}{$0$}, we have $E_R=\varnothing$ and Right is therefore reduced to a ``Breaker'' role, the only difference being that a Breaker win is considered a draw. Similarly, \prob{$k$}{$1$} is also equivalent to \makerbreaker{$k$}. Indeed, the case where there are at least two 1-edges in $E_R$ is trivial, and the case where there is exactly one 1-edge in $E_R$ reduces to \prob{$k$}{$0$} after one round of optimal play.

Let us recall known complexity results for the Maker-Maker and Maker-Breaker conventions. It is easy to see that \makermaker{$2$} and \makerbreaker{$2$} are in {\sf LSPACE} \cite{RW20}. Moreover, \makerbreaker{$3$} is in {\sf P} \cite{MBrank3}. As for hardness results, it has recently been shown that \makerbreaker{$4$} is {\sf PSPACE}-complete \cite{MBrank4}, improving on the previously known result for \makerbreaker{$5$} \cite{MBrank5}, and that \makermaker{$4$} is {\sf PSPACE}-complete \cite{makermaker4}. 
The algorithmic complexity of \makermaker{$3$} is currently open.

We immediately get the following complexity results for achievement positional games.

\begin{proposition}\label{prop:complexity}
    \prob{$p$}{$0$} and \prob{$p$}{$1$} are in {\sf LSPACE} for all $p \geq 2$, and in {\sf P} for $p = 3$, but are {\sf PSPACE}-complete for all $p \geq 4$. Moreover, \prob{$0$}{$q$} and \prob{$1$}{$q$} are in {\sf LSPACE} for all $q$.
\end{proposition}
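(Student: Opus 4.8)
The plan is to derive each claim from the corresponding known result for the Maker-Maker or Maker-Breaker convention, using the equivalences established in the preceding paragraphs. I would first dispense with \prob{$0$}{$k$} and \prob{$1$}{$k$}: when $p \leq 1$, either $E_L$ contains two singletons (in which case Left wins trivially by taking one of them on the first move, unless Right can win even faster, which requires $E_R$ to contain a singleton — a constant-size check), or $E_L$ contains at most one singleton, in which case after Left's forced opening move the blue side is empty and Left can never win, so the answer is \false{} unless $E_L$ already won or a bounded case analysis applies. In all cases the decision reduces to a bounded search plus checking membership of small sets in $E_L, E_R$, which is doable in {\sf LSPACE}; the only mild care needed is to confirm that reading the input and performing these look-ups stays within logarithmic work space, which is standard.

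For \prob{$k$}{$0$} and \prob{$k$}{$1$}, the key observation is already spelled out in the excerpt: \prob{$k$}{$0$} is \textit{exactly} \makerbreaker{$k$} (setting $E_R = \varnothing$ makes Right a pure Breaker, the relabelling of Breaker wins as draws being immaterial to who has a \emph{winning} strategy), and \prob{$k$}{$1$} reduces to \prob{$k$}{$0$}: if $E_R$ has two or more singletons the instance is trivial (Right wins as Left's second move cannot both block two singletons, so in fact Left cannot even avoid losing — answer \false), and if $E_R$ has exactly one singleton $\{v\}$ then after Left's opening move and Right's forced reply $v$ we are in an instance of \prob{$k$}{$0$} on the updated game $\Ga_u^v$, where we quantify over Left's best opening $u$. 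So the complexity of \prob{$k$}{$0$} and \prob{$k$}{$1$} coincides with that of \makerbreaker{$k$} up to {\sf LSPACE} reductions. Then I would simply cite: \makerbreaker{$2$} is in {\sf LSPACE} \cite{RW20}, giving the $k \geq 2$ membership in {\sf LSPACE} statement — actually I should be careful here, the claim "in {\sf LSPACE} for $k \geq 2$" as written in the proposition seems too strong given \makerbreaker{$5$} is {\sf PSPACE}-complete, so I would restate/interpret it as "for $k = 2$" (matching Table \ref{tab:results}), "in {\sf P} for $k = 3$" from \cite{MBrank3}, and "{\sf PSPACE}-complete for $k \geq 5$" from \cite{MBrank5} (monotone padding of edges handles all $k \geq 5$). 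Membership of everything in {\sf PSPACE} was already noted in the excerpt.

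The structure of the write-up would therefore be: (1) handle $p \leq 1$ directly by a trivial bounded case analysis; (2) state the reduction \prob{$k$}{$1$} $\to$ \prob{$k$}{$0$}; (3) invoke \prob{$k$}{$0$} $=$ \makerbreaker{$k$} and import the three cited complexity results; (4) note {\sf PSPACE}-hardness propagates upward from $k = 5$ by adding dummy vertices to short edges, and {\sf PSPACE} membership is the universal bound. I do not anticipate a genuine obstacle — this proposition is essentially bookkeeping on top of the equivalences already proved in the text — but the one point demanding attention is making the $p \leq 1$ and $q = 1$ analyses genuinely airtight: one must enumerate the handful of cases (zero, one, or $\geq 2$ singletons on the relevant side; whether any such singleton is disjoint from the opponent's moves; whether Left's single forced opening move can be chosen to simultaneously serve offense and block Right's lone threat) and check that each reduces either to a trivially decided instance or to an instance of \prob{$k$}{$0$}, all within the claimed resource bounds. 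That is the "hard" part only in the sense of requiring care, not depth.
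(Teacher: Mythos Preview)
Your plan matches the paper's: invoke the already-established equivalence of \prob{$k$}{$0$} and \prob{$k$}{$1$} with \makerbreaker{$k$}, import the cited complexity results, and dispose of $p \leq 1$ by direct inspection. Two points in your case analyses are tangled and would not survive a careful write-up, though.

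For $p \leq 1$: the argument is a single line, and your version is both overcomplicated and partly wrong. Since Left moves first and every blue edge (if any) is a singleton, Left fills one on move~1 and wins; conversely if $E_L = \varnothing$ Left can never win. So the answer is {\sf T} iff $E_L \neq \varnothing$. There is no need to split on ``two singletons'' (one suffices), no way for Right to win ``even faster'' (she plays second), and no scenario in which ``after Left's forced opening move the blue side is empty'' yet Left has not already won. This is exactly the paper's argument (modulo an apparent typo there, where ``red'' and $E_R$ should read ``blue'' and $E_L$).

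For the $q = 1$ reduction: you have the forcing backwards. If $E_R = \{\{v\}\}$ and $E_L$ contains no singleton, it is \emph{Left's} opening move that is forced to be $v$ (otherwise Right picks $v$ on move~2 and wins); Right's reply $u$ is then free. The updated game $\Ga_v^{\,u}$, quantified adversarially over Right's choice of $u$, has no surviving red edge and lies in \prob{$k$}{$0$}. Your phrase ``Right's forced reply $v$'' with quantification over ``Left's best opening $u$'' is the wrong way round.

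Your observation about the ``$k \geq 2$'' phrasing in the statement is well taken; it should be read as $k \leq 2$ (equivalently $k = 2$), in line with Table~\ref{tab:results}.
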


\begin{proof}
     It is clear that \prob{$0$}{$q$} and \prob{$1$}{$q$} are in {\sf LSPACE} for all $q$: indeed, if all the red edges have size~$1$, then Left has a winning strategy as the first player if and only if $E_R \neq \varnothing$. The results on \prob{$p$}{$0$} and \prob{$p$}{$1$} are a consequence of the known results (listed above) about \makerbreaker{$p$}, which is equivalent to \prob{$p$}{$0$} and \prob{$p$}{$1$}.
\end{proof}

\subsection{Outcome}\strut
\indent The \textit{outcome} of an achievement positional game $\Ga$, denoted by $o(\Ga)$, encases the result of the game (Left wins, Right wins, or draw) with optimal play when Left starts and when Right starts. In theory, there should be $3 \times 3 = 9$ possible outcomes. However, not all outcomes actually exist.

In the Maker-Maker and Maker-Breaker conventions, possessing more vertices can never harm any player. Unsurprisingly, this very intuitive principle generalizes to achievement positional games. As a consequence, playing first is always preferable. For reasons of symmetry, we only state these results (as well as other results to come) from Left's perspective.

\begin{lemma}\label{lem:more-moves}
    Let $\Ga=(V,E_L,E_R)$ be an achievement positional game, and let $u \in V$. Assume that $\{u\} \not\in E_L$, so that $\Ga_u$ is well defined.
    \begin{itemize}[nolistsep,noitemsep]
        \item If Left has a winning (resp. non-losing) strategy as the first player on $\Ga$, then Left has a winning (resp. non-losing) strategy as the first player on $\Ga_u$.
        \item If Left has a winning (resp. non-losing) strategy as the second player on $\Ga$, then Left has a winning (resp. non-losing) strategy as the second player on $\Ga_u$.
    \end{itemize}
\end{lemma}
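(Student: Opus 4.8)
The plan is to prove all four implications at once, by a single ``imaginary play with a banked vertex'' argument whose slogan is that giving Left the extra vertex $u$ can never hurt him. I fix a strategy $\strat$ for Left on $\Ga$ in one of the four roles (first or second player, winning or non-losing), and describe a strategy for Left on $\Ga_u$ in that same role. While playing on $\Ga_u$, Left will secretly maintain a play of $\Ga$ in which he obeys $\strat$, copying Right's moves back and forth between the two plays. As long as $\strat$ has not told Left to pick $u$ in the imaginary play, the two plays are kept literally equal --- which is consistent precisely because $u$ is not a vertex of $\Ga_u$, so Right can never pick it there. The first time $\strat$ tells Left to pick $u$, Left instead picks in $\Ga_u$ an arbitrary still-available vertex $w$ (one exists, since it is Left's turn and the play on $\Ga_u$ is not over) and remembers it as ``banked''; thereafter, each time $\strat$ tells Left to pick his currently banked vertex he re-banks a fresh arbitrary available vertex of $\Ga_u$, and in every other case moves are mirrored verbatim. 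By construction the two plays stay in lockstep, move for move and turn for turn, so Left's role on $\Ga_u$ matches his role on $\Ga$. The banked-vertex device is exactly what reconciles the parities: in $\Ga_u$ Left is effectively one move ``ahead'', so the plays cannot simply be mirrored, and the banked vertex absorbs this discrepancy.

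The technical core is an invariant, proved by a routine induction on the move count: after equally many moves, if $S$ and $T$ denote the sets of vertices Left has picked in the imaginary play of $\Ga$ and in the real play of $\Ga_u$ respectively, then $S \setminus T \subseteq \{u\}$, the set $T \setminus S$ is contained in the current banked vertex (if any), and Right has picked the same set of vertices, say $R$, in both plays. Recalling that $E_L^{+\{u\}}=\{e\setminus\{u\} : e\in E_L\}$ and $E_R^{-\{u\}}=\{e\in E_R : u\notin e\}$, this invariant yields two facts. First, if $e\in E_L$ satisfies $e\subseteq S$ then $e\setminus\{u\}\subseteq T$, so whenever Left has completed a blue edge in the imaginary play he has already completed the corresponding updated blue edge in $\Ga_u$. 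Second, since $u\notin R$ in either play (in the imaginary play $u$ is either still unpicked or already owned by Left, never by Right), a red edge $e'\in E_R$ with $u\in e'$ gets completed by Right in neither play, while one with $u\notin e'$ gets completed by Right in the imaginary play if and only if it gets completed by Right in $\Ga_u$; hence at every stage the statement ``Right has completed a red edge'' holds in $\Ga_u$ exactly when it holds in the imaginary play.

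It remains to conclude. If $\strat$ is non-losing, Right never completes a red edge in the imaginary play, hence never in $\Ga_u$, so the real play cannot be a Right win, that is, the constructed strategy is non-losing on $\Ga_u$. If moreover $\strat$ is winning, the imaginary play ends with Left completing a blue edge at a stage at which Right has completed no red edge; by the two facts above, by that same stage Left holds a completed updated blue edge in $\Ga_u$ while Right holds no completed updated red edge there, so the real play has ended (at that stage or earlier) in a Left win. One also checks that the prescription never asks Left for an illegal move: moves copied from $\strat$ fall on vertices still available in $\Ga_u$ by the invariant, moves copied from Right fall on vertices still available in the imaginary play for the same reason, and whenever Left must (re-)bank the real play is not over, so an available vertex exists. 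There is no genuinely hard step here; the place to be careful is the bookkeeping around the banked vertex --- in particular verifying that the edge-completion correspondences survive its introduction and repeated replacement.
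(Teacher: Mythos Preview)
Your proof is correct and follows essentially the same approach as the paper: both run $\strat$ on an imaginary copy of $\Ga$ while maintaining an ``extra'' (your ``banked'') vertex that gets swapped for a fresh arbitrary vertex whenever $\strat$ would pick it, and both conclude via the observation that Right's picked set is identical in the two plays while Left's real set contains his imaginary set up to the substitution of $u$. Your write-up is simply more explicit about the invariant and the edge-completion correspondences than the paper's terse version.
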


\begin{proof}
    We address the first assertion, as the other is proved in the same way. Suppose that Left has a winning (resp. non-losing) strategy $\strat$ on $\Ga$. Playing on $\Ga_u$, in which we see $u$ as Left's ``extra vertex'', Left applies the strategy $\strat$ with the following exception: anytime $\strat$ instructs Left to pick the extra vertex, Left picks an arbitrary unpicked vertex instead, which becomes the new extra vertex. Right cannot win, because her set of picked vertices is at all times the same as it would be had the game been played on $\Ga$. If $\strat$ is a winning strategy, then Left will win, because his set of picked vertices contains at all times what it would be had the game been played on $\Ga$ (it is the same, plus the extra vertex).
\end{proof}

\begin{corollary}\label{coro:more-moves}
    Let $\Ga$ be an achievement positional game. If Left has a winning (resp. non-losing) strategy as the second player on $\Ga$, then Left has a winning (resp. non-losing) strategy as the first player on $\Ga$.
\end{corollary}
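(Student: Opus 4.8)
The plan is to deduce this from Lemma \ref{lem:more-moves} by letting Left ``waste'' an opening move. Suppose Left has a winning (resp.\ non-losing) strategy as second player on $\Ga=(V,E_L,E_R)$. First I would dispose of the degenerate cases. If $E_L$ contains a singleton $\{u\}$, then Left, moving first, picks $u$ and has immediately filled a blue edge, hence wins — which a fortiori gives a non-losing strategy as well. And if $V=\varnothing$ the statement is vacuous (indeed, Left having a strategy as second player already forces the game not to be over, so $V\neq\varnothing$). So we may assume $V\neq\varnothing$ and that $E_L$ contains no singleton.

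Under these assumptions, pick any vertex $u\in V$; since $\{u\}\notin E_L$, the updated game $\Ga_u$ is well defined. The key observation is that Left's opening move $u$ turns $\Ga$ (with Left to move) into exactly the fresh game $\Ga_u$ (with Right to move), so that from that point on Left is the \emph{second} player on $\Ga_u$. Now I would apply the second bullet of Lemma \ref{lem:more-moves}: since Left has a winning (resp.\ non-losing) strategy as second player on $\Ga$, Left also has one as second player on $\Ga_u$. Therefore Left plays $u$ first and then follows that strategy, which is a winning (resp.\ non-losing) strategy for Left as first player on $\Ga$.

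I do not expect a real obstacle here: all the substance lives in Lemma \ref{lem:more-moves} (the ``extra vertex'' argument), and this corollary only adds the remark that an arbitrary opening move converts the first-player problem on $\Ga$ into the second-player problem on $\Ga_u$. The only points requiring a little care are the trivial edge cases — a singleton blue edge or an empty board — which is why I would treat them separately at the start before invoking the lemma.
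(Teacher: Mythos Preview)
Your proof is correct and follows essentially the same approach as the paper: pick an arbitrary first vertex $u$, note that either Left wins immediately (if $\{u\}\in E_L$) or else $\Ga_u$ is well defined and Lemma~\ref{lem:more-moves} yields a winning (resp.\ non-losing) second-player strategy on $\Ga_u$. The paper's version is slightly terser (it folds the singleton case into ``we may assume that Left does not instantly win the game doing so'' and omits the empty-board remark), but the argument is the same.
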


\begin{proof}
    Suppose that Left has a winning (resp. non-losing) strategy as the second player on $\Ga$. Playing first on $\Ga$, Left can pick an arbitrary vertex $u$. We may assume that Left does not instantly win the game doing so. Now, on the updated game $\Ga_u$, Left has a winning (resp. non-losing) strategy as the second player by Lemma \ref{lem:more-moves}.
\end{proof}

Corollary \ref{coro:more-moves} only leaves six possible outcomes, which are named and listed in Table \ref{tab:outcome}. A partial order $\leq_L$ which compares the different outcomes from Left's point of view is represented in Figure \ref{fig:treillisoutcomes}. Note that all six outcomes indeed exist, as illustrated through Figure \ref{fig:outcomes}. 

\begin{table}[h]
  \begin{minipage}[b]{0.6\linewidth}
    \centering
\begin{tabular}{|c|c|c|}
\hline
Outcome & Left starts & Right starts \\
\hline
$\LL$ & Left wins & Left wins \\
\hline
$\LLD$ & Left wins & Draw \\
\hline
$\NN$ & Left wins & Right wins \\
\hline
$\DD$ & Draw & Draw \\
\hline
$\RRD$ & Draw & Right wins \\
\hline
$\RR$ & Right wins & Right wins \\
\hline
& \st{Right wins} & \st{Left wins} \\
\hline
& \st{Draw} & \st{Left wins} \\
\hline
& \st{Right wins} & \st{Draw} \\
\hline
\end{tabular}\vspace{.2cm}
\caption{All possible outcomes for achievement positional games.}\label{tab:outcome}
  \end{minipage}%
  \hfill
  \begin{minipage}[b]{0.4\linewidth}
    \centering
    \includegraphics[scale=.45]{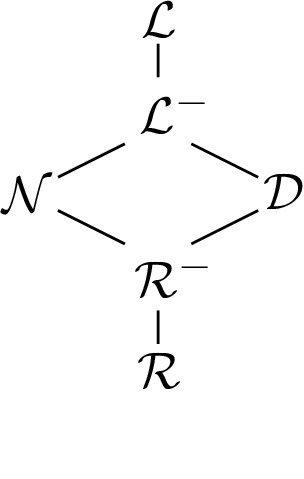}%
    \captionof{figure}{The partial order $\leq_L$ on the existing outcomes.}%
    \label{fig:treillisoutcomes}
  \end{minipage}
\end{table}

\begin{figure}[h]
    \centering
    \includegraphics[scale=.5]{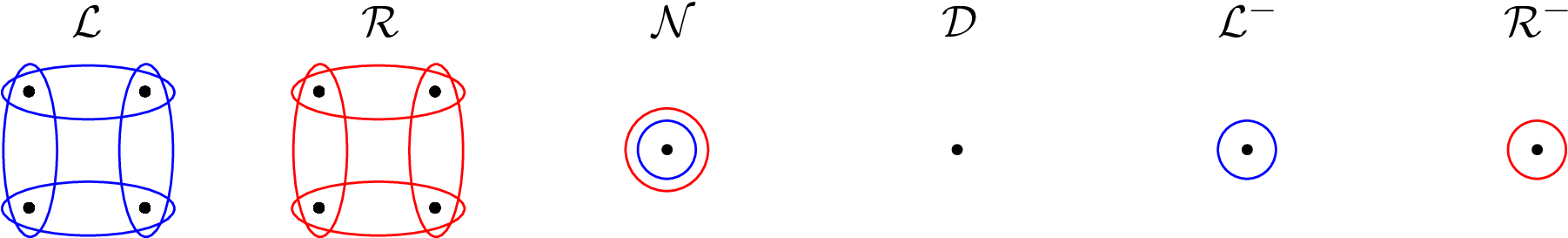}%
    \caption{Example games for each outcome.}\label{fig:outcomes}
\end{figure}

\section{General results}\label{section3}

\subsection{Disjoint union}\strut
\indent If $\Ga=(V,E_L,E_R)$ and $\Ga'=(V',E'_L,E'_R)$ are two achievement positional games such that $V$ and $V'$ are disjoint, we define their \textit{disjoint union} $\Ga \cup \Ga'=(V \cup V',E_L \cup E'_L,E_R \cup E'_R)$. As with all conventions of positional games, a natural question is that of the outcome of a disjoint union given the outcomes of both components. This is one of the reasons why, similarly to what is done in combinatorial game theory \cite{winningways}, we have opted for a definition of the outcome which does not assume that, say, Left always starts. Indeed, when considering a disjoint union, both cases are relevant since even the first player will basically act as the second player on whichever component they choose not to start in.

In the Maker-Maker convention, it is known that the disjoint union of two hypergraphs that are first player wins is also a first player win. This fact is not straightforward, as the second player might not compliantly play each move in the same component the first player has just played in. However, there exists a simple but little-known proof based on comparing the \textit{delay} of both components, which is an adequate measure of how fast the first player can win \cite{Lea08}. We now generalize this argument to achievement positional games.

Let $\Ga=(V,E_L,E_R)$ be an achievement positional game such that Left has a winning strategy on $\Ga$ as the first player. We define the \textit{Left-delay} of $\Ga$ through the following scoring game denoted by $sc_L(\Ga)$. Consider the game $\Ga$ except that Right plays second and has the option, on every turn, to either play normally or pass her move. Left plays first and acts in the usual way. As normal, the game ends when Left fills a blue edge, or Right fills a red edge, or all vertices have been picked. The score of the game is $+\infty$ if Left has not filled a blue edge, or is equal to the total number of pass moves made by Right otherwise. The Left-delay of $\Ga$ is defined as the score of the game when both players play optimally, meaning that Left aims at minimizing the score while Right aims at maximizing it. Note that the Left-delay of $\Ga$ is always finite since we assume that Left has a winning strategy on $\Ga$ as the first player. If Right has a winning strategy on $\Ga$ as the first player, then we define the scoring game $sc_R(\Ga)$ and the \textit{Right-delay} of $\Ga$ in an analogous manner.

\begin{proposition}\label{prop:delay}
    Let $\Ga=(V,E_L,E_R)$ and $\Ga'=(V',E'_L,E'_R)$ be two achievement positional games with disjoint vertex sets. We suppose that Left has a winning strategy on $\Ga$ as the first player, and that Right has a winning strategy on $\Ga'$ as the first player. Then either Left or Right (or both) has a winning strategy as the first player on $\Ga \cup \Ga'$. More precisely, denoting by $d$ the Left-delay of $\Ga$ and by $d'$ the Right-delay of $\Ga'$:
    \begin{itemize}[nolistsep,noitemsep]
        \item If $d \leq d'$, then Left has a winning strategy on $\Ga \cup \Ga'$ as the first player.
        \item If $d' \leq d$, then Right has a winning strategy on $\Ga \cup \Ga'$ as the first player.
    \end{itemize}
\end{proposition}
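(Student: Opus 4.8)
The plan is to generalise the ``delay'' argument used for disjoint unions of first‑player wins in the Maker‑Maker convention \cite{Lea08}. By the symmetry between Left and Right it suffices to prove the first bullet: if $d \leq d'$, then Left wins $\Ga \cup \Ga'$ as first player (the second bullet, and the overlap when $d=d'$, follow by exchanging the roles of Left and Right, of $\Ga$ and $\Ga'$, and of $d$ and $d'$). Fix a strategy $\sigma$ for Left in $sc_L(\Ga)$ realising the Left‑delay, so that when Left follows $\sigma$ she fills a blue edge of $\Ga$ while Right has passed at most $d$ times, whatever Right does; and fix a strategy $\tau$ for Left (the player who may pass) in $sc_R(\Ga')$ realising the Right‑delay, so that when Left follows $\tau$ she manages to pass at least $d'$ times before Right fills a red edge of $\Ga'$. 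Playing $\Ga \cup \Ga'$ as first player, Left will maintain two ``virtual'' plays — one of $sc_L(\Ga)$ in which she uses $\sigma$, one of $sc_R(\Ga')$ in which she uses $\tau$ — linked to the real game by the dictionary: a move of Right inside $\Ga'$ is read as a \emph{pass of Right} in the virtual $sc_L(\Ga)$, while a move of Left inside $\Ga$ is read as a \emph{pass of Left} in the virtual $sc_R(\Ga')$; moves of Right inside $\Ga$ and of Left inside $\Ga'$ are ordinary moves of the corresponding virtual game. Concretely, Left opens with the first move prescribed by $\sigma$ in $\Ga$; thereafter, whenever Right has just played inside $\Ga$ she replies with the next move of $\sigma$ in $\Ga$, and whenever Right has just played inside $\Ga'$ she consults $\tau$, playing the prescribed reply inside $\Ga'$ or, if $\tau$ prescribes a pass, playing the next move of $\sigma$ inside $\Ga$.

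\textbf{Consistency of the recipe.} The first thing to check is that this is well defined: at each of her turns Left must have a legal move keeping \emph{both} virtual plays legal — in particular she is never asked to pass in a virtual game where passing is forbidden, nor to move twice in a row there. This is a turn‑parity verification, carried out by induction on the length of the play: writing $a,b$ for the numbers of vertices Left has taken inside $\Ga,\Ga'$ and $c,e$ for those taken by Right, one tracks, just before each of Left's (resp. Right's) turns, the side whose turn it is in each virtual play, checking that Right‑in‑$\Ga'$ moves that Left answers inside $\Ga'$ ``cancel'' from the virtual $sc_L(\Ga)$ (they are neither moves nor passes there), while those Left answers inside $\Ga$ are exactly Right's passes in $sc_L(\Ga)$ \emph{and} Left's passes in $sc_R(\Ga')$. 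Granting this bookkeeping — which I expect to be the main technical obstacle — the two virtual plays are genuine plays of $sc_L(\Ga)$ and $sc_R(\Ga')$ compatible with $\sigma$ and $\tau$ respectively.

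\textbf{Winning the race.} It then remains to compare how fast the two goals are reached. Call an \emph{exchange} a move of Right inside $\Ga'$ followed by a move of Left inside $\Ga$; these are precisely Right's passes in the virtual $sc_L(\Ga)$ and Left's passes in the virtual $sc_R(\Ga')$. On one hand, since Left follows $\sigma$ in the virtual $sc_L(\Ga)$, she fills a blue edge before the $(d{+}1)$‑st exchange occurs. On the other hand, since Left follows $\tau$ in the virtual $sc_R(\Ga')$, Right cannot fill a red edge until at least $d'$ exchanges have been completed (and note that while no new exchange is being produced — i.e. between two consecutive exchanges — Right is either moving inside $\Ga$, making no progress toward a red edge, or moving inside $\Ga'$ with Left answering inside $\Ga'$ along a play of $sc_R(\Ga')$ in which Left still has fewer than $d'$ passes, so Right is still blocked). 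If $d < d'$ this is immediate: by the time the $d'$‑th exchange is reached the $(d{+}1)$‑st has already happened, so Left has won first. If $d = d'$, one closes the small remaining gap using that Left moves first within each round, so that her completing move cannot come strictly after Right's when both are forced into the same narrow window of moves. In all cases Left fills a blue edge no later than Right fills a red edge, hence Left wins $\Ga \cup \Ga'$ as first player.

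\textbf{Where the difficulty lies.} The conceptual content — that $\sigma$ bounds by $d$ the ``tempo'' Right can divert into $\Ga'$ and that $\tau$ forces Right to spend at least $d'$ units of tempo to win $\Ga'$ — is exactly the Maker‑Maker delay heuristic. The work is in two places: (i) turning the single linear order of moves of $\Ga \cup \Ga'$ into two coherent virtual scoring games (the consistency check above), and (ii) the tie‑breaking in the boundary case $d = d'$; everything else is a direct consequence of the defining properties of the two delays, and the part of this verification that is purely routine is the one the authors defer to Appendix~\ref{appendix}.
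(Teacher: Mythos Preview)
Your approach is exactly the paper's: combine an optimal strategy $\sigma$ for Left in $sc_L(\Ga)$ with an optimal strategy $\tau$ for Left (as the passing second player) in $sc_R(\Ga')$, dispatch Left's replies according to where Right just moved and what $\tau$ prescribes, and identify Left-passes in $sc_R(\Ga')$ with Right-passes in $sc_L(\Ga)$. Two corrections are in order. First, the paper's proof of this proposition is entirely self-contained in the main text; Appendix~\ref{appendix} supplies only the example games completing Table~\ref{tab:union} and has nothing to do with Proposition~\ref{prop:delay}, so your closing remark misattributes it. Second, your tie-breaking at $d=d'$ (``Left moves first within each round'') is too vague; the paper's precise one-liner is that Left's $\sigma$-reply to Right's $d$-th pass in the virtual $sc_L(\Ga)$ must already fill a blue edge, since otherwise Right could immediately pass a $(d{+}1)$-st time and force a score exceeding $d$ --- and this winning reply is the second half of the very exchange that records Left's $d$-th pass in $sc_R(\Ga')$, hence lands strictly before any Right move that could follow $d'=d$ Left-passes.
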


\begin{proof}
    By symmetry, proving the first assertion is enough. Suppose that $d \leq d'$. By definition of $d$, Left has a strategy $\strat$ for the scoring game $sc_L(\Ga)$ which ensures that he will fill some edge in $E_L$ before Right can make more than $d$ pass moves. By definition of $d'$, Left has a strategy $\strat'$ for the scoring game $sc_R(\Ga')$ which ensures that Right cannot fill an edge in $E'_R$ without Left having made at least $d'$ pass moves. Recall that $\strat$ is a first player strategy (no pass moves) whereas $\strat'$ is a second player strategy (with pass moves).

    We now define Left's winning strategy $\widetilde{\strat}$ as the first player on $\Ga \cup \Ga'$. He starts by playing in the component $\Ga$, according to $\strat$. After this, Left always obeys $\strat$ or $\strat'$, depending on which component Right has just played in. We do need to address two situations that may occur when Right plays in $\Ga'$:
    \begin{itemize}[nolistsep,noitemsep]
        \item[--] If the vertex picked by Right was the last vertex of $\Ga'$, then Left imagines that Right has also played a pass move in $\Ga$ and he answers in $\Ga$ according to $\strat$.
        \item[--] If $\strat'$ instructs Left to play a pass move in $\Ga'$, then Left indeed imagines so, but he also imagines that Right has played a pass move in $\Ga$ and he then plays his real move in $\Ga$ according to $\strat$.
    \end{itemize}
    Suppose for a contradiction that Left applies the strategy $\widetilde{\strat}$ on $\Ga \cup \Ga'$ and does not win. This necessarily means that Right has filled a red edge in $\Ga'$ before Left could fill a blue edge in $\Ga$. By definition of $\strat'$, Left must have made (rather, imagined) at least $d'$ pass moves in $\Ga'$. However, $\widetilde{\strat}$ is designed so that each pass move of Left in $\Ga'$ triggers a pass move of Right in $\Ga$ and an immediate answer of Left in $\Ga$ according to $\strat$. Therefore, Right has also made at least $d' \geq d$ pass moves in $\Ga$. By definition of $\strat$, we must have $d'=d$ and Right has made exactly $d$ pass moves in $\Ga$. However, even that is impossible, because Left's answer to Right's $d$-th pass move should have won the game on the spot (otherwise Right could have made a $(d+1)$-th pass move just after). This contradiction concludes the proof.
\end{proof}

Proposition \ref{prop:delay} helps us establishing the following result on the outcome of a disjoint union.

\begin{theorem}
    The possible outcomes for a disjoint union of two achievement positional games, depending on the outcomes of each component, are given by Table \ref{tab:union}.
\end{theorem}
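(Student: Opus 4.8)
The plan is to determine, for each cell of Table~\ref{tab:union}, the set of outcomes of $\Ga \cup \Ga'$ by computing separately the two coordinates of such an outcome, namely the result with optimal play when Left starts and when Right starts the union. Two reductions cut the work roughly in half: disjoint union is commutative, and there is a Left--Right symmetry — exchanging the two players together with the two edge colours turns $\LL \leftrightarrow \RR$ and $\LLD \leftrightarrow \RRD$, fixes $\NN$ and $\DD$, and commutes with disjoint union — so it suffices to verify the cells up to these symmetries. The main tools are Lemma~\ref{lem:more-moves} and Corollary~\ref{coro:more-moves} (extra vertices and the opening move never hurt) together with Proposition~\ref{prop:delay} and the surrounding delay / scoring-game machinery.

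\textbf{Composition lemmas.} First I would record a few ways to assemble a strategy on $\Ga\cup\Ga'$ out of strategies on the components. (i) If Left has a non-losing strategy as first player on some component and a non-losing strategy as second player on every other component, then Left has a non-losing strategy as first player on $\Ga\cup\Ga'$: he opens in that first component and thereafter always answers in whichever component Right has just played in, so that each component is, from its own standpoint, played in alternation with Left in the intended role; moves forced on Left by a component running out are ``extra vertices'' and are harmless by Lemma~\ref{lem:more-moves}. (ii) If moreover one of those components is in fact a \emph{win} for Left in that role, with the corresponding delay (the Left-delay, or its second-player analogue) finite, then Left \emph{wins} $\Ga\cup\Ga'$ as first player: he runs the optimal strategy for the relevant scoring game in that component, reading every move Right makes elsewhere as a pass; this caps the number of such passes before Left completes a blue edge while preventing Right from completing a red one anywhere. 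The second-player versions of (i) and (ii) and all their Right-side duals hold similarly. Finally, Proposition~\ref{prop:delay} handles the ``double race'': if $\Ga$ is a Left-first-player win with Left-delay $d$ and $\Ga'$ is a Right-first-player win with Right-delay $d'$, then Left wins the union when Left starts if $d \leq d'$, and Right wins it when Right starts if $d' \leq d$.

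\textbf{Filling in the table.} With this toolkit most cells are single-valued and settled directly: one checks which of ``win'' or merely ``draw'' each player can force as first and as second player using (i), (ii) and their duals, the matching upper bounds on the two coordinates coming from the opponent playing optimally in whichever component resists them. For example, if both component outcomes lie in $\{\LL,\LLD,\DD\}$ then Left is non-losing on the union whoever starts, so Right never wins and one just reads off when Left actually wins; symmetrically with the roles swapped; and $(\DD,\DD)$ gives $\DD$. The cells that remain are those hosting a genuine race that neither player can short-circuit — essentially the ones pairing an outcome on which Left is a threat with one on which Right is a threat, including the cells involving $\NN$ — and there the result depends on the specific delays, so the cell is a set of outcomes. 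For the forward inclusion one bounds, for each choice of first player, the possible results as a function of the relevant delays via Proposition~\ref{prop:delay} and its second-player refinements; for the reverse inclusion one exhibits, for each listed outcome, an explicit witness pair with the prescribed component outcomes — blue and red butterflies and their enlargements have easily computed and freely adjustable delays and provide all the witnesses required (e.g.\ a blue butterfly alongside a red butterfly is an $\NN$-game, whereas unbalancing the two delays yields $\LL$ or $\RR$).

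\textbf{Main obstacle.} The crux is this last, race-sensitive family of cells. Proposition~\ref{prop:delay} only compares the two delays in the two ``aligned'' configurations, where the player who naturally wins a component is the one to move first in the union; for a race cell one must also settle the ``misaligned'' configuration, say Left starting the union while the component on which Right is the threat carries the smaller delay. Pinning this down calls for the auxiliary notion of a second-player delay together with a careful accounting in which the opening move is worth exactly one unit of delay, so that the first player's tempo is correctly priced in. Getting these refined inequalities right for every race cell, and then matching them against witness games that realize each listed entry, is the laborious part; it is this verification whose routine portion is deferred to Appendix~\ref{appendix}.
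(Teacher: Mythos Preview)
Your composition lemma (ii) has a genuine gap. The strategy you describe --- Left runs the scoring-game optimal strategy in the winning component, reading every move Right makes elsewhere as a pass --- has Left playing exclusively in that component and never defending the others; Right may then fill a red edge elsewhere before Left completes a blue one. Concretely, let $\Ga$ be the blue butterfly (so $o(\Ga)=\LLD$, Left-delay $2$) and let $\Ga'=(\{x,y\},\varnothing,\{\{x,y\}\})$ be a single red edge of size two (so $o(\Ga')=\DD$, and in particular Left is non-losing in $\Ga'$ as second player). Your hypotheses for (ii) are met, yet under your strategy Left opens at the butterfly centre, Right picks $x$, Left plays again in the butterfly, Right picks $y$ and wins on move four --- before Left's would-be winning move five. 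The correct argument is simply the follow strategy already behind (i): Left opens appropriately and thereafter answers in whichever component Right has just played in. This keeps Right from winning anywhere, and once the non-winning components are exhausted play settles in the winning one and Left wins there; no delay bound is needed for this step.

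This also dissolves your ``Main obstacle''. In the paper, Proposition~\ref{prop:delay} is invoked only to exclude the outcome $\DD$ in the race cells; every other exclusion is handled by the follow strategy. There is no need for a second-player delay, nor for pricing the opening move in ``misaligned'' configurations. The forward-inclusion half is therefore much lighter than you anticipate: a short list of follow-strategy claims plus three one-line appeals to Proposition~\ref{prop:delay} suffices, and the laborious refinement you describe does not arise.
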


\begin{center}
\begin{table}[h] \centering

\begin{tabular}{|c|M{2cm}|M{2cm}|M{2cm}|M{2cm}|M{2cm}|M{2cm}|N}
\hline
\diagbox{$\Ga'$}{$\Ga$} & $\LL$ & $\LLD$ & $\NN$ & $\DD$ & $\RRD$& $\RR$ \\
\hline
$\LL$ & $\LL$ & $\LL$ & $\LL$, $\LLD$, $\NN$ & $\LL$ & $\LL$, $\LLD$, $\NN$ & $\LL$, $\LLD$, $\NN$, $\RRD$, $\RR$ \\[8pt]
\hline
$\LLD$ & \cellcolor{mygray} $\LL$ & $\LL$, $\LLD$ & $\LL$, $\LLD$, $\NN$ & $\LLD$ & $\LLD$, $\NN$, $\RRD$ & $\NN$, $\RRD$, $\RR$ \\[8pt]
\hline
$\NN$ & \cellcolor{mygray}  $\LL$, $\LLD$, $\NN$ & \cellcolor{mygray} $\LL$, $\LLD$, $\NN$ & $\LL$, $\LLD$, $\NN$, $\RRD$, $\RR$ & $\NN$ & $\NN$, $\RRD$, $\RR$ & $\NN$, $\RRD$, $\RR$ \\[8pt]
\hline
$\DD$ & \cellcolor{mygray} $\LL$ & \cellcolor{mygray} $\LLD$ & \cellcolor{mygray} $\NN$ & $\DD$ & $\RRD$ & $\RR$ \\[8pt]
\hline
$\RRD$ & \cellcolor{mygray}  $\LL$, $\LLD$, $\NN$ & \cellcolor{mygray} $\LLD$, $\NN$, $\RRD$ & \cellcolor{mygray} $\NN$, $\RRD$, $\RR$  & \cellcolor{mygray} $\RRD$ & $\RRD$, $\RR$ & $\RR$ \\[8pt]
\hline
$\RR$ & \cellcolor{mygray} $\LL$, $\LLD$, $\NN$, $\RRD$, $\RR$ & \cellcolor{mygray} $\NN$, $\RRD$, $\RR$ & \cellcolor{mygray} $\NN$, $\RRD$, $\RR$ & \cellcolor{mygray} $\RR$ & \cellcolor{mygray} $\RR$ & $\RR$\\[8pt]
\hline
\end{tabular}
\caption{All possible outcomes for the disjoint union $\Ga \cup \Ga'$ depending on the outcomes of $\Ga$ and $\Ga'$. Redundant cells are shaded.}\label{tab:union}
\end{table}
\end{center}

\begin{proof}
    We start by showing that, for each cell of Table \ref{tab:union}, all outcomes that are not listed are indeed impossible. Recall that, by Lemma \ref{lem:more-moves}, playing twice in a row in the same component cannot harm any player's chances in that component. Therefore, it will often be implicitly assumed that, if a player was supposed to answer in the same component as their opponent but cannot do so because there is no vertex left in that component, then that player picks an arbitrary vertex of the other component instead. Using symmetries, proving the following claims is sufficient.
    \begin{enumerate}[label={\arabic*)}]
        \item If $o(\Ga)=\DD$, then $o(\Ga \cup \Ga') = o(\Ga')$. \newline
        This is due to the fact that any winning (resp. non-losing) strategy on $\Ga'$ can be turned into a winning (resp. non-losing) strategy on $\Ga \cup \Ga'$. Indeed, it suffices to apply that strategy in the component $\Ga'$ of $\Ga \cup \Ga'$, while applying a non-losing strategy in the component $\Ga$.
        \item If $o(\Ga)=\LL$ and $o(\Ga') \in \{\LL,\LLD\}$, then $o(\Ga \cup \Ga') = \LL$. \newline
        To prove this, we must show that Left wins on $\Ga \cup \Ga'$ as the second player. Left can simply play each move in the same component Right has just played in, according to his optimal strategies in each component. Doing so, Left will not lose in $\Ga'$, and will win in $\Ga$.
        \item If $o(\Ga)=\LL$ and $o(\Ga')=\RRD$, then $o(\Ga \cup \Ga') \in \{\LL,\LLD,\NN\}$. \newline
        To prove this, we must show that Left wins on $\Ga \cup \Ga'$ as the first player. Left can simply start in $\Ga'$ and then play each move in the same component Right has just played in, according to his optimal strategies in each component. Doing so, Left will not lose in $\Ga'$, and will win in $\Ga$.
        \item If $o(\Ga) \in \{\LL,\LLD\}$ and $o(\Ga') =\NN$, then $o(\Ga \cup \Ga') \in \{\LL,\LLD,\NN\}$. \newline
        To prove this, we must show that Left wins on $\Ga \cup \Ga'$ as the first player. Left can simply start in $\Ga'$ and then play each move in the same component Right has just played in, according to his optimal strategies in each component. Doing so, Left will not lose in $\Ga$, and will win in $\Ga'$.
        \item If $o(\Ga)=o(\Ga')=\LLD$, then $o(\Ga \cup \Ga') \in \{\LL,\LLD\}$. \newline
        The proof that Left wins on $\Ga \cup \Ga'$ as the first player is the same as the previous case. Moreover, as the second player, Left can at least draw the game by playing each move in the same component Right has just played in, according to his non-losing strategies in each component.
        \item If $o(\Ga)=\LL$ and $o(\Ga')=\RR$, then $o(\Ga \cup \Ga') \neq \DD$. \newline
        This is a direct consequence of Proposition \ref{prop:delay}.
        \item If $o(\Ga)=o(\Ga')=\NN$, then $o(\Ga \cup \Ga') \neq \DD$. \newline
        Again, this is a direct consequence of Proposition \ref{prop:delay}.
        \item If $o(\Ga)=\LLD$ and $o(\Ga')=\RRD$, then $o(\Ga \cup \Ga') \in \{\LLD,\NN,\RRD\}$. \newline
        The fact that $o(\Ga \cup \Ga') \neq \DD$ is a direct consequence of Proposition \ref{prop:delay}. Moreover, as the first player, Left can at least draw the game by starting in $\Ga'$ and then playing each move in the same component Right has just played in, according to his non-losing strategies in each component. This proves that $o(\Ga \cup \Ga') \neq \RR$. Similarly, $o(\Ga \cup \Ga') \neq \LL$.
    \end{enumerate}
    Finally, examples for all the remaining possible outcomes are given in Appendix \ref{appendix}.
\end{proof}

\subsection{Elementary strategic principles}\strut
\indent The goal of this subsection is to list simple general properties of achievement positional games. Even though some nice properties of the Maker-Breaker convention disappear when considering the Maker-Maker convention instead, we are about to see that not much more seems to be lost when generalizing to achievement positional games. Each result is only stated from one player's perspective, but the mirror result when swapping Left with Right and $E_L$ with $E_R$ obviously holds as well.

A well-known \textit{strategy-stealing} argument \cite{Hales1963} ensures that the second player can never have a winning strategy for the Maker-Maker convention. As we know, when it comes to achievement positional games in general, unbalance between $E_L$ and $E_R$ can lead to a one-sided outcome $\LL$ or $\RR$. However, if $E_L$ is ``at least as good'' as $E_R$ in some adequate sense for instance, then Left should have a non-losing strategy as the first player. We show that this is indeed the case. Before that, let us remark that $(V,E_L)$ and $(V,E_R)$ being isomorphic is not a sufficient condition, as Figure \ref{fig:isomorphic} features a counterexample with outcome $\RR$. Indeed, the red 2-edges allow Right to destroy the blue butterflies, and one of her own red butterflies will be left intact for her to win the game in the end.

\begin{figure}[h]
    \centering
    \includegraphics[scale=.5]{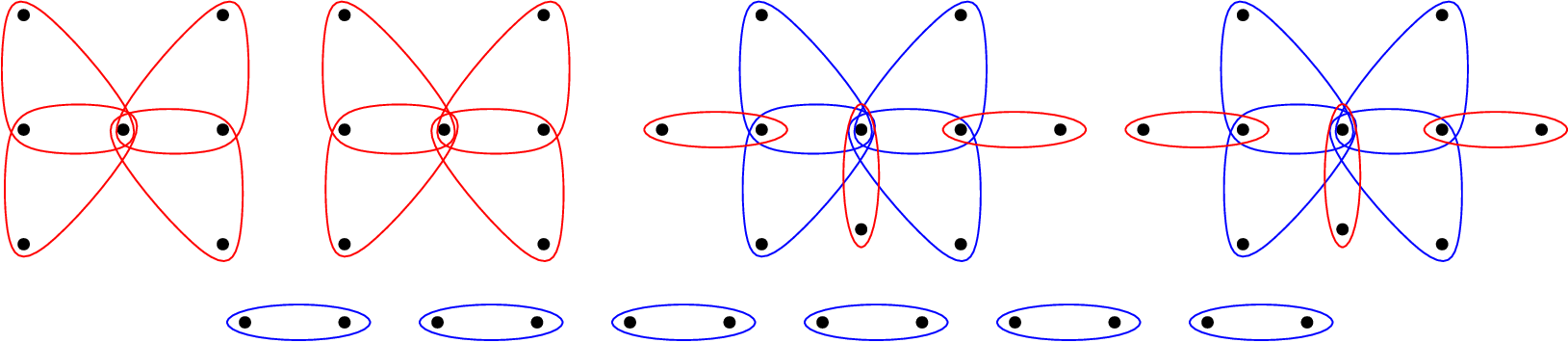}%
    \caption{Despite the blue and red hypergraphs being isomorphic, Right has a winning strategy as the second player.}\label{fig:isomorphic}
\end{figure}

\begin{lemma}[Strategy Stealing]\label{lemma:stealing}
    Let $\Ga = (V, E_L,E_R)$ be an achievement positional game. If there exists a bijection $\sigma : V \to V$ such that $\sigma(e) \in E_L$ and $\sigma^{-1}(e) \in E_L$ for all $e \in E_R$, then Left has a non-losing strategy on $\Ga$ as the first player.
\end{lemma}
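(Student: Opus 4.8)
The plan is to adapt the classical Hales–Jewett strategy-stealing argument to the edge-partizan setting. Left maintains an "imaginary" auxiliary game played alongside the real game on $\Ga$, and uses the hypothesis to transfer a fictitious Right-strategy back to himself. Concretely, I would set up the following bookkeeping: Left will pretend to be the second player in an imaginary copy of $\Ga$ in which the roles of the two hypergraphs are viewed through $\sigma$; a "good" Right response in that imaginary game, pulled back via $\sigma^{\pm 1}$, becomes a legal Left move (or a candidate for one) in the real game.

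First I would make the reduction to a purely defensive claim: by the definition of outcome it suffices to show that Left, moving first, can prevent Right from ever filling a red edge. So assume for contradiction that Right has a strategy $\RR^{*}$ that beats every Left strategy as first player; I will derive a contradiction by constructing a Left strategy that survives against $\RR^{*}$. Second, I would describe the stealing mechanism. Left's first move can be arbitrary, say he picks some $v_0$; simultaneously he opens an imaginary board with that same move recorded as a move of the imaginary "opponent". Whenever, in the real game, Right picks a vertex $x$, Left updates the imaginary board and consults $\RR^{*}$ for the imaginary "second player's" reply $y$; his real move is then dictated by applying $\sigma$ (or $\sigma^{-1}$, depending on which parity/copy we are tracking) to $y$, unless that vertex is already taken, in which case Left again plays an arbitrarily chosen free vertex and carries it as a new "extra" move — here I would invoke Lemma~\ref{lem:more-moves} to argue that holding an extra picked vertex never hurts Left. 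The key invariant to maintain is: at every stage, the set of vertices Left has really picked contains $\sigma(\text{set of vertices the imaginary second player has picked})$ (equivalently, contains the $\sigma$-image of what a successful Right would have accumulated), while Right's real picks are mirrored faithfully into the imaginary first player's picks.

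Third, I would close the argument. Since $\sigma(e)\in E_L$ and $\sigma^{-1}(e)\in E_L$ for every $e\in E_R$, if Right ever filled a red edge $e$ in the real game, then the corresponding vertices would, by the invariant, all lie in Left's real pick set after applying $\sigma$ (using $\sigma(e)\in E_L$), meaning Left would already have filled the blue edge $\sigma(e)$ at least as early — so Right never gets to complete $e$ first, contradicting that $\RR^{*}$ was winning. The main obstacle I anticipate is precisely the timing/parity bookkeeping: one must be careful that the "$\sigma$ of the imaginary defender's picks $\subseteq$ Left's real picks" invariant is preserved through the exceptional moves (when the dictated vertex is occupied, or when the imaginary game instructs a move into an exhausted region), and that the blue edge $\sigma(e)$ is genuinely completed no later than the red edge $e$ rather than one move too late. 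Handling these off-by-one issues cleanly — most naturally by the "extra vertex" trick of Lemma~\ref{lem:more-moves}, treating any forced arbitrary move as adding to Left's surplus — is where the real care is needed; the rest is a direct translation of the classical argument, with the hypothesis on $\sigma$ doing exactly the work that the identity map does in the symmetric Maker-Maker case.
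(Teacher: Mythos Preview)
Your approach is the same as the paper's --- assume Right has a winning second-player strategy, run a fictitious copy of $\Ga$ in parallel, and transfer moves through $\sigma$ --- but the closure step does not follow from the invariant you state, and you invoke only one of the two hypotheses on $\sigma$.

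Your invariant is that Left's real picks contain $\sigma(\text{imaginary second player's picks})$, while Right's real picks are mirrored into the imaginary \emph{first} player's picks. From this, if Right fills a red edge $e$ in the real game, you cannot conclude that $\sigma(e)$ lies in Left's real picks: the vertices of $e$ sit in Right's real set, hence in the imaginary first player's set, and your invariant says nothing about $\sigma$ applied to that. The paper avoids this by sending Right's real moves $b_i$ to fictitious Left's moves via $\sigma^{-1}$; then real Right filling $e\in E_R$ means fictitious Left has filled $\sigma^{-1}(e)$, which lies in $E_L$ by the \emph{second} hypothesis $\sigma^{-1}(e)\in E_L$ --- a hypothesis you never use --- contradicting that the stolen strategy was winning for fictitious Right.

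The hypothesis $\sigma(e)\in E_L$, the only one you cite, is actually needed for a different reason you do not address: well-definedness of the stolen strategy. Since $\RR^{*}$ is winning, the fictitious game eventually ends with fictitious Right filling some $e\in E_R$; at that moment Left can no longer consult $\RR^{*}$. The point is that Left's real picks then contain $\sigma(e)\in E_L$, so the real game has already ended too. In short, both halves of the hypothesis are essential, each for a distinct step, and your sketch applies only one of them, in the wrong place; the ``off-by-one'' issues you flag are not where the real gap is.
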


\begin{proof}

Suppose for a contradiction that Right has a winning strategy $\strat$ on $\Ga$ as the second player. We interpret $\sigma$ as a mapping from a fictitious game, where Left starts and loses against the strategy $\strat$, to the real game, where Left starts and wins using a strategy $\strat'$ which we now describe. Left starts by picking an arbitrary vertex $a_1$ in the real game, and then acts in the following way for all $i \geq 1$ until the game ends:
\begin{itemize}[nolistsep,noitemsep]
    \item Let $b_i$ be the vertex picked by Right in the real game as an answer to $a_i$.
    \item In the fictitious game, Left picks $\alpha_i=\sigma^{-1}(b_i)$, to which Right has a winning answer $\beta_i$ according to $\strat$.
    \item Left ``steals'' that move in the real game, by picking $a_{i+1}=\sigma(\beta_i)$ (or an arbitrary vertex if he had already picked $\sigma(\beta_i)$ in a previous round).
\end{itemize}

We note that the strategy $\strat'$ is well defined: indeed, as soon as the fictitious game ends \textit{i.e.} Right fills some $e \in E_R$, the real game also ends because Left has filled $\sigma(e) \in E_L$ in the real game. Moreover, it is impossible that Right fills some $e \in E_R$ in the real game, because Left would then have filled $\sigma^{-1}(e) \in E_L$ in the fictitious game before Right could win.

All in all, $\strat'$ is a winning strategy for Left on $\Ga$ as the first player, contradicting the existence of $\strat$.
\end{proof}

Consider Hex for example: Lemma \ref{lemma:stealing} applied to the diagonal mirror symmetry $\sigma$ ensures that the first player has a non-losing strategy (see Figure \ref{fig:hex}). As it is technically impossible for the game to end in a draw \cite{gardner}, this implies that the first player has a winning strategy. 

\begin{figure}[h]
    \centering
    \includegraphics[scale=.4]{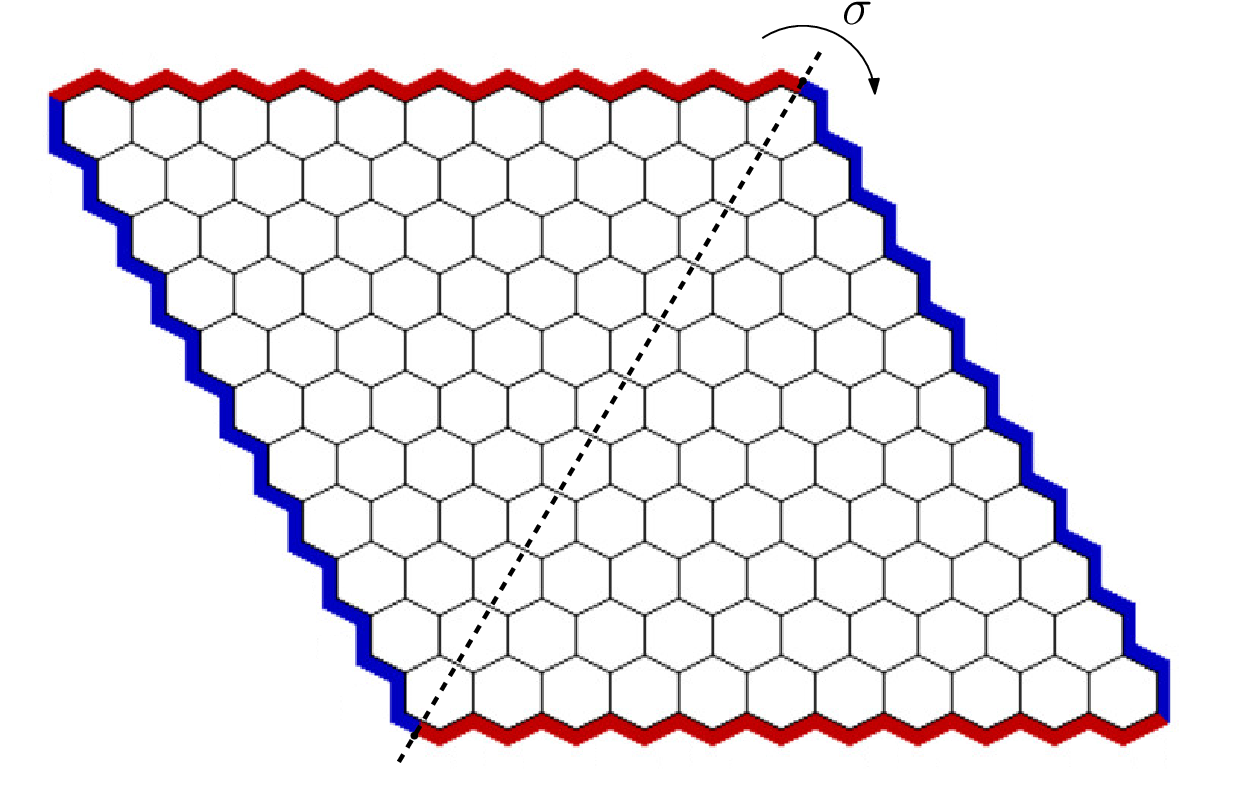}%
    \caption{The symmetry of the Hex board allows for strategy stealing.}\label{fig:hex}
\end{figure}

We now address monotonicity. A key feature of the Maker-Breaker convention is that adding more edges cannot harm Maker. One of the main reasons why the Maker-Maker convention is harder to study is because, in contrast, adding an edge might transform a first player win into a draw: this is known as the \textit{extra set paradox} \cite{Bec08}. However, this phenomenon is due to the fact that the extra edge is there for both players to fill. In our general context of achievement positional games, adding or shrinking blue edges cannot harm Left, and adding or shrinking red edges cannot harm Right.

\begin{lemma}[Edge Monotonicity]\label{lem:monotonicity}
    Let $\Ga=(V,E_L,E_R)$ and $\Ga'=(V,E'_L,E'_R)$ be two achievement positional games on the same vertex set. Suppose that:
    \begin{itemize}[nolistsep,noitemsep]
        \item For all $e \in E_L$, there exists $e' \in E'_L$ such that $e' \subseteq e$ (this holds if $E_L \subseteq E'_L$ for instance).
        \item For all $e' \in E'_R$, there exists $e \in E_R$ such that $e \subseteq e'$ (this holds if $E'_R \subseteq E_R$ for instance).
    \end{itemize}
    Then $o(\Ga) \leq_L o(\Ga')$.
\end{lemma}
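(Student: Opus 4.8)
The plan is to prove Lemma \ref{lem:monotonicity} by exhibiting, for each position in the partial order $\leq_L$, a strategy transfer argument: any strategy (winning or non-losing, first or second player) that Left has on $\Ga$ can be converted into an at-least-as-good strategy on $\Ga'$. By the definition of $\leq_L$ (see Figure \ref{fig:treillisoutcomes}) and the fact that outcomes are determined by the results with Left starting and with Right starting, it suffices to show two things: (i) if Left has a winning (resp.\ non-losing) strategy on $\Ga$ as first player, then Left has a winning (resp.\ non-losing) strategy on $\Ga'$ as first player; and (ii) the same implication with ``first player'' replaced by ``second player''. Combining these with the list of six outcomes in Table \ref{tab:outcome} yields $o(\Ga) \leq_L o(\Ga')$ directly.

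The core of the argument is a single simulation. Suppose Left has a strategy $\strat$ on $\Ga$ achieving a given guarantee (win, or at least a draw) when Left moves in a given position (first or second). Left plays on $\Ga'$ by literally mimicking $\strat$: since $V$ is the same and $\Ga'$ differs from $\Ga$ only in its edge sets, every legal play sequence on $\Ga'$ is also a legal play sequence on $\Ga$ and vice versa, so $\strat$ prescribes a well-defined move for Left in $\Ga'$ at every step. I then need to compare the outcomes of the resulting play under the two edge conventions. Two facts do all the work. First, at any point where Left has filled some $e' \in E'_L$, the first hypothesis gives $e \in E_L$... wait, actually it is the other direction: I should phrase it so that if Left has filled a blue edge of $\Ga$ then he has filled a blue edge of $\Ga'$ no later. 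Concretely: the first bullet says every $e \in E_L$ contains some $e' \in E'_L$, so the moment Left has picked all of $e$ he has already picked all of $e'$ at some earlier or equal time; hence Left fills a $\Ga'$-blue edge at a time $\le$ the time he fills a $\Ga$-blue edge. Second, the second bullet says every $e' \in E'_R$ contains some $e \in E_R$, so if Right ever fills a $\Ga'$-red edge $e'$ she has by then already filled the $\Ga$-red edge $e \subseteq e'$; hence Right fills a $\Ga$-red edge at a time $\le$ the time she fills a $\Ga'$-red edge.

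Putting these two monotonicity-of-fill-times facts together: in the simulated play, ``Left's winning time on $\Ga'$'' $\le$ ``Left's winning time on $\Ga$'' and ``Right's winning time on $\Ga$'' $\le$ ``Right's winning time on $\Ga'$'' (with the convention that a player who never fills an edge has winning time $+\infty$). Since $\strat$ guarantees on $\Ga$ that Left's winning time is finite and strictly less than Right's (if $\strat$ is winning), the same inequalities hold a fortiori on $\Ga'$, so the simulated play is a Left win on $\Ga'$; if $\strat$ is merely non-losing, it guarantees on $\Ga$ that Right's winning time is $+\infty$ or that Left fills first, and again the inequalities show Right cannot win first on $\Ga'$ either, so the play is a draw or a Left win on $\Ga'$. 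This handles both ``winning'' and ``non-losing'', and it is insensitive to who starts, giving (i) and (ii) simultaneously.

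The main subtlety — and the only place one must be slightly careful — is the handling of ties, i.e.\ whether ``before'' in the rules is strict and what happens if both players complete their edges on the same move (which cannot literally happen since only one vertex is picked per turn, but one should still be careful that ``Left fills a blue edge of $\Ga'$ no later than a blue edge of $\Ga$'' composes correctly with ``Right fills a red edge of $\Ga$ no later than a red edge of $\Ga'$''). The clean way to package this is to track, for a fixed play sequence, the first index $t_L$ at which a blue $\Ga$-edge is complete, the analogous $t'_L$ for $\Ga'$, and $t_R, t'_R$ for red edges; the two hypotheses give $t'_L \le t_L$ and $t_R \le t'_R$; and the result of a play is ``Left wins'' iff $t_L < t_R$ for $\Ga$ (resp.\ $t'_L < t'_R$ for $\Ga'$), where I also use that the game stops as soon as the first of the two events occurs, so ``$\strat$ wins on $\Ga$'' means $t_L < t_R$ along every play consistent with $\strat$, whence $t'_L \le t_L < t_R \le t'_R$ on $\Ga'$. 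I expect no real obstacle beyond writing this bookkeeping carefully; there is no need to invoke Proposition \ref{prop:delay} or the disjoint-union machinery here, as the two games share a vertex set and the argument is a direct move-by-move simulation.
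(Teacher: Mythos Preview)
Your proposal is correct and follows essentially the same approach as the paper: transfer Left's strategy from $\Ga$ to $\Ga'$ verbatim, then observe that filling a blue edge in $\Ga$ implies filling one in $\Ga'$ (no later), while filling a red edge in $\Ga'$ implies filling one in $\Ga$ (no later). The paper dispatches this in two sentences, whereas you spell out the timing bookkeeping with the indices $t_L,t'_L,t_R,t'_R$ and the explicit chain $t'_L \le t_L < t_R \le t'_R$; this extra care is harmless and indeed clarifies why the termination of the simulated play causes no difficulty.
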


\begin{proof}
    Any strategy for Left on $\Ga$ may be applied on $\Ga'$ while leading to the same result or better for Left. Indeed, filling a blue edge in $\Ga$ implies the same in $\Ga'$, and filling a red edge in $\Ga'$ implies the same in $\Ga$.
\end{proof}

\textit{Pairing strategies} are an important tool in both Maker-Breaker and Maker-Maker conventions \cite{Hales1963}. A \textit{complete pairing} of a hypergraph $H=(V,E)$ is a set $\Pi$ of pairwise-disjoint pairs of vertices such that, for all $e \in E$, there exists $\pi \in \Pi$ satisfying $\pi \subseteq e$. If $H$ admits a complete pairing, then the outcome of $H$ is a Breaker win for the Maker-Breaker game or a draw for the Maker-Maker game, as picking one vertex from each pair prevents the other player from filling an edge. We observe that, in general achievement positional games, pairing strategies may still be used to block the opponent.

\begin{lemma}[Pairing Strategy]\label{lem:pairing}
    Let $\Ga=(V,E_L,E_R)$ be an achievement positional game. If the hypergraph $(V,E_L)$ admits a complete pairing $\Pi$, then Right has a non-losing strategy on $\Ga$ as the first player and as the second player.
\end{lemma}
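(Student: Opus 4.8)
The plan is to have Right use the complete pairing $\Pi$ of $(V,E_L)$ purely as a blocking device, in the standard way pairing strategies work for Maker–Breaker games, and to check that this suffices in our more general setting because Right only needs a \emph{non-losing} strategy, not a winning one. The key point is that a non-losing strategy for Right just has to guarantee that Left never fills a blue edge; it does not need to prevent Right herself from being beaten by anything except a blue-edge completion, because ``Left never fills a blue edge'' already rules out a Left win, leaving only a Right win or a draw, both of which are acceptable outcomes for Right.

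First I would describe the strategy. Right maintains the invariant that, after each of her moves, for every pair $\{x,y\}\in\Pi$ at most one of $x,y$ is unpicked only if the other has been picked by Right — more precisely, she ensures that she has picked at least one vertex of every pair $\{x,y\}\in\Pi$ at least one of whose vertices has been picked at all. Concretely: whenever it is Right's turn, if Left has just picked a vertex $u$ lying in some pair $\{u,v\}\in\Pi$ whose partner $v$ is still unpicked, Right picks $v$. Otherwise (Left picked a vertex in no pair of $\Pi$, or in a pair both of whose vertices were already picked), Right picks an arbitrary unpicked vertex — and if that arbitrary vertex happens to lie in a pair of $\Pi$, she should pick it so as not to break the invariant, e.g.\ pick a vertex whose $\Pi$-partner is already picked if possible, or simply pick a vertex outside all pairs, or if forced pick one vertex of a fresh pair, committing to answer its partner later. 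One has to phrase this bookkeeping carefully so that it is always Right, and not Left, who ``completes'' each pair; the classical argument (as in \cite{Hales1963}) is that since the pairs are pairwise disjoint, Right can always respond within the pair Left just touched, and any other move of Left frees up a move for Right to spend harmlessly.

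Next I would verify that this strategy is non-losing, i.e.\ Left never fills a blue edge. Suppose for contradiction that at some point Left has picked every vertex of some $e\in E_L$. Since $\Pi$ is a complete pairing, $e$ contains some pair $\{x,y\}\in\Pi$. Then Left has picked both $x$ and $y$. Consider the earlier of the two moves in which Left picked $x$ or $y$, say Left picked $x$ on that move; at that moment $y$ was still unpicked (since Left picks $y$ only later, and by the invariant Right would have taken $y$ had it been available to her as a response — here is where the careful bookkeeping is used), so on Right's immediately following move she picks $y$, contradicting the assumption that Left later picks $y$. Hence no blue edge is ever filled by Left, so the game is either a draw or a Right win; in both cases Right has achieved a non-losing outcome. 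Finally, the same argument works whether Right moves first or second: if Right moves first she simply picks an arbitrary vertex (respecting the invariant) and thereafter responds as above, so the statement holds for Right as first or second player.

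The main obstacle I expect is purely the bookkeeping in the ``otherwise'' case: one must make sure that when Left does not touch a fresh pair, Right's arbitrary move does not accidentally pick one vertex of a pair and leave the partner for Left to grab later. The clean fix is to let Right treat such a move as ``opening'' that pair and obligate herself to close it on her next move (which is possible because a single pair has only two vertices and Right moves right after Left); alternatively, one restricts attention to pairs actually met by the play and observes that there are always enough moves for Right to cover all of them. Either way this is routine, and the heart of the proof is the disjointness of $\Pi$ together with the observation that blocking Left is all Right needs.
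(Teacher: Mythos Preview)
Your approach is correct and essentially identical to the paper's: Right answers any move of Left inside a pair $\{x,y\}\in\Pi$ by taking the partner, and otherwise plays arbitrarily. However, the concern you raise about the ``otherwise'' case is a red herring and the elaborate bookkeeping you propose is unnecessary: if Right's arbitrary move happens to land on some $x$ with $\{x,y\}\in\Pi$, this only \emph{helps} her, since she has now secured one vertex of that pair and Left can never own both. The verification goes through cleanly without any extra care: if Left owned both $x$ and $y$, say she took $x$ first, then $y$ was unpicked at that moment (Left takes it only later, and had Right already taken it Left could not), so Right's immediate response was to take $y$ --- contradiction. The paper's proof is accordingly two lines.
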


\begin{proof}
    Anytime Left has just picked some $x \in \{x,y\} \in \Pi$ with $y$ unpicked, Right answers by picking $y$. In all other situations, Right picks an arbitrary vertex. This guarantees that Left does not fill a blue edge.
\end{proof}

Some moves are easily seen as better than others. In the Maker-Breaker and Maker-Maker conventions, it holds that $x$ cannot be a worse pick than $y$ if every edge containing $y$ also contains $x$. Of course, this is still true for general achievement positional games.

\begin{lemma}[Dominating Option]\label{lem:dominating}
    Let $\Ga=(V,E_L,E_R)$ be an achievement positional game, and let $u,v \in V$. Suppose that $\{u\},\{v\} \not\in E_L \cup E_R$, and that for all $e \in E_L \cup E_R: u \in e \implies v \in e$. Then:
    \begin{enumerate}[nolistsep,noitemsep,label={(\roman*)}]
        \item $o(\Ga_u) \leq_L o(\Ga_v)$.
        \item $o(\Ga^v) \leq_L o(\Ga^u)$.
        \item $o(\Ga_u^v) \leq_L o(\Ga) \leq_L o(\Ga_v^u)$.
    \end{enumerate}
\end{lemma}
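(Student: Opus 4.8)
The plan is to prove the three inequalities simultaneously, by induction on $|V|$, with the small cases immediate. It is convenient to write an outcome as a pair $o(\Ga)=(a,b)$, where $a$ (resp.\ $b$) is the result, an element of the chain $\text{Left wins}>\text{draw}>\text{Right wins}$, with optimal play when Left (resp.\ Right) moves first; then $o(\Ga')\leq_L o(\Ga'')$ says exactly that both coordinates are $\leq$. Since a move that completes an edge ends the game, one has the bookkeeping identities: the first coordinate of $o(\Ga)$ is the $\leq$-largest value of ``second coordinate of $o(\Ga_x)$'' as $x$ runs over Left's legal first moves (a move winning on the spot counting as a Left win), and dually the second coordinate of $o(\Ga)$ is the $\leq$-smallest value of ``first coordinate of $o(\Ga^y)$'' over Right's first moves $y$. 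The whole argument hinges on one comparison: $o(\Ga_u^v)\leq_L o(\Ga_v^u)$, i.e.\ handing the dominating vertex $v$ to Left and the dominated vertex $u$ to Right is no worse for Left than the reverse. Both games live on $V\setminus\{u,v\}$, and the hypothesis --- in the contrapositive form $v\notin e\Rightarrow u\notin e$ --- makes this a one-line consequence of Edge Monotonicity (Lemma~\ref{lem:monotonicity}): the blue edges of $\Ga_u^v$ are exactly $\{e\in E_L : v\notin e\}$, contained in the blue edges $\{e\setminus\{v\} : e\in E_L,\ u\notin e\}$ of $\Ga_v^u$, while the red edges of $\Ga_v^u$ are exactly $\{e\in E_R : v\notin e\}$, contained in the red edges $\{e\setminus\{v\} : e\in E_R,\ u\notin e\}$ of $\Ga_u^v$.

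For (i), I would compare $o(\Ga_u)$ and $o(\Ga_v)$ coordinate by coordinate via the bookkeeping identities, pairing up first moves of the two games. A first move $x\notin\{u,v\}$ belongs to both, with $(\Ga_u)_x=(\Ga_x)_u$ and $(\Ga_v)_x=(\Ga_x)_v$; since $\Ga_x$ is a smaller game still satisfying the hypothesis (in particular ``$u\in e\Rightarrow v\in e$'' prevents $\{u\}$ from becoming an edge of $\Ga_x$), the inductive hypothesis for (i) on $\Ga_x$ gives $o((\Ga_x)_u)\leq_L o((\Ga_x)_v)$, and analogously for paired first moves of Right through $\Ga^x$. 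The unpaired moves are $v$ as a first move of Left in $\Ga_u$ versus $u$ as a first move of Left in $\Ga_v$ --- both producing the same game $\Ga_{u,v}$ --- and $v$ as a first move of Right in $\Ga_u$ versus $u$ as a first move of Right in $\Ga_v$ --- producing $\Ga_u^v$ and $\Ga_v^u$, ordered correctly by the comparison above. Reassembling the coordinates yields $o(\Ga_u)\leq_L o(\Ga_v)$. Statement (ii) is simply (i) applied to the colour-swapped game $(V,E_R,E_L)$, which satisfies the same hypothesis, so it needs no separate proof.

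For (iii), the two inequalities are colour-swaps of each other, so I prove $o(\Ga)\leq_L o(\Ga_v^u)$. For the first coordinate, I show that ``second coordinate of $o(\Ga_x)$'' $\leq_L$ ``first coordinate of $o(\Ga_v^u)$'' for every first move $x$ of Left in $\Ga$; the bookkeeping identity then gives the claim. If $x=v$: Right, to move in $\Ga_v$, may legally answer $u$, so the second coordinate of $o(\Ga_v)$ is at most the first coordinate of $o((\Ga_v)^u)=o(\Ga_v^u)$. If $x=u$: part (i), already available for $\Ga$ at this stage, gives $o(\Ga_u)\leq_L o(\Ga_v)$, reducing to the previous case. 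If $x\notin\{u,v\}$: the inductive hypothesis for (iii) on the smaller game $\Ga_x$ gives $o(\Ga_x)\leq_L o((\Ga_x)_v^u)$, and since $(\Ga_x)_v^u=(\Ga_v^u)_x$ with $x$ one of Left's first-move options in $\Ga_v^u$, the bookkeeping identity finishes. The second coordinate is dual: for each first move $w$ of Right in $\Ga_v^u$ (necessarily $w\notin\{u,v\}$), the inductive hypothesis for (iii) on $\Ga^w$ gives $o(\Ga^w)\leq_L o((\Ga^w)_v^u)=o((\Ga_v^u)^w)$, which together with ``second coordinate of $o(\Ga)\leq$ first coordinate of $o(\Ga^w)$'' (Right may open with $w$ in $\Ga$) and the bookkeeping identity yields ``second coordinate of $o(\Ga)\leq$ second coordinate of $o(\Ga_v^u)$''.

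I expect the crux to be that (iii) genuinely resists the ``imaginary game'' device used for the other results of this section: transferring a strategy between $\Ga$ and $\Ga_v^u$ would require the simulated game to carry two extra free vertices \emph{and} its opening to agree with the simulated strategy, which fails since $v$ need not be a best first move --- hence the detour through the induction, through part (i) (at $x=u$), and through the Edge-Monotonicity comparison (at the unpaired moves). The remaining care concerns degenerate first moves $x\notin\{u,v\}$ that shrink an edge to the singleton $\{v\}$, i.e.\ $\{v,x\}\in E_L\cup E_R$, so that the inductive hypothesis does not apply verbatim to $\Ga_x$; in those situations one of the games in play is already essentially decided (for instance if $\{v,x\}\in E_L$ then $\{x\}$ is a blue edge of $\Ga_v^u$, forcing its first coordinate to be a Left win), and these cases --- together with the base of the induction --- are dispatched by direct arguments.
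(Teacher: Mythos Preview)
Your inductive scheme is sound and can be pushed through, but it is a significant detour compared with the paper's argument, and your stated motivation for it is based on a misconception. You claim that \textit{(iii)} ``genuinely resists the `imaginary game' device''; in fact the paper proves \textit{(iii)} precisely by such a device. To show $o(\Ga_u^v)\leq_L o(\Ga)$, Left transports any winning (resp.\ non-losing) strategy $\strat$ on $\Ga_u^v$ to $\Ga$ as follows: play according to $\strat$, except that whenever Right picks one of $u$ or $v$, Left picks the other. The pair $\{u,v\}$ thus acts as a local pairing; Right can never fill a red edge containing $u$ (since then $v\in e$ and Left holds one of $u,v$), and any red edge with $u\notin e$ has $e\setminus\{v\}$ as a red edge of $\Ga_u^v$, which $\strat$ prevents Right from filling first. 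This is a two-line argument, and the other inequality in \textit{(iii)} is its colour-swap.

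For \textit{(i)} and \textit{(ii)}, the paper also avoids induction: it applies Edge Monotonicity (Lemma~\ref{lem:monotonicity}) directly to $\Ga_u$ and $\Ga_v$, handling the mismatch of vertex sets by renaming $u$ as $v$ inside $\Ga_v$. This single renaming makes both Monotonicity hypotheses hold globally, so no case analysis on first moves is needed. Your induction, by contrast, must pair first moves and then handle the ``degenerate'' cases where $\{v,x\}\in E_L\cup E_R$; you wave these off, but note that your own example only covers $\{v,x\}\in E_L$ for Left's move $x$. The case $\{v,y\}\in E_R$ for Right's first move $y$ (relevant to the second coordinate of \textit{(i)}) does \emph{not} make either game ``essentially decided'': it creates a red singleton $\{v\}$ in $(\Ga_u)^y$ but leaves $(\Ga_v)^y$ untouched, and one must argue separately (e.g.\ via $((\Ga_u)^y)_v=((\Ga_v)^y)_u$) to recover the inequality. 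This can be done, but it is exactly the kind of bookkeeping the paper's direct approach sidesteps.

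In short: your Edge-Monotonicity comparison $o(\Ga_u^v)\leq_L o(\Ga_v^u)$ is correct and elegant, but the rest of the machinery is unnecessary once one sees the right strategy-transport for \textit{(iii)} and the renaming trick for \textit{(i)}--\textit{(ii)}.
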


\begin{proof}
    Let us first show \textit{(i)}, with \textit{(ii)} being proved in an analogous manner. The idea is to apply Lemma \ref{lem:monotonicity} to $\Ga_u=(V \setminus \{u\},E_L^{+u},E_R^{-u})$ and $\Ga_v=(V \setminus \{v\},E_L^{+v},E_R^{-v})$, despite these two games not having the exact same vertex set.
    \begin{itemize}[nolistsep,noitemsep]
        \item The assumption on $u$ and $v$ clearly implies that $E_R^{-v} \subseteq E_R^{-u}$.
        \item Let $e \in E_L^{+u}$ \textit{i.e.} $e=e_0 \setminus \{u\}$ for some $e_0 \in E_L$. We define $e'=e_0 \setminus \{v\} \in E_L^{+v}$. If $u \not\in e_0$, then $e=e_0 \supseteq e'$. If $u \in e_0$, then $v \in e_0$ by assumption, so $e'$ is simply $e$ up to renaming $u$ as $v$.
    \end{itemize}
    All in all, up to renaming $u$ as $v$ in the game $\Ga_v$, we see that both conditions of Lemma \ref{lem:monotonicity} are satisfied, so $o(\Ga_u) \leq_L o(\Ga_v)$.

    We now show that $o(\Ga_u^v) \leq_L o(\Ga)$, with the other inequality in \textit{(iii)} being proved in an analogous manner. It suffices to show that Left can adapt any winning (resp. non-losing) strategy $\strat$ on $\Ga_u^v$ into a winning (resp. non-losing) strategy on $\Ga$. When playing the game $\Ga$, Left simply plays according to $\strat$, except if Right picks $u$ or $v$ in which case Left picks the other. If $u$ and $v$ are the only unpicked vertices, then Left may pick one arbitrarily (even though $v$ is better, we do not need it here).
    
    Let us verify that this is indeed a winning (resp. non-losing) strategy for Left on $\Ga$. It is impossible that Right fills a red edge $e$ in $\Ga$ such that $u \in e$ (which implies $v \in e$) because Left has picked one of $u$ or $v$. It is also impossible that Right fills a red edge $e$ in $\Ga$ such that $u \not\in e$, because Right would then have filled the red edge $e \setminus \{v\}$ in $\Ga_u^v$. Finally, since all blue edges of $\Ga_u^v$ are also blue edges of $\Ga$, if Left wins on $\Ga_u^v$ then he also wins on $\Ga$.
\end{proof}

If we replace the implication in Lemma \ref{lem:dominating} by an equivalence, then we are saying that $u$ and $v$ are equivalent for the game. In this case, we may simplify the game by considering that the players have picked one each.

\begin{lemma}[Twin Simplification]\label{lem:twin}
    Let $\Ga=(V,E_L,E_R)$ be an achievement positional game, and let $u,v \in V$ be distinct. Suppose that $\{u\},\{v\} \not\in E_L \cup E_R$, and that for all $e \in E_L \cup E_R: u \in e \iff v \in e$. Then, let $\Ga'=\Ga_u^v=\Ga_v^u$: we have $o(\Ga')=o(\Ga)$.
\end{lemma}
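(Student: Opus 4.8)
The plan is to derive both inequalities $o(\Ga') \leq_L o(\Ga)$ and $o(\Ga) \leq_L o(\Ga')$ directly from Lemma~\ref{lem:dominating}, whose hypotheses are satisfied here since the equivalence $u \in e \iff v \in e$ in particular implies the one-directional implication $u \in e \implies v \in e$. Indeed, part \textit{(iii)} of Lemma~\ref{lem:dominating} gives $o(\Ga_u^v) \leq_L o(\Ga) \leq_L o(\Ga_v^u)$. So it only remains to observe that, under the equivalence hypothesis, the two games $\Ga_u^v$ and $\Ga_v^u$ are literally the same game, hence have the same outcome; combining this with the displayed chain of inequalities forces $o(\Ga_u^v) = o(\Ga) = o(\Ga_v^u)$, which is exactly the claim with $\Ga' = \Ga_u^v = \Ga_v^u$.

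The one genuine verification is therefore that $\Ga_u^v = \Ga_v^u$ as triples. Both have vertex set $V \setminus \{u,v\}$. For the updated blue edges, $\Ga_u^v$ has edge set $(E_L^{+u})^{-v}$ and $\Ga_v^u$ has edge set $(E_L^{+v})^{-u}$; I would argue these coincide by a short case analysis on an edge $e \in E_L$. If $e$ contains neither $u$ nor $v$, it survives unchanged in both. If $e$ contains both $u$ and $v$, then $e \setminus \{u\}$ still contains $v$ so it is deleted in $\Ga_u^v$, and symmetrically $e \setminus \{v\}$ still contains $u$ so it is deleted in $\Ga_v^u$; so $e$ contributes nothing to either. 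The case where $e$ contains exactly one of $u,v$ is ruled out by the equivalence hypothesis. Hence the two blue edge sets agree. The argument for the red edge sets $(E_R^{+v})^{-u}$ versus $(E_R^{+u})^{-v}$ is identical. (One may note that with the equivalence hypothesis, $\Ga_u^v$ is obtained from $\Ga$ by simply deleting both $u$ and $v$ from the vertex set and from every edge, which makes the symmetry in $u,v$ manifest.)

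Finally I would assemble the pieces: since $\Ga' := \Ga_u^v = \Ga_v^u$, Lemma~\ref{lem:dominating}\textit{(iii)} reads $o(\Ga') \leq_L o(\Ga) \leq_L o(\Ga')$, and as $\leq_L$ is a partial order (antisymmetric), this yields $o(\Ga) = o(\Ga')$. There is essentially no obstacle here beyond the bookkeeping with the $E^{+S}$ and $E^{-S}$ notation; the result is a clean corollary of the Dominating Option lemma, and the main point worth spelling out is simply why the ``exactly one of $u,v$'' case cannot arise, which is precisely where the strengthening from implication to equivalence is used.
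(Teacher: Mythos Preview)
Your proof is correct and follows exactly the paper's approach: the paper's proof is the single line ``This is a direct corollary of item \textit{(iii)} from Lemma~\ref{lem:dominating}'', and you simply unpack the one detail the paper leaves implicit, namely why $\Ga_u^v$ and $\Ga_v^u$ coincide. One small inaccuracy: your parenthetical remark that $\Ga_u^v$ is obtained by ``deleting both $u$ and $v$ from every edge'' is not quite right, since an edge containing both $u$ and $v$ is in fact removed entirely (as your own case analysis correctly shows), not merely shrunk; this does not affect the argument.
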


\begin{proof}
    This is a direct corollary of item \textit{(iii)} from Lemma \ref{lem:dominating}.
\end{proof}

It should be noted that, in the Maker-Breaker convention, the same conclusion holds under the weaker assumption that, for all $S \subseteq V \setminus \{u,v\}$, we have $S \cup \{u\} \in E \iff S \cup \{v\} \in E$ (see \cite[Lemma 1.84]{nacim}). For instance, this allows to simplify two edges $\{u,v,w_1\}$ and $\{u,v,w_2\}$ into a single edge $\{u,v\}$ if $w_1$ and $w_2$ are of degree 1. This is already false in the Maker-Maker convention, where the notion of speed becomes important. All in all, even though Lemma \ref{lem:twin} admits a stronger version for the Maker-Breaker convention, it likely does not for the Maker-Maker convention.

Finally, another interesting consequence of Lemma \ref{lem:dominating} is that a move which forces the opponent to play a worse move is always optimal.

\begin{lemma}[Greedy Move]\label{lem:greedy}
    Let $\Ga=(V,E_L,E_R)$ be an achievement positional game. Suppose that there is no 1-edge, but there is a blue edge $\{u,v\}$ such that, for all $e \in E_L \cup E_R: u \in e \implies v \in e$. Then it is optimal for Left as the first player to start by picking $v$, which forces Right to answer by picking $u$.
\end{lemma}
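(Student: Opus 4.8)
The plan is to reduce everything to item \textit{(iii)} of Lemma \ref{lem:dominating}. By symmetry it suffices to treat the blue case, so suppose $\{u,v\} \in E_L$ satisfies $u \in e \implies v \in e$ for all $e \in E_L \cup E_R$. First I would record the easy half: if Right, as second player, answers Left's move $v$ by picking $u$, we reach the updated game $\Ga_v^u$, and by Lemma \ref{lem:dominating}\textit{(iii)} we have $o(\Ga) \leq_L o(\Ga_v^u)$, so from Left's point of view picking $v$ and being answered by $u$ is at least as good as the status quo. The point of the lemma, however, is the other direction: I must show that Right has no better reply than $u$, and that $v$ is an optimal first move for Left. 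So the real content is two inequalities, one bounding how good $v$ can be for Left and one showing $u$ is Right's best response.

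The key steps, in order. \textbf{Step 1 (Right's best reply is $u$).} Consider any first move $v$ by Left followed by an arbitrary reply $w$ by Right. I claim $o(\Ga_v^u) \leq_L o(\Ga_v^w)$, i.e. from Left's perspective Right replying with $u$ is the worst Right can do. This is because, inside the game $\Ga_v$, the vertex $u$ is a dominating option for the \emph{next} player to move, namely Right: every edge of $\Ga_v$ containing $u$ still satisfies the implication toward... wait, $v$ has been removed, so I need to check the hypothesis survives the update. After Left picks $v$, a red edge $e$ of $\Ga$ with $v \in e$ dies, and a blue edge $e$ with $v \in e$ becomes $e \setminus \{v\}$; in either case the relevant surviving edges containing $u$ are those with $u \in e$, $v \notin e$ — but by hypothesis $u \in e \implies v \in e$, so \emph{no} surviving edge of $\Ga_v$ contains $u$ at all. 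Hence in $\Ga_v$ the vertex $u$ lies in no edge, so picking it is (at least weakly) the best possible move for whoever moves next: formally, $\Ga_v$ with $u$ removed is isomorphic as a game to $\Ga_v$ with any other vertex... no — rather, since $u$ is in no edge of $\Ga_v$, Lemma \ref{lem:dominating}\textit{(i)} applied in $\Ga_v$ (with $u$ the dominating vertex over every other vertex $w$) gives exactly $o((\Ga_v)_u) \leq_L o((\Ga_v)_w)$ for every $w$, i.e. $o(\Ga_v^u) \leq_L o(\Ga_v^w)$. Thus whatever Right does after $v$, the result (for Left) is at least $o(\Ga_v^u)$, so Left's value after playing $v$ equals $o(\Ga_v^u)$ and Right's optimal reply is indeed $u$.

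\textbf{Step 2 ($v$ is an optimal first move for Left).} By the last part of Step 1, playing $v$ first yields Left the value $o(\Ga_v^u)$. It remains to see that no other first move $x$ does better, i.e. that for every $x \in V$, the value Left gets by playing $x$ first is $\leq_L o(\Ga_v^u)$. Playing $x$ first, Right can reply with $v$ (legal since $\{v\} \notin E_L \cup E_R$ means $v$ is not an instant win for anyone, and if $v = x$ Right instead replies $u$, already covered), reaching $\Ga_x^v$; so Left's value after $x$ is $\leq_L o(\Ga_x^v)$. Now I compare $o(\Ga_x^v)$ with $o(\Ga_v^u)$. Apply Lemma \ref{lem:dominating} inside $\Ga$: since $v$ is the dominating vertex ($u \in e \implies v \in e$), item \textit{(iii)} gives $o(\Ga_x^v) \leq_L \dots$ — here I need the monotonicity of $o(\cdot_\cdot^\cdot)$ in the two slots, which is precisely the bundle of inequalities Lemma \ref{lem:dominating} supplies, combined over the chain $\Ga_x^v \to \Ga^v \to \Ga \to \Ga_v^u$: namely $o(\Ga_x^v) \leq_L o(\Ga^v) \leq_L o(\Ga^u) \leq_L \dots$ hmm, the cleanest route is $o(\Ga_x^v) \leq_L o(\Ga_{?}^{?})$; concretely, $\Ga_x^v$ is $\Ga^v$ with one further Left move, so $o(\Ga_x^v) \leq_L o(\Ga^v)$ is false in general — extra Left moves \emph{help} Left by Lemma \ref{lem:more-moves}, giving $o(\Ga^v) \leq_L o(\Ga_x^v)$, the wrong direction. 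So instead I bound directly: by Step 1's observation that $u$ lies in no edge of $\Ga_x$ either (same hypothesis check), $v$ is a dominating vertex in $\Ga_x$, whence Lemma \ref{lem:dominating}\textit{(iii)} inside $\Ga_x$ gives $o((\Ga_x)^{??})$...

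I will instead streamline Step 2 as follows: it suffices to show $o(\Ga) \leq_L o(\Ga_v^u)$ (already have, from the easy half) \emph{and} that playing $v$ is never strictly worse than playing anything else, which by the standard minimax identity $o(\Ga) = \max_{x}^{\leq_L} o(\text{Left plays } x)$ reduces to: $o(\Ga_v^u) \geq_L o(\Ga)$, exactly the easy half. So the lemma follows once Step 1 establishes that Left-playing-$v$ has value exactly $o(\Ga_v^u)$ and the easy half gives $o(\Ga_v^u) \geq_L o(\Ga) = $ Left's optimal value. That closes the argument.

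\textbf{Main obstacle.} The crux — and the step to write most carefully — is the verification in Step 1 that the hypothesis "$u \in e \implies v \in e$" forces $u$ to be an \emph{isolated} vertex (in no edge) of every updated game $\Ga_x$ with $v$ still present, and in particular of $\Ga_v$ where $v$ has just been removed: there one must note that removing $v$ kills every blue edge's membership of $v$ and every red edge containing $v$, so a surviving edge containing $u$ would have contained $u$ but not $v$ in $\Ga$, contradicting the hypothesis. Once $u$ is isolated, Lemma \ref{lem:dominating}\textit{(i)} does all the work, because an isolated vertex dominates every other vertex. The rest is bookkeeping with $\leq_L$ and the minimax characterization of the outcome.
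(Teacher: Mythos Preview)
Your overall architecture matches the paper's: the whole lemma really does reduce to $o(\Ga) \leq_L o(\Ga_v^u)$ from Lemma~\ref{lem:dominating}\textit{(iii)}, and your Step~2 correctly observes that this inequality, combined with the minimax identity, makes $v$ an optimal first move. The paper's own proof is just this one line.

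However, your Step~1 contains a genuine error. You claim that after Left picks $v$, the vertex $u$ lies in no edge of $\Ga_v$, arguing that ``a surviving edge containing $u$ would have contained $u$ but not $v$ in $\Ga$''. This is false: when Left picks $v$, blue edges containing $v$ are \emph{shrunk}, not deleted. Concretely, the blue edge $\{u,v\}\in E_L$ becomes $\{u\}\in E_L^{+v}$, a size-$1$ blue edge of $\Ga_v$ that very much contains $u$. So $u$ is not isolated in $\Ga_v$; on the contrary, it is the unique vertex of a winning blue edge. Worse, if $u$ \emph{were} isolated, Lemma~\ref{lem:dominating}\textit{(ii)} would say that any $w$ dominates $u$, hence Right picking $w$ is at least as good \emph{for Right} as picking $u$ --- the opposite of what you want.

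The correct (and much simpler) reason Right is forced to answer with $u$ is exactly the observation above: after Left picks $v$, the updated game $\Ga_v$ has the blue edge $\{u\}$ of size $1$, so any reply $w\neq u$ lets Left win immediately by picking $u$. That one sentence replaces all of your Step~1, and then your Step~2 (together with the ``easy half'') finishes the proof exactly as the paper does.
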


\begin{proof}
    The updated game after this first round of play is $\Ga_v^u$, and we have $o(\Ga) \leq_L o(\Ga_v^u)$ by item \textit{(iii)} from Lemma \ref{lem:dominating}.
\end{proof}

\section{Complexity results}\label{section4}

\subsection{\prob{2}{2} is in {\sf P}}\strut
\indent We start by showing that the game is easily solved when all edges have size at most 2. The following definition will be useful here and also later in the paper. A {\em blue $P_3$} (resp. {\em red $P_3$}) is a pair of blue (resp. red) edges of the form $\{\{u,v\},\{v,w\}\}$, which may be denoted as $uvw$. Note that a blue (resp. red) $P_3$ threatens a two-move win for Left (resp. Right) by picking $v$ and then picking $u$ or $w$.

\begin{theorem}\label{theo:22}
    \prob{2}{2} is in {\sf P}. More precisely, there exists an algorithm which, given an achievement positional game $\Ga=(V,E_L,E_R)$ such that all elements of $E_L \cup E_R$ have size at most 2, decides whether Left has a winning strategy on $\Ga$ as the first (resp. second) player in time $O(|V|^2|E_L|+|V||E_R|)$.
\end{theorem}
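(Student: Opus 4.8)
## Proof plan

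The plan is to reduce everything to a statement about when Left can win very quickly, since with edges of size at most $2$ there is almost no room for long-term strategy. First I would dispose of trivial cases: if some $\{u\} \in E_L$, Left wins in one move; if $E_L = \varnothing$, Left cannot win at all. So I may assume every blue edge has size exactly $2$, and think of $(V,E_L)$ as a graph $G_L$ and $(V,E_R)$ as a graph $G_R$ (after discarding size-$1$ red edges, which only matter through a one-move threat that is easy to track separately). The key observation I would establish is a \emph{short-game lemma}: if Left has a winning strategy as first player at all, then Left has one that wins within two of his own moves, i.e. Left wins on his second move at the latest. The reason is that Left's only way to make progress is to occupy a vertex contained in a blue edge, thereby creating a ``threat'' (an updated blue edge of size $1$); Right must immediately answer every threat or lose, so if Left cannot set up a double threat (two threats at once, or one threat Right cannot parry because the parry vertex is already taken) essentially immediately, he never will --- any vertex Right is forced to take in response is a vertex Left could not have needed, and the position only deteriorates for Left as Right accumulates free moves elsewhere. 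I would make this precise using Lemma~\ref{lem:more-moves} (extra vertices never hurt) to argue that delaying never helps Left.

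Given the short-game lemma, the algorithm is a local search. Left wins on the first move iff there is a vertex $v$ lying in two blue edges $\{v,a\},\{v,b\}$ with $a \neq b$ and $\{v,a\},\{v,b\}$ not ``pre-blocked'' --- but nothing is pre-blocked at the start, so this is just: $v$ has blue-degree $\geq 2$, AND picking $v$ is safe, i.e. $v$ is not the unique vertex of a red edge about to be completed (vacuous at move $1$ unless size-$1$ red edges are present). Otherwise Left must win on move $2$: Left picks some $v$ creating one blue threat along $\{v,w\}$; Right is forced to take $w$ (if Right does anything else Left finishes, so optimal Right takes $w$ --- unless Right instead \emph{also} has a winning-or-faster red threat, which is where $G_R$ enters); after Right takes $w$, Left needs a vertex $v'$ that completes a double blue threat in the updated graph $(G_L)$ with $v,w$ removed and, crucially, Left must survive Right's reply, meaning Right must have no red edge she can complete on her next move. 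This is a bounded-depth minimax over $O(|V|)$ choices for $v$, $O(1)$ forced reply, $O(|V|)$ choices for $v'$, and a check over red edges --- and each check over $E_R$ costs $O(|E_R|)$, each manipulation of updated blue edges costs $O(|E_L|)$, giving the claimed $O(|V|^2|E_L| + |V||E_R|)$. For the second-player question I would invoke the reduction already noted in the paper: Left wins as second player iff no $\{u\}\in E_R$ and Left wins as first player on $\Ga^u$ for every $u$; since each $\Ga^u$ again has all edges of size $\leq 2$, this is $|V|$ calls to the first-player routine, absorbed into the same complexity bound (the $|V|^2|E_L|$ term is exactly sized for this).

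The delicate part, and the main obstacle, is handling the \emph{interaction} between Left's threats and Right's threats rather than treating the two graphs independently. When Left plays $v$ to threaten $\{v,w\}$, Right is not always forced to take $w$: Right may ignore it if Right herself has a red edge she can complete strictly faster, or a red double-threat of her own. So the short-game analysis must be done for \emph{both} players simultaneously: I would phrase it as ``Left wins as first player iff Left has a move-$1$ win, or Left has a move-$1$ threat that Right cannot both parry and avoid losing, where Right's options include her own red threats.'' Concretely this means: after Left plays $v$, for \emph{every} Right reply $r$ --- either $r=w$ (parry) or $r$ is a vertex completing a red edge or creating a red double-threat --- Left must have a follow-up win. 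Making sure the case analysis is exhaustive (size-$1$ red edges, red edges of size $2$ that become size-$1$ threats, the possibility that $v$ itself kills a red threat, Right running out of safe parries because $w$ is already Left's) is where the bookkeeping lies, but none of it requires search beyond constant depth, so the polynomial bound survives. I would organize the write-up as: (1) preprocessing and trivial cases; (2) the short-game lemma with proof via Lemma~\ref{lem:more-moves}; (3) the explicit characterization of a first-player win as a depth-$\leq 3$ condition; (4) translation into the stated running time; (5) the second-player case via the paper's reduction.
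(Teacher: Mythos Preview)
Your short-game lemma is correct in the narrow setting where every edge has size exactly~$2$ and Left moves first: then Left wins iff $G_L$ contains a $P_3$, and the paper makes the same observation. The gap is in the second-player case. You reduce ``Left wins as second player on $\Ga$'' to ``Left wins as first player on $\Ga^u$ for every $u$,'' and you assert this is $|V|$ calls to your first-player routine. But $\Ga^u$ is \emph{not} a game with only size-$2$ edges: every red edge through $u$ becomes a size-$1$ red threat, so Left's first move in $\Ga^u$ may be forced. Your preprocessing then removes that forced move, but Left's forced reply may itself create a size-$1$ blue threat, forcing Right, and so on along an alternating red/blue path. When this chain of forced moves terminates, you may well be in a position where it is \emph{Right's} turn again on a smaller all-size-$2$ game --- and now you need the second-player routine once more. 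Your plan treats this as ``easy to track separately,'' but it is a genuine mutual recursion: SecondPlayer branches over $|V|$ choices of $u$, each of which (after forced moves) may call SecondPlayer on a game with only two fewer vertices. Na\"ively this is $|V|^{\Theta(|V|)}$ work, and nothing in your outline bounds it.

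The paper's proof is organized precisely around this difficulty. After reducing to all-size-$2$ edges and Right to move (and observing that the red edges may be assumed to form a matching), it classifies each candidate first move $u$ for Right by the maximal alternating $u$-path $P(u)$: if the path branches (type~1) Right loses; if it ends with odd length (type~2) the outcome reduces to whether $G_L - P(u)$ contains a $P_3$; if it ends with even length (type~3), the key lemma is that playing $u$ leaves a position \emph{equivalent} to the original, so one may simply delete $P(u)$ and iterate. This type-$3$ reduction is what kills the recursion --- it replaces the branching over Right's moves by a linear sequence of simplifications --- and it is the step your plan is missing.
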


\begin{proof}
    First of all, let us explain how to reduce to the case where all elements of $E_L \cup E_R$ have size exactly 2. Let $\alpha$ (resp. $\beta$) be the number of 1-edges whose color corresponds to the first (resp. second) player. The case $\alpha \geq 1$ is trivially winning for the first player, while the case where $\alpha=0$ and $\beta \geq 2$ is trivially winning for the second player. In the remaining case $(\alpha,\beta)=(0,1)$, the first player is forced to pick a particular vertex $u$, so we may replace $\Ga$ with $\Ga_u$ (resp. $\Ga^u$) if Left (resp. Right) starts, which can be computed in time $O(|E_L|+|E_R|)$. We can repeat this operation as many times as needed, say $t$ times, until all edges have size exactly 2. Up to this preprocessing step in time $O(|V|(|E_L|+|E_R|))$, we thus assume that we have a blue graph $G_L=(V,E_L)$ and a red graph $G_R=(V,E_R)$.
    
    Furthermore, we can assume that Right is now the first player, \textit{i.e.} $t$ is odd if Left started the game or even if Right started the game. Indeed, the case where Left is next to play is straightforward: if there is a blue, then Left can win in two moves, otherwise Right has a non-losing pairing strategy by Lemma \ref{lem:pairing}. Finally, we can assume that there is no red $P_3$ \textit{i.e.} the red edges form a matching, otherwise Right can win in two moves. Therefore, Right has a non-losing strategy if and only if she can avoid any situation where it is Left's turn and there is an intact blue $P_3$ $uvw$ such that Left can pick $v$ without losing on the spot. As such, Right wants to force all of Left's moves and keep the initiative until she has picked at least one vertex in each blue $P_3$. If she manages that, then the updated blue edges will form a matching, giving her a non-losing pairing strategy by Lemma \ref{lem:pairing}. We now go on with a rigorous proof.
    
    Given a vertex $u$, an \textit{alternating $u$-path} is a path $u_1 \cdots u_{s}$ in $(V,E_L \cup E_R)$ such that: $u_1=u$, $u_iu_{i+1} \in E_R$ for all odd $1 \leq i \leq s-1$, and $u_iu_{i+1} \in E_L$ for all even $1 \leq i \leq s-1$. We accept the case $s=1$ in this definition. We split all possibilities for Right's next move $u$ into three \textit{types}, as follows. We say $u$ is of type (1) if there exist an alternating $u$-path $P=u_1 \cdots u_{s}$ with odd $s \geq 3$ and a vertex $y \not\in P$ such that $u_{s-1}v \in E_L$. Otherwise, there exists a unique maximal alternating $u$-path, which we denote by $P(u)=u_1 \cdots u_{s}$. We then say $u$ is of type (2) if $s$ is odd, or of type (3) if $s$ is even. Note that each $e \in E_L$ containing some $u_i$ with even $i$ also contains some $u_j$ with odd $j$. See Figure \ref{fig:types}.

    \begin{figure}[h]
	\centering
	\includegraphics[scale=.5]{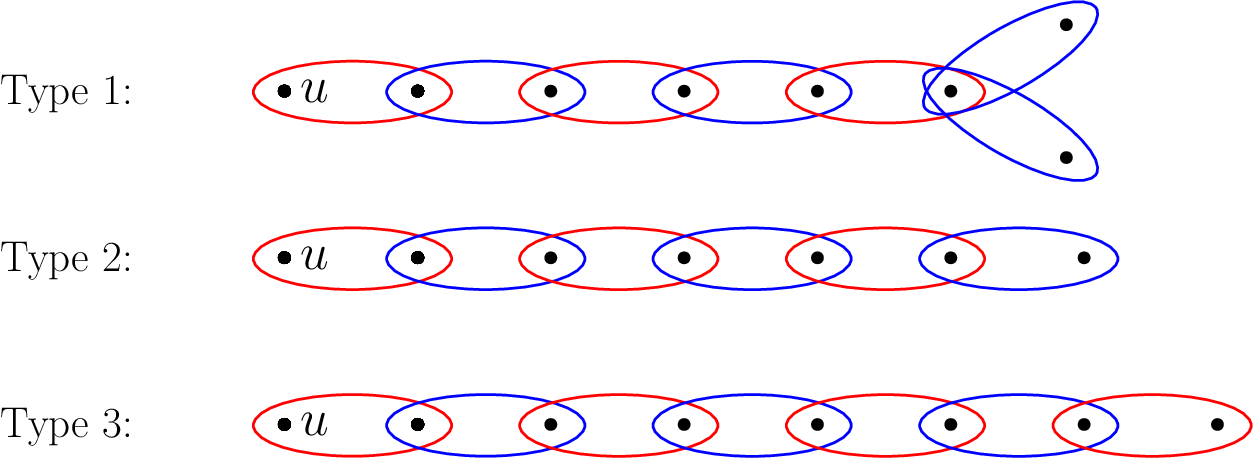}%
	\caption{Illustration of the three types.}\label{fig:types}
    \end{figure}

    \begin{itemize}
        \item We claim that picking any $u$ of type (1) is a losing move for Right. Indeed, let $P=u_1 \cdots u_{s}$ and $v$ as in the definition, with $s$ minimal. If Right picks $u=u_1$, then Left is forced to pick $u_2$, which in turn forces Right to pick $u_3$, and so on. In the end, Left picks $u_{s-1}$, thus threatening to win the game by picking either $u_{s}$ or $v$ as his next move. Right cannot address both threats.
        \item We claim that picking $u$ of type (2) is a losing move for Right if and only if there exists a $P_3$ in $G_L-P(u)$, which denotes the graph obtained from $G_L$ by deleting all vertices of $P(u)=u_1 \cdots u_{s}$. Indeed, suppose Right picks $u_1$, which forces Left to pick $u_2$, and so on until Right eventually picks $u_{s}$ (recall that $s$ is odd). At this point, since each $u_i$ with odd $i$ has been picked by Right, there is no blue edge in which Left has picked a vertex and Right has not picked the other. Moreover, we know $u_{s}$ is isolated in $G_R$ by maximality of $P(u)$, so Left has no forced move. If there exists a $P_3$ $uvw$ in $G_L-P(u)$, then Left can pick $v$ and win the game with his next move. Otherwise, Right can apply a non-losing pairing strategy.
        \item We claim that picking any $u$ of type (3) is an optimal move for Right, be it losing or non-losing. Indeed, write $P(u)=u_1 \cdots u_{s}$ (recall that $s$ is even). We want to show that the situation after the sequence of forced moves triggered by $u=u_1$ is not worse for Right than it was before. In other words, we want to show that if Left has a winning strategy $\strat$ on $\Ga^{u_1,u_3,\ldots,u_{s-1}}_{u_2,u_4,\ldots,u_{s}}=(V',E'_L,E'_R)$ as the second player then he also has a winning strategy on $\Ga$ as the second player. Such a strategy may be derived from $\strat$ as follows. Whenever Right picks some $u_i$, Left picks its unique red neighbor (forced move). Otherwise, Left answers according to $\strat$. By assumption on $\strat$, Left will fill some $e \in E_L' \subseteq E_L$ before Right can fill some $e' \in E_R'$, thus winning the game $\Ga$ at the same time.
    \end{itemize}
    
    Therefore, if there exists a vertex $u$ of type (3), then we can assume that Right picks $u$ and that the resulting forced moves are played thereafter along $P(u)$. We thus perform this reduction repeatedly until we get a game where all vertices are of type (1) or (2), at which point we know Right has a non-losing strategy if and only if there exists some vertex $u$ of type (2) such that $G_L-P(u)$ contains no $P_3$.
    
    Let us work out the exact time complexity. 
    Given a vertex $u$, computing a maximal alternating $u$-path and determining the type of $u$ can be done in time $O(|E_L|)$. At most $\frac{|V|}{2}$ reductions are performed until all vertices are of type (1) or (2). Then, for every $u$ of type (2), we can compute $G_L-P(u)$ in time $O(|E_L|)$ and determine whether it contains a $P_3$ in time $O(|V|)$. All in all, we get a time complexity in $O(|V|^2|E_L|)$ for inputs in which all edges have size exactly 2. For general inputs, we get a time complexity in $O(|V|(|E_L|+|E_R|))+O(|V|^2|E_L|)=O(|V|^2|E_L|+|V||E_R|)$.
\end{proof}

\begin{corollary}
    Computing the outcome of an achievement positional game $\Ga=(V,E_L,E_R)$ such that all elements of $E_L \cup E_R$ have size at most 2 can be done in time $O(|V|^2(|E_L|+|E_R|))$.
\end{corollary}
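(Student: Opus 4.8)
The plan is to reduce the computation of $o(\Ga)$ to a bounded number of calls to the algorithm of Theorem \ref{theo:22}. Recall from Table \ref{tab:outcome} that the outcome is determined by two pieces of data: the result with optimal play when Left starts, and the result with optimal play when Right starts, each being one of the three values ``Left wins'', ``Right wins'', ``draw''. Since the game is finite with perfect information, for a fixed first player the result with optimal play is a Left win if and only if Left has a winning strategy, a Right win if and only if Right has a winning strategy, and a draw otherwise, these three alternatives being mutually exclusive and exhaustive.

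First I would determine whether Left has a winning strategy on $\Ga$ as first player and whether Left has a winning strategy on $\Ga$ as second player, by running the algorithm of Theorem \ref{theo:22} twice; this costs $O(|V|^2|E_L|+|V||E_R|)$ in total. By the color symmetry already used throughout Section \ref{section3} and Section \ref{section4} --- swapping the roles of Left and Right amounts to considering the game $(V,E_R,E_L)$, which still has all edges of size at most $2$ --- I would analogously determine whether Right has a winning strategy on $\Ga$ as first player and whether Right has a winning strategy on $\Ga$ as second player, in time $O(|V|^2|E_R|+|V||E_L|)$.

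Next I would assemble the outcome from these (at most) four bits. When Left starts, the result is ``Left wins'' if Left has a winning strategy as first player, ``Right wins'' if Right has a winning strategy as second player, and ``draw'' otherwise; when Right starts, the result is ``Right wins'' if Right has a winning strategy as first player, ``Left wins'' if Left has a winning strategy as second player, and ``draw'' otherwise. Reading off the resulting pair identifies one of the six outcomes of Table \ref{tab:outcome}. Summing the running times of the calls yields the announced bound $O(|V|^2(|E_L|+|E_R|))$.

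There is no genuine obstacle here: the statement is a routine bookkeeping consequence of Theorem \ref{theo:22}. The only points deserving a word of care are that color-swapping preserves the hypothesis that all edges have size at most $2$ (immediate), and that ``neither player has a winning strategy'' is equivalent to ``draw with optimal play'', which is just the determinacy of finite perfect-information games. One could moreover save two of the four calls by invoking Corollary \ref{coro:more-moves} --- if Left has no winning strategy as first player then he has none as second player, and symmetrically for Right --- but this refinement does not change the asymptotic bound.
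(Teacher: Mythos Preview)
Your proposal is correct and follows essentially the same approach as the paper: reduce the computation of the outcome to (at most) four calls of the algorithm from Theorem~\ref{theo:22}, one for each combination of ``Left/Right has a winning strategy as first/second player'', and sum the running times. Your write-up is simply more detailed than the paper's one-line argument, and the optimization via Corollary~\ref{coro:more-moves} is a nice extra observation that the paper does not mention.
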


\begin{proof}
    The outcome is characterized by the answers to each of the four combinations of the following question: ``Does Left/Right have a winning strategy as the first/second player?'' By Theorem \ref{theo:22}, each is solved in time $O(|V|^2(|E_L|+|E_R|))$.
\end{proof}

\subsection{\prob{2}{3} is {\sf coNP}-complete}

\begin{proposition}\label{prop:23}
    \prob{$2$}{$q$} is in {\sf coNP} for all $q \geq 0$.
\end{proposition}

\begin{proof}[Proof of Proposition~\ref{prop:23}]
    We show that the complement problem, which asks whether Right has a non-losing strategy as the second player, is in {\sf NP} for all $q \geq 0$.
    Consider the strategy $\strat$ for Left where, in each turn, Left picks his vertex according to the following order of priority:
    \begin{enumerate}[label={\arabic*.},noitemsep,nolistsep]
        \item Pick some $u$ that wins in one move if such $u$ exists.
        \item Pick some $v$ that prevents Right from winning in one move if such $v$ exists.
        \item Pick some $w$ at the center of an intact blue $P_3$ is one exists.
        \item (Left has a non-losing pairing strategy) Pick an arbitrary vertex.
    \end{enumerate}
    Since all blue edges have size at most 2, if Left has a winning strategy on $\Ga$ as the first player, then $\strat$ clearly is one. Moreover, given any possible move for Left, it is easy to check in polynomial time whether that move abides by $\strat$ or not. 
    Therefore, a polynomial certificate for Right's non-losing strategy is simply a sequence of all of moves from both players in which Left plays according to $\strat$ and does not win.
\end{proof}

\begin{theorem} \label{theo:23}
    \prob{$2$}{$q$} is {\sf coNP}-complete for all $q \geq 3$.
\end{theorem}

\begin{proof}
    Membership in {\sf coNP} is given by Proposition \ref{prop:23}. As for {\sf coNP}-hardness, addressing the case $q=3$ is sufficient.
    We reduce 3-SAT to the complement of \prob{$2$}{$3$}, \textit{i.e.} the problem of deciding whether Right has a non-losing strategy as the second player on an achievement positional game $\Ga$ where all blue (resp. red) edges have size at most $2$ (resp. at most $3$).
    Consider an instance $\phi$ of 3-SAT, with a set of variables $V$ and a set of clauses $C$. We build a game $\Ga = (V,E_L,E_R)$ as follows (see Figure~\ref{fig:gadget23} for a visual example):
\begin{itemize}[nolistsep,noitemsep]
\item For all $x \in V$, we define $V_x = \quickset{x, \neg x}$.
\item For all $c=\ell_1 \vee \ell_2 \vee \ell_3 \in C$, where the $\ell_i$ are the literals, we define $V_c = \quickset{c_{\ell_1},c_{\ell_2},c_{\ell_3},c'_{\ell_1},c'_{\ell_2},c'_{\ell_3}}$.
\item $V=  \bigcup_{x \in V} V_x \cup \bigcup_{c \in C} V_c \cup \quickset{\gamma,\omega,\omega_1, \omega_2}$.
\item $E_L= \bigcup_{c=\ell_1 \vee \ell_2 \vee \ell_3 \in C} \quickset{\edge{c_{\ell_1},c_{\ell_2}},\edge{c_{\ell_2},c_{\ell_3}},\edge{c_{\ell_3},c_{\ell_1}}} \cup \quickset{\edge{\omega,\omega_1}, \edge{\omega,\omega_2}}$.
\item $E_R = \{\{\gamma\}\} \cup \bigcup_{x \in V} \quickset{\edge{x,\neg x}} \cup \bigcup_{c=\ell_1 \vee \ell_2 \vee \ell_3 \in C} \quickset{\edge{\ell_1,c_{\ell_1},c'_{\ell_1}}, \edge{\ell_2,c_{\ell_2},c'_{\ell_2}},\edge{\ell_3,c_{\ell_3},c'_{\ell_3}}}$.
\end{itemize}

\begin{figure}[h]
\centering
\includegraphics[scale=.50]{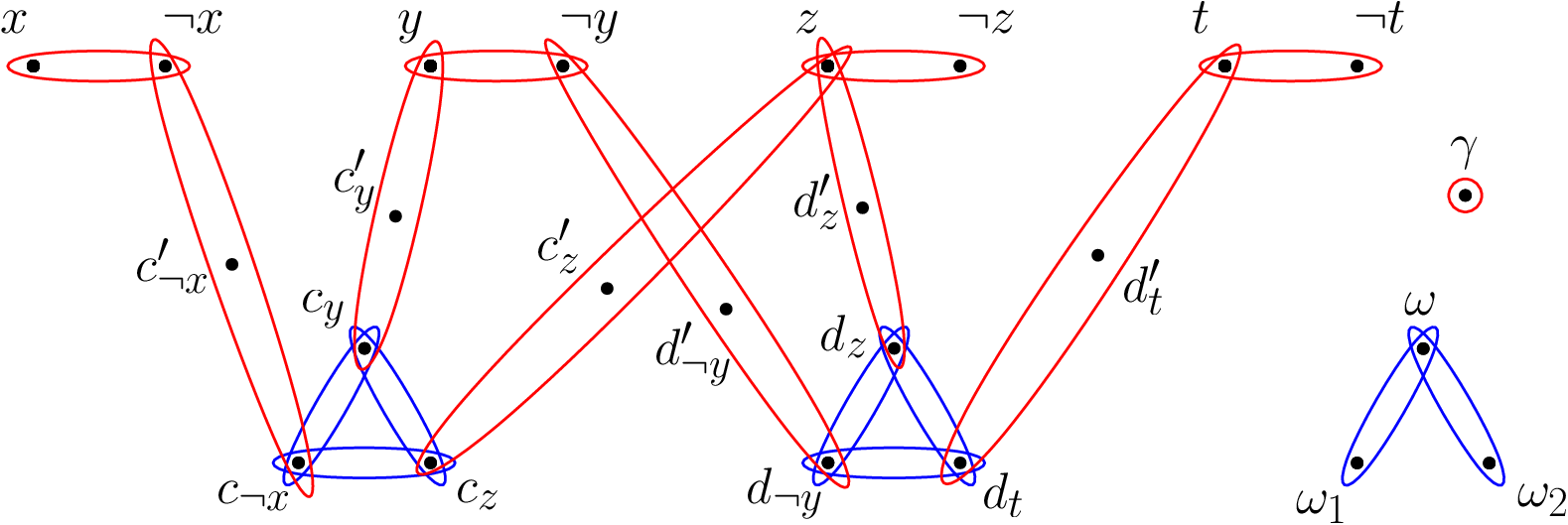}%
\caption{The full gadget from the proof of Theorem \ref{theo:23} for a set of two clauses $c=\neg x \vee y \vee z$ and $d=\neg y \vee z \vee t$.}\label{fig:gadget23}
 \end{figure}

As the first player, Left obviously needs to start by picking $\gamma$ because of the red edge $\{\gamma\}$. Now, Right is next to play. Since there are multiple pairwise vertex-disjoint blue $P_3$'s, every move from Right must threaten to win on the next move until the last blue $P_3$ is destroyed. Indeed, Left would otherwise win in two moves by picking the center of a blue $P_3$.

Therefore, Right must pick $\ell \in \{x, \neg x\}$ for some $x \in V$, which forces Left to pick the other one since $\{x, \neg x\} \in E_R$. In the updated game, for each clause $c$ containing the literal $\ell$, we get the red edge $\{c_{\ell},c'_{\ell}\}$, which is the only edge containing $c'_{\ell}$. Therefore, Lemma \ref{lem:greedy} ensures that it is optimal for Right to pick $c_{\ell}$ for each clause $c$ which contains $\ell$, which forces Left to pick $c'_{\ell}$ in response. This destroys every blue $P_3$ in the clause gadgets corresponding to clauses containing $\ell$.

Right must repeat this process of picking a literal and then destroying all clause gadgets of clauses containing that literal, until she has picked at least one of $c_{\ell_1}$, $c_{\ell_2}$ or $c_{\ell_3}$ for each clause $c=\ell_1\vee \ell_2 \vee \ell_3 \in C$.

If there exists a valuation $\mu$ of the variables which satisfies $\phi$, then Right succeeds in destroying every blue $P_3$ in the clause gadgets, by picking $x$ if $\mu(x)={\sf T}$ or $\neg x$ if $\mu(x)={\sf F}$, for all $x \in V$. After that, she can simply pick $\omega$, destroying the final blue $P_3$ and thus ensuring not to lose the game thanks to a pairing strategy on the remaining blue edges. If such a valuation does not exist, then Right will have to play a move that does not force Left's answer and leaves at least one blue $P_3$ intact, thus losing the game. All in all, Right has a non-losing strategy on $\Ga$ as the second player if and only if $\phi$ is satisfiable, which concludes the proof.
\end{proof}

\begin{remark}\label{rem:draw}
In the previous construction, Right never has a winning strategy as the second player. As such, it is {\sf NP}-hard to decide whether optimal play leads to a draw when Left is the first player on an achievement positional game with blue edges of size at most 2 and red edges of size at most 3. Therefore, by reversing the colors and removing the vertex $\gamma$, we get a construction that shows {\sf NP}-hardness of deciding whether optimal play leads to a draw when Left is the first player on an achievement positional game with blue edges of size at most 3 and red edges of size at most 2.
\end{remark}

 \subsection{\prob{$3$}{$2$} is {\sf NP}-hard}

\begin{theorem}\label{theo:32}
     \prob{$3$}{$2$} is {\sf NP}-hard.
 \end{theorem}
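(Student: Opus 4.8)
Here is the approach I would take.

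I would reduce \SATprob{} to \prob{3}{2}. Given a 3-CNF formula $\phi$, I first duplicate every clause, replacing $\phi$ by the equisatisfiable formula $\phi'$ in which each original clause appears twice on the same variables; the point of this padding will become clear at the end. I then build, in polynomial time, an \game{} $\Ga=(V,E_L,E_R)$ with all blue edges of size at most $3$ and all red edges of size at most $2$, and I claim that Left has a winning strategy on $\Ga$ as first player if and only if $\phi$ is satisfiable.

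The construction adapts the skeleton used for Theorem~\ref{theo:23}, to which I append a \emph{victory gadget}. For each variable $x$, put a blue \emph{selector} edge $\{x,\neg x\}$. For each clause $c=\ell_1\vee\ell_2\vee\ell_3$ of $\phi'$, put a \emph{red triangle} on three private vertices $t^c_1,t^c_2,t^c_3$ (red edges $\{t^c_1,t^c_2\}$, $\{t^c_2,t^c_3\}$, $\{t^c_3,t^c_1\}$), and for each $k$ a blue edge $\{\ell_k,t^c_k,u^c_k\}$ where $u^c_k$ has degree $1$. Finally, add a single blue butterfly (as in Figure~\ref{fig:example}) on seven fresh vertices, with apex $\alpha$. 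The forcing mechanism is the one of Theorem~\ref{theo:23}: since the red triangles are pairwise vertex-disjoint and numerous, Left cannot afford a move that fails to threaten to complete a blue edge on the next turn, for otherwise Right takes the centre of an intact red triangle and wins with a double red threat. Under this constraint, Left's only useful moves are (i) picking a literal $\ell$, which forces Right's reply $\neg\ell$ and amounts to setting $\ell$ true in a consistent partial assignment, and (ii) picking $t^c_k$ while already holding $\ell_k$, which by Lemma~\ref{lem:greedy} forces Right's reply $u^c_k$ and destroys the red triangle of $c$. Picking the apex $\alpha$ is \emph{not} a threat (each butterfly edge then still lacks two vertices), so Left may play it only once no red triangle is intact; and once Left holds $\alpha$, picking a wing centre is a double blue threat and wins.

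For the two directions: if $\phi$ is satisfiable, Left commits a satisfying assignment through the forced exchanges of type (i), then destroys every red triangle through the forced exchanges of type (ii) (possible because every clause of $\phi'$ is satisfied), then plays $\alpha$ and wins the butterfly --- the single surviving red edge of each triangle remnant is not an immediate threat, so Right can only interpose a bounded number of one-move red threats, which Left answers, plus at most one free move against the butterfly, which damages at most one of its two wings and hence leaves Left an intact wing to exploit at once. Conversely, if $\phi$ is unsatisfiable, let Right answer every blue threat by filling the missing vertex, and otherwise seize the centre of an intact red triangle (and win). Then Left can never reach a position in which all red triangles are destroyed: that would force his partial assignment to satisfy all but at most one clause of $\phi'$, hence --- the two copies of a clause being satisfied together or unsatisfied together --- all clauses of $\phi'$, hence $\phi$. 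Consequently Left can never safely play $\alpha$, and since his stock of threatening moves is finite, he is eventually forced to play a non-threat while a red triangle is intact, and loses.

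The point that needs the most care, and the real obstacle, is preventing Left from bypassing the verification and rushing to the victory gadget; this dictates the two design choices above. First, the victory gadget must be a \emph{butterfly} and not a blue cherry $\{\alpha,\gamma_1\},\{\alpha,\gamma_2\}$: the apex of a cherry is already a winning double threat, so Left could play it immediately regardless of $\phi$, whereas the butterfly apex costs Left a tempo that he can spare only once every red triangle is dead. Second, there is an escape hatch by which Left destroys the \emph{last} intact red triangle with a non-threatening $t^c_k$ move, legal precisely because no other triangle is then intact, which would make the reduction come from ``satisfy all clauses but one'' instead of from \SATprob{}; duplicating the clauses makes unsatisfied clauses come in pairs, so no single last triangle is ever available to sacrifice a tempo on, which closes the hatch. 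The remaining verifications (edge sizes, polynomial size of $\Ga$, and the routine game-tree analyses sketched above) are straightforward.
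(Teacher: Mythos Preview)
Your proof is correct, and the core mechanism is the same as the paper's: reuse the \SATprob{} skeleton from Theorem~\ref{theo:23} and append a blue ``victory'' component that Left can only activate once every red $P_3$ is gone. The difference lies in how the two proofs close the escape hatch you correctly identify.

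The paper keeps the construction of Theorem~\ref{theo:23} \emph{verbatim}, including the red $P_3$ on $\omega,\check\omega,\hat\omega$, and adds \emph{two} disjoint blue butterflies. The $\omega$-gadget plays exactly the role that your clause duplication plays: it guarantees one extra red $P_3$ beyond the clause triangles, so that when $\phi$ is unsatisfiable there are always at least two intact red $P_3$'s and Left's single non-threatening ``sacrifice'' cannot clear the board. Because Left must then spend a non-threat on $\omega$ in the satisfiable case before touching a butterfly, Right gets two free moves in the endgame rather than one, which is why the paper needs two butterflies instead of your one. Your trade --- drop $\omega$, duplicate clauses, keep a single butterfly --- is a clean alternative: it makes the endgame analysis slightly tighter (you must argue, as you do, that Right's interposed red threats never buy her a second free move in the butterfly), but it is self-contained and uses fewer gadgets. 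The paper's route is shorter on the page because it simply cites Theorem~\ref{theo:23} for the whole first phase.
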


\begin{proof}
 We use the same construction as in Remark \ref{rem:draw}, except that we add two copies of the blue butterfly gadget (recall Figure~\ref{fig:example}) to transform draws into wins for Left. See Figure \ref{fig:gadget32}.
 
 \begin{figure}[h]
\centering
\includegraphics[scale=.50]{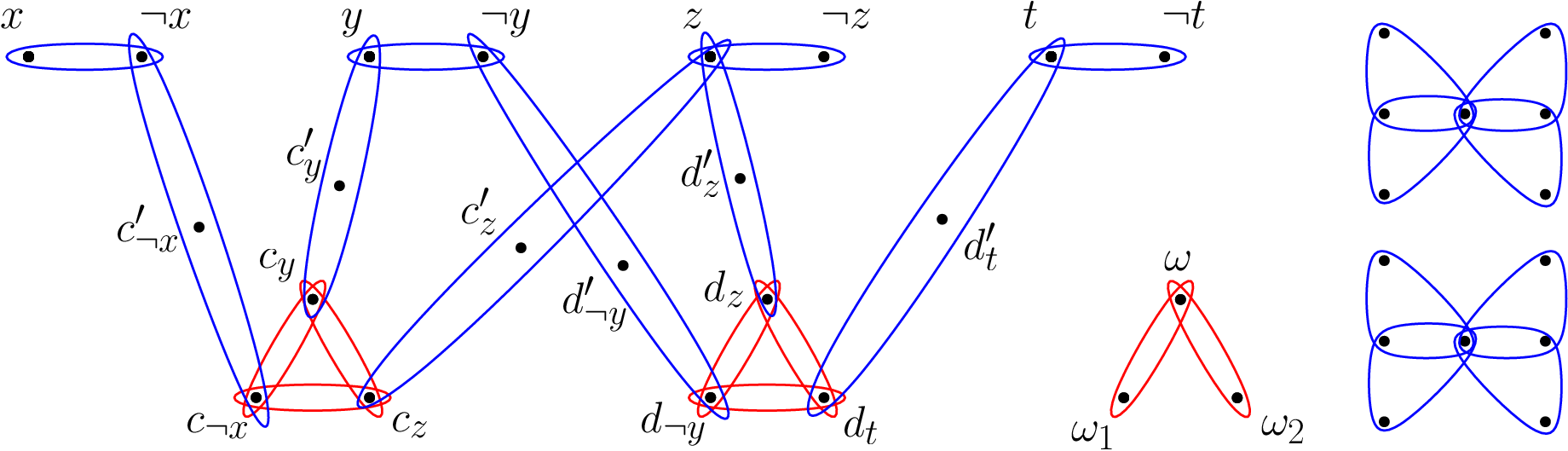}%
\caption{The full gadget from the proof of Theorem \ref{theo:32} for a set of two clauses $c=\neg x \vee y \vee z$ and $d=\neg y \vee z \vee t$.}\label{fig:gadget32}
 \end{figure}

Left cannot play on any of the two butterflies as long as he has not destroyed every red $P_3$, as Right would then win in two moves. As such, he must play the same way as Right did in the proof of Theorem \ref{theo:23} (with colors reversed), until every red $P_3$ is destroyed. Since Right's moves are all forced during that phase of play, she cannot play on a butterfly either in the meantime. If $\phi$ is not satisfiable, then Right wins. Otherwise, Left destroys the last red $P_3$ by picking $\omega$, and then Right can only destroy one of the two blue butterflies. Left is next to play, and wins in three moves using the intact blue butterfly.
 \end{proof}

\subsection{\prob{3}{3} is {\sf PSPACE}-complete}\strut
\indent To establish {\sf PSPACE}-completeness for \prob{3}{3}, we actually prove a stronger result:

\begin{theorem}\label{theo:33}
    Deciding whether Left has a winning strategy as the first player on an achievement positional game with edges of size~$2$ or~$3$ and the same edges of size~$3$ for both colors is {\sf PSPACE}-complete.
\end{theorem}

We first remark that, assuming Theorem \ref{theo:33} holds, we obtain {\sf PSPACE}-completeness for intermediate positions of the Maker-Maker game on 3-uniform hypergraphs as a corollary. Note that, since we consider played moves that were not necessarily optimal, this does not imply any hardness result for \makermaker{$3$}.

\begin{corollary}\label{coro:33}
    Deciding whether the first player has a winning strategy for the Maker-Maker game on a 3-uniform hypergraph with one round (one move of each player) having already been played is {\sf PSPACE}-complete.
\end{corollary}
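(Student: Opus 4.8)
The plan is to reduce \prob{$3$}{$3$} --- which is {\sf PSPACE}-complete by Theorem \ref{theo:33} --- to the stated problem, using a variant of Byskov's argument \cite{byskov}. Membership in {\sf PSPACE} is immediate: a Maker-Maker game on a hypergraph $H$, even started from a fixed mid-game position, lasts at most $|V(H)|$ moves, so it can be decided in polynomial space by a brute-force traversal of the game tree, exactly as in \cite[Lemma 2.2]{schaefer}.

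For the hardness part, I would start from an achievement positional game $\Ga=(V,E_L,E_R)$ in which every edge has size at most $3$, introduce two fresh vertices $x,y\notin V$, and build the hypergraph $H=(V\cup\{x,y\},E)$ with $E=\{e\cup\{x\}\mid e\in E_L\}\cup\{e\cup\{y\}\mid e\in E_R\}$. Every edge of $H$ has size at most $4$, so $H$ has rank $4$, and $H$ is clearly computable from $\Ga$ in logarithmic space. The instance of the new problem I would consider is the Maker-Maker game on $H$ in the position reached after the first player has picked $x$ and the second player has picked $y$ --- a perfectly legal first round, albeit not necessarily an optimal one, which is all the statement requires.

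The heart of the argument is to show that, from that position, the first player has a winning strategy in the Maker-Maker game on $H$ if and only if Left has a winning strategy on $\Ga$ as first player. This should follow directly from the edge-updating formalism of Section \ref{section2}: since each edge of $H$ contains exactly one of $x$ and $y$, once the first player owns $x$ every edge $e\cup\{x\}$ with $e\in E_L$ is dead for the second player and behaves like $e$ for the first player, and symmetrically once the second player owns $y$; hence the residual game is literally $\Ga$, with Left (resp.\ Right) identified with the first (resp.\ second) player and the first player to move. As ``having a winning strategy'' means forcing a genuine win (not a draw) in both frameworks, the equivalence is exact, and combining it with Theorem \ref{theo:33} gives the claim.

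I do not expect a real obstacle here; the only subtlety is conceptual. Because the statement concerns the position after an \emph{arbitrary} (hence possibly suboptimal) first round, I am free to impose the ``intended'' opening moves $x$ and $y$ without having to prove they are optimal --- which is precisely why, as noted in the surrounding text, this corollary says nothing about the complexity of \makermaker{$4$}.
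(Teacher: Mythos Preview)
Your proposal is correct and is essentially identical to the paper's own proof: the paper also reduces from \prob{$3$}{$3$} by adding two fresh vertices (called $u_L,u_R$ there), appending $u_L$ to every blue edge and $u_R$ to every red edge, and observing that after the first round $(u_L,u_R)$ the Maker-Maker position on the resulting rank-$4$ hypergraph is exactly $\Ga$ with Left to move. Your remark that the freedom to prescribe a non-optimal first round is what makes the reduction work (and why nothing follows for \makermaker{$4$} itself) matches the paper's own comment.
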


\begin{proof}
    Let $\Ga=(V,E_{2,L} \cup E_3,E_{2,R} \cup E_3)$ be an achievement positional game where all edges in $E_{2,L} \cup E_{2,R}$ have size~$2$ and all edges in $E_3$ have size~$3$ (this case is {\sf PSPACE}-complete by Theorem \ref{theo:33}). We create two new vertices $u_L$ and $u_R$, and we define the hypergraph $H=(V \cup \{u_L,u_R\},E'_L \cup E'_R \cup E_3)$ where $E'_L = \{e \cup \{u_L\} \mid e \in E_{2,L}\}$ and $E'_R = \{e \cup \{u_R\} \mid e \in E_{2,R}\}$. Note that $H$ is 3-uniform. Consider the Maker-Maker game played on $H$, after one round of play where the first player FP has picked $u_L$ and the second player SP has picked $u_R$. Any edge $e \in E'_L$ now behaves like an edge $e \setminus \{u_L\}$ except that SP filling that edge would not count (``blue'' edge). Similarly, any edge $e \in E'_R$ now behaves like an edge $e \setminus \{u_R\}$ except that FP filling that edge would not count (``red'' edge). As for the edges in $E_3$, they are untouched and still there for both players to fill. The situation thus reduces to $\Ga$ with Left playing first, identifying Left with FP and Right with SP.
\end{proof}

\indent The rest of this subsection is dedicated to the proof of Theorem \ref{theo:33}.
Note that membership in {\sf PSPACE} is given by Proposition \ref{prop:pspace}. 
As for {\sf PSPACE}-hardness, we perform a reduction from the classic \QBF~decision problem which has been shown {\sf PSPACE}-complete by Stockmeyer and Meyer \cite{QBF}, or rather from its complement which is also {\sf PSPACE}-complete since {\sf PSPACE}={\sf coPSPACE} \cite{copspace}.

The problem \QBF~can be formulated in terms of the following game. A logic formula $\phi$ in CNF form with clauses of size exactly~$3$ is given, along with a prescribed ordering of its variables $x_1,y_1,\ldots,x_n,y_n$. Two players, Falsifier and Satisfier, take turns assigning truth values to the variables. We assume that Falsifier goes first, setting $x_1$ to \true~or \false. Satisfier then sets $y_1$ to \true~or \false. Falsifier then sets $x_2$ to \true~or \false, and so on until a full valuation $\mu$ is built.
Satisfier wins the game if $\mu$ satisfies $\phi$, otherwise Falsifier wins. \\

\begin{tabularx}{0.95\textwidth}{|l @{} l @{} X|}
	\hline
	\multicolumn{3}{|l|}{\,\,\QBF} \\ \hline
	Input $\,$ & : \,\, & A logic formula $\phi$ in CNF form, with clauses of size exactly~$3$, and an even number of variables with a prescribed ordering $x_1,y_1,\ldots,x_n,y_n$. \\
	Output $\,$ & : \,\, & {\sf T} if Satisfier has a winning strategy as the second player, {\sf F} otherwise. \\ \hline
\end{tabularx} \\

Let $\phi$ be a logic formula in CNF form, with $m$ clauses $c_1,\ldots,c_m$ of size exactly~$3$, and $2n$ variables with a prescribed ordering $x_1,y_1,\ldots,x_n,y_n$. For all $j\in\segment{1}{m}$, write $c_j=\ell_j^1 \vee \ell_j^2 \vee \ell_j^3$, where the literals are ordered the same as their corresponding variables.
We are now going to define an achievement positional game $\Ga$, with edges of size~$2$ or~$3$ and the same edges of size~$3$ for both colors, on which Left has a winning strategy as the first player if and only if Falsifier has a winning strategy for \QBF~on $\phi$ as the first player. 

\subsubsection{Idea of the construction}\strut

We simulate the \QBF~game, with Left acting as Falsifier and Right acting as Satisfier. The game unfolds in two parts. The first part corresponds to the players choosing truth values for the variables. The second part will see Left win if and only if the valuation built during the first part satisfies $\phi$.

The first part features $2n$ sequences of moves, where each sequence of moves in $\Ga$ corresponds to a single move in the \QBF~game. These sequences alternate between two types, depending on whether the player having to make a choice is Left (corresponding to Falsifier choosing a truth value for some $x_i$) or Right (corresponding to Satisfier choosing a truth value for some $y_i$).

Consider a sequence where Left has a choice to make. Left plays the first move, for which he only has two viable options, corresponding to his choice of a truth value for $x_i$. Past that initial move, \textit{guide-edges} will force both players' every move until the whole variable gadget associated with $x_i$ is cleared. Left plays the last move of that sequence, so that Right will play first in the next variable gadget (associated with $y_i$).

Sequences where Right has a choice to make work in analogous fashion, except that we need to add additional \textit{trap-edges} to prevent Right from playing other moves than the prescribed ones. Some of these trap-edges have been ``activated'' by Left at the end of the previous sequence, producing a threat of a two-move win for Left. Therefore, Right is forced to make immediate threats herself (which she can only achieve by picking one of the expected possible choices) until she deactivates these trap-edges, which she does with the very last move of that sequence.

At the end of those $2n$ sequences of moves, a full valuation of the variables has been built, and the second part of the game starts with Left next to play. His move is actually forced by a couple of {\em switch-edges}, so that Right is next to play. Right then destroys all the clause gadgets corresponding to clauses in which at least one literal is true. For this, she plays greedy moves (as per Lemma \ref{lem:greedy}) in the \textit{link-edges}, which connect each clause gadget to its three associated variable gadgets, and then in the \textit{destruction-edges} inside the clause gadgets. After she is done, Right surrenders the initiative to Left due to a couple more switch-edges. If the valuation built by the players satisfies $\phi$, then Right has destroyed all the clause gadgets, so she draws the game. Otherwise, at least one clause gadget has surviving {\em clause-edges}, which Left uses to win the game. 

\subsubsection{Definition of the game}\strut

\begin{notation}
If $X$ and $Y$ are sets of sets, we define $X \otimes Y = \quickset{e_X \cup e_Y \mid e_X \in X, e_Y \in Y}$. We will use this notation to make edge sets easier to read in the construction, as in $\quickset{\quickset{v}} \otimes  E$ where $v$ is a vertex and $E$ is an edge set can be seen as ``extending every edge from $E$ with the additional vertex $v$''. To alleviate notations, a set $\quickset{\quickset{v}} \otimes  E$ will simply be written as $v \otimes E$.
\end{notation}

The game $\Ga = (V,E_L,E_R)$ is defined as follows:
\begin{itemize}
\item $V = \bigcup_{1 \leq i \leq n} (V_{x_i} \cup V_{y_i}) \cup \bigcup_{1 \leq j \leq m} C_j \cup \Omega$;
\item $E_L = E_1 \cup E_{LR}$;
\item $E_R = F_1 \cup E_{LR}$;
\end{itemize}
where $E_1$ is the set of blue 2-edges, $F_1$ is the set of red 2-edges, and $E_{LR}$ is the set of 3-edges (all of which are both blue and red).

For vertices, we define the following sets.

\begin{itemize}
\item For all $i \in \segment{1}{n}$, the vertices from the variable gadgets associated with the variables $x_i$ and $y_i$ respectively:
$$V_{x_i} = \quickset{a_i,a'_i,b_i,b'_i, c_i, c'_i,c''_i, d_i, f_i, g_i},$$
$$V_{y_i} = \quickset{r_i,r'_i,s_i,s'_i,t_i, t'_i,t''_i,u_i,v_i,w_i}.$$

  We also associate each literal $\ell$ with an index $\ind(\ell)$ defined as the unique $i \in \segment{1}{n}$ such that $\ell \in \quickset{x_i,\neg x_i,y_i,\neg y_i}$, as well as a vertex $v(\ell)$ defined by: $v(x_i)= b_i$, $v(\neg x_i)= b'_i$, $v(y_i)= r_i$, $v(\neg y_i)= r'_i$, for all $i \in \segment{1}{n}$.
 
 \item For all $j \in \segment{1}{m}$, the vertices from the clause gadget associated with the clause $c_j$:
$$C_j=\quickset{\alpha_{j,1},\alpha_{j,2},\alpha_{j,3},\beta_{j,1},\beta_{j,2},\beta_{j,3}}.$$
\item Some additional vertices that will be important in the endgame:
$$\Omega = \quickset{\omega,\omega',\omega'', \zeta, \zeta', \zeta''}.$$
\end{itemize}

For edges, we define:
$$E_{LR} = \bigcup_{1 \leq i \leq n} (G_i \cup P_i \cup Q_i \cup R_i) \cup \bigcup_{2 \leq i \leq n} (E_i \cup F_i) \cup \bigcup_{1 \leq j \leq m} \left( L_j \cup D_j \cup \bigcup_{1 \leq i \leq n} T_{i,j} \cup \Gamma_j \right) \cup S,$$ and the following sets comprising six different types of edges.

\begin{itemize}
\item The \textit{guide-edges}, which force moves in the variable gadgets, starting with the guide-edges for Left's first choice:
\begin{itemize}
       \item $E_1=\quickset{\quickset{a_1,b_1},\quickset{a'_1,b'_1}}$;
       \item $F_1=\quickset{\quickset{b_1,c_1},\quickset{b'_1,c_1}, \quickset{c_1,c'_1},\quickset{c_1,c''_1},\quickset{d_1,f_1},\quickset{a_1,f_1},\quickset{a'_1,f_1}}$;
       \item $G_1=\quickset{\quickset{a_1,c_1,d_1}, \quickset{a'_1,c_1,d_1}, \quickset{a_1,d_1,g_1}, \quickset{a'_1,d_1,g_1}}$;
\end{itemize}
for all $i \in \segment{1}{n}$, the guide-edges for Right's $i$-th choice:
  \begin{itemize}
       \item $P_i= f_i \otimes \quickset{\quickset{s_i,t_i},\quickset{s'_i,t_i}, \quickset{t_i,t'_i},\quickset{t_i,t''_i},\quickset{u_i,v_i},\quickset{r_i,v_i},\quickset{r'_i,v_i}};$
       \item $Q_i= d_i \otimes \quickset{\quickset{r_i,s_i},\quickset{r'_i,s'_i}};$
       \item $R_i=\quickset{\quickset{r_i,t_i,u_i}, \quickset{r'_i,t_i,u_i}, \quickset{s_i,s'_i,w_i}};$
      \end{itemize}
and for all $i \in \segment{2}{n}$, the \textit{guide-edges} for Left's $i$-th choice:
  \begin{itemize}
       \item $E_i= u_{i-1} \otimes \quickset{\quickset{a_i,b_i},\quickset{a'_i,b'_i}};$
       \item $F_i= v_{i-1} \otimes \quickset{\quickset{b_i,c_i},\quickset{b'_i,c_i}, \quickset{c_i,c'_i},\quickset{c_i,c''_i},\quickset{d_i,f_i},\quickset{a_i,f_i},\quickset{a'_i,f_i}};$
       \item $G_i=\quickset{\quickset{a_i,c_i,d_i}, \quickset{a'_i,c_i,d_i}, \quickset{a_i,d_i,g_i}, \quickset{a'_i,d_i,g_i}}.$
 \end{itemize}
\item For all $j \in \segment{1}{m}$, the \textit{link-edges}, which connect the clause gadgets to their associated variable gadgets: $$L_j = \quickset{\{v(\ell_j^k), \alpha_{j,k}, \beta_{j,k}\} \mid k \in \segment{1}{3}}.$$
       \item For all $j \in \segment{1}{m}$, the \textit{destruction-edges}, which ensure that a clause gadget can be completely destroyed by Right as long as at least one literal from the corresponding clause is true: $$D_j = \quickset{\quickset{\alpha_{j,1}, \beta_{j,1}, \alpha_{j,3}},\quickset{\alpha_{j,2}, \beta_{j,2}, \alpha_{j,1}},\quickset{\alpha_{j,3}, \beta_{j,3}, \alpha_{j,2}}},$$ 
       \item For all $(i,j) \in \segment{1}{n} \times \segment{1}{m}$, the \textit{trap-edges}, which prevent Right from playing some moves too early: $$T_{i,j} = \quickset{\quickset{f_{i},v_{i}}} \otimes \quickset{\quickset{\alpha_{j,1}},\quickset{\beta_{j,1}},\quickset{\alpha_{j,2}},\quickset{\beta_{j,2}},\quickset{\alpha_{j,3}},\quickset{\beta_{j,3}}}.$$
\item For all $j \in \segment{1}{m}$, the {\em clause-edges} associated with the clause $c_j$:
$$\Gamma_j = \, \omega \otimes \quickset{\quickset{\alpha_{j,1},\alpha_{j,2}},\quickset{\alpha_{j,2},\alpha_{j,3}},\quickset{\alpha_{j,3},\alpha_{j,1}}}.$$
\item The {\em switch-edges}, which allow us to control which player is next to play at two key moments during the endgame, by forcing the opponent to waste a move:
$$ S = \, v_n \otimes \omega \otimes \quickset{\quickset{\omega'},\quickset{\omega''}}  \, \cup \, \omega \otimes \zeta \otimes \quickset{\quickset{\zeta'},\quickset{\zeta''}}.$$
\end{itemize}

In all subsequent figures, every (updated) red (resp. blue) 2-edge will be represented as a red (resp. blue) line joining its two vertices. Although all 3-edges are both blue and red in the game, we represent them using various colors to help readability.

The variable gadgets are represented in Figures \ref{fig:223-VarLeft} and \ref{fig:223-VarRight}. The clause gadgets are represented in Figure \ref{fig:223-Clause}.

\begin{figure}[H]
    \centering
    \includegraphics[scale=.5]{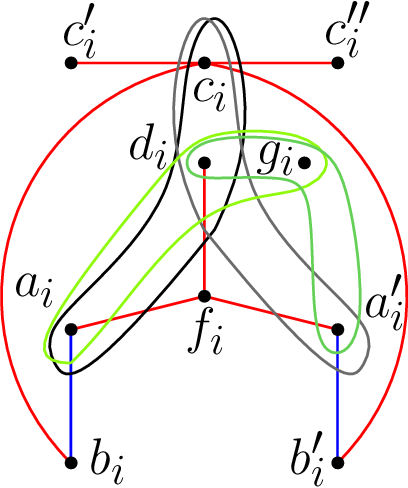}%
    \caption{Gadget associated with the variable $x_i$, as updated just before play starts inside it, \textit{i.e.} at the end of Phase~$i-1$ of regular play. For $i=1$, the gadget is like this at the start of the game.
    }\label{fig:223-VarLeft}
\end{figure}
 
\begin{figure}[H]
    \centering
    \includegraphics[scale=.5]{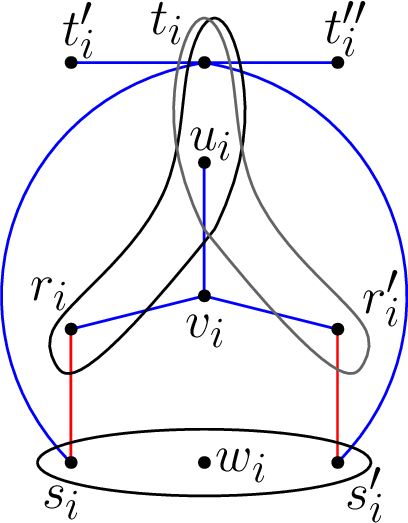}%
    \caption{Gadget associated with the variable $y_i$, as updated just before play starts inside it, \textit{i.e.} after Steps (1) and (2) of Phase~$i$ of regular play.
    }\label{fig:223-VarRight}
\end{figure} 

\begin{figure}[H]
    \centering
    \includegraphics[scale=.5]{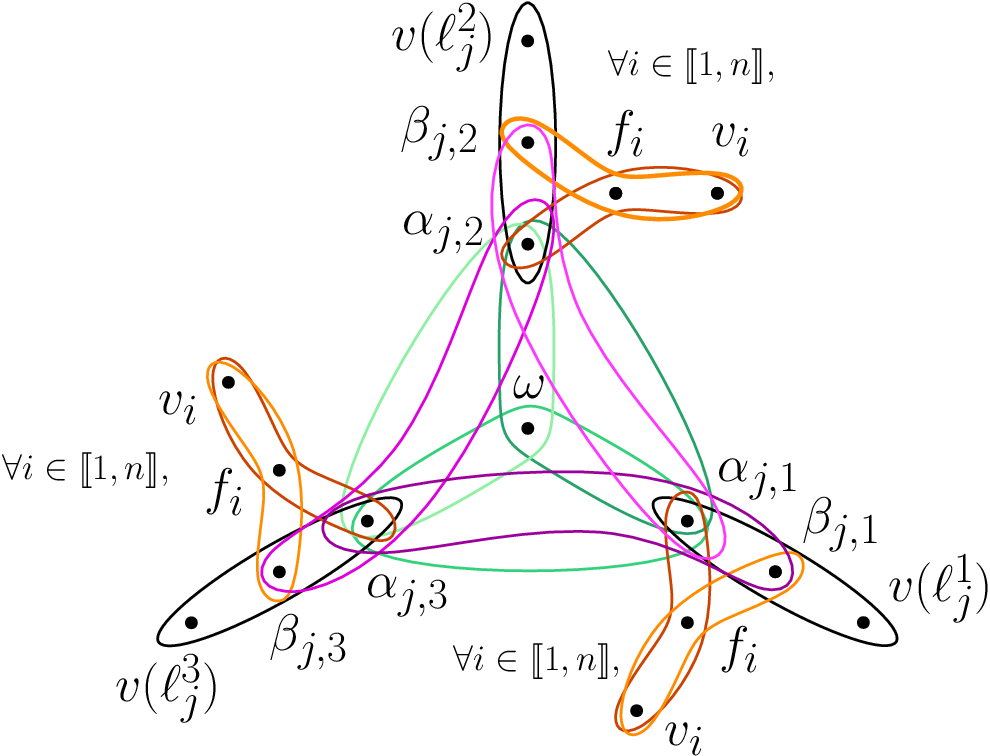}%
    \caption{Clause gadget associated with the clause $c_j = \ell_j^1 \vee \ell_j^2 \vee \ell_j^3$. The vertices $f_i$ and $v_i$ are represented multiple times for readability. 
    The three link-edges are in green. The three destruction-edges are in purple. The three clause-edges are in yellow. The $6n$ trap-edges are in orange. 
    }\label{fig:223-Clause}
\end{figure}

\subsubsection{Regular play}\strut
\indent Assuming Left starts, we define \textit{regular play} on $\Ga$, during which the players build a valuation $\mu$ for the variables $x_1,y_1\ldots,x_n,y_n$. An arrow labelled ``f'' signifies a forced move (when a player cannot win in one move but the opponent threatens to do so), while an arrow labelled ``g'' signifies a greedy move in the sense of Lemma \ref{lem:greedy}. For $i$ from $1$ to $n$, \textit{Phase~$i$} of regular play consists of the following four steps.
Steps (1) and (2) take place inside the variable gadget associated with $x_i$, with the situation at the start of Step (1) as shown in Figure \ref{fig:223-VarLeft}. Steps (3) and (4) take place inside the variable gadget associated with $y_i$, with the situation at the start of Step (3) as shown in Figure \ref{fig:223-VarRight}. 

\begin{enumerate}[label={(\arabic*)}]
    \item Left chooses the value {\sf T} or {\sf F} for $x_i$, by picking $a_i$ or $a'_i$ respectively. Both options trigger a sequence of forced moves inside the associated variable gadget:

\begin{center}
\begin{tikzcd}[row sep = tiny , sep = small]

            & \textcolor{blue}{a_i} \arrow{r}{\text{f}} & \textcolor{red}{b_i} \arrow{dr}{\text{f}} & & & \\
            
            \arrow{ur}{\mu(x_i)={\sf T}} \arrow{dr}[swap]{\mu(x_i)={\sf F}} & & & \textcolor{blue}{c_i} \arrow{r}{\text{f}} & \textcolor{red}{d_i} \arrow{r}{\text{f}} & \textcolor{blue}{f_i} \\
            
            & \textcolor{blue}{a'_i} \arrow{r}{\text{f}} & \textcolor{red}{b'_i} \arrow{ur}{\text{f}} & & &

\end{tikzcd}
\end{center}

\item Right now has no forced move, and uses this window to play a greedy move. Either $a_i$ or $a'_i$ has not been played, and is a greedy move for Right.

\begin{center}
\begin{tikzcd}[row sep = tiny , sep = small]
\arrow{r}{\text{g}} & \textcolor{red}{a'_i} \arrow{r}{\text{f}} & \textcolor{blue}{g_i}
\end{tikzcd}
\quad or \quad 
\begin{tikzcd}[row sep = tiny , sep = small]
\arrow{r}{\text{g}} & \textcolor{red}{a_i} \arrow{r}{\text{f}} & \textcolor{blue}{g_i}
\end{tikzcd}
\end{center}

\item Right chooses the value {\sf T} or {\sf F} for $y_i$, by picking $r_i$ or $r'_i$ respectively. Both options trigger a sequence of forced moves inside the associated variable gadget:

\begin{center}
\begin{tikzcd}[row sep = tiny , sep = small]

            & \textcolor{red}{r_i} \arrow{r}{\text{f}} & \textcolor{blue}{s_i} \arrow{dr}{\text{f}} & & & \\
            
            \arrow{ur}{\mu(y_i)={\sf T}} \arrow{dr}[swap]{\mu(y_i)={\sf F}} & & & \textcolor{red}{t_i} \arrow{r}{\text{f}} & \textcolor{blue}{u_i} \arrow{r}{\text{f}} & \textcolor{red}{v_i} \\
            
            & \textcolor{red}{r'_i} \arrow{r}{\text{f}} & \textcolor{blue}{s'_i} \arrow{ur}{\text{f}} & & &

\end{tikzcd}
\end{center}

    \item Left now has no forced move, and uses this window to play a greedy move. Either $s_i$ or $s'_i$ has not been played, and is a greedy move for Left.
    
\begin{center}
\begin{tikzcd}[row sep = tiny , sep = small]
\arrow{r}{\text{g}} & \textcolor{blue}{s'_i} \arrow{r}{\text{f}} & \textcolor{red}{w_i}
\end{tikzcd}
\quad or \quad 
\begin{tikzcd}[row sep = tiny , sep = small]
\arrow{r}{\text{g}} & \textcolor{blue}{s_i} \arrow{r}{\text{f}} & \textcolor{red}{w_i}
\end{tikzcd}
\end{center} 
    
\end{enumerate}

We now list some important properties of regular play. We say a blue (resp. red) edge is {\em dead} if Right (resp. Left) has picked at least one of its vertices. Dead edges are those that have yielded no edge in the updated game. We say an edge is {\em intact} if none of its vertices has been picked.

\begin{claim}\label{cla:regular223-1}
If both players obey regular play then, for all $i \in \segment{0}{n}$, we have the following at the end of Phase~$i$ (end of Phase~$0$ being the beginning of the game):
\begin{enumerate}[label={\textup{(\roman*)}}]
\item All picked vertices are in $\bigcup_{1 \leq t \leq i} (V_{x_t} \cup V_{y_t})$. In particular, all destruction-edges and clause-edges are intact.
\item For all $t \in \segment{1}{i}$, every guide-edge in $E_t \cup F_t \cup G_t \cup P_t \cup Q_t \cup R_t$ is dead.
\item For all $t \in \segment{1}{i}$, Left has picked $f_t$ and $u_t$ while Right has picked $d_t$ and $v_t$, so in particular every trap-edge in $\bigcup_{1 \leq j \leq m} T_{t,j}$ is dead.
\item For all $(j,k) \in \segment{1}{m} \times \segment{1}{3}$, the vertex $v(\ell_j^k)$ has been picked by Right if $\ind(\ell_j^k) \leq i$ and $\mu(\ell_j^k)={\sf T}$ or has not been picked at all otherwise, so in particular the link-edge $\{v(\ell_j^k), \alpha_{j,k}, \beta_{j,k}\}$ has yielded the updated red edge $\quickset{\alpha_{j,k}, \beta_{j,k}}$ if $\ind(\ell_j^k) \leq i$ and $\mu(\ell_j^k)={\sf T}$ or is intact otherwise. Here, $\mu$ refers to the partial valuation that the players have built so far (valuation of $x_1,y_1,\ldots,x_i,y_i$).
\item The switch-edges are intact, except if $i=n$ for the two switch-edges $v_n \otimes \omega \otimes \quickset{\quickset{\omega'},\quickset{\omega''}}$ which have yielded the updated red edges $\quickset{\omega,\omega'}$ and $\quickset{\omega,\omega''}$.
\end{enumerate}
\end{claim}

\begin{proof}[Proof of Claim~\ref{cla:regular223-1}]

We prove the claim by induction on $i$. The case $i=0$ is trivially true. 
Now, let $i \in \segment{1}{n}$ such that items (i)--(v) hold for $i-1$. Recall that, by definition of regular play, during Phase~$i$:
\begin{itemize}
\item Either Left chooses $\mu(x_i) = {\sf T}$, in which case he picks $a_i, c_i, f_i,g_i$ and Right picks $b_i, d_i,a'_i$, or Left chooses $\mu(x_i) = {\sf F}$, in which case he picks $a'_i, c_i, f_i,g_i$ and Right picks $b'_i, d_i,a_i$.
\item Then, either Right chooses $\mu(y_i) = {\sf T}$, in which case she picks $r_i, t_i, v_i,w_i$ and Left picks $s_i, u_i,s'_i$, or Right chooses $\mu(x_i) = {\sf F}$, in which case she picks $r'_i, t_i, v_i,w_i$ and Left picks $s'_i, u_i,s_i$.
\end{itemize}

We now show that items (i)--(v) hold after these moves have been played.

\begin{enumerate}[label={\textup{(\roman*)}}]

\item By the induction hypothesis, every vertex that was picked during Phases~$1$ to $i-1$ is in  $\bigcup_{1 \leq t \leq i-1} (V_{x_t} \cup V_{y_t})$. Moreover, every vertex that was picked during Phase~$i$ is in $V_{x_i} \cup V_{y_i}$. The consequence on the destruction-edges and clause-edges is clear, since none of them intersects $\bigcup_{1 \leq t \leq n} (V_{x_t} \cup V_{y_t})$.

\item Using the induction hypothesis, we only need to address the case $t=i$. By symmetry, assume that the chosen values are $\mu(x_i) = {\sf T}$ and $\mu(y_i) = {\sf T}$. This means that $a_i,c_i,f_i,g_i,s_i,u_i, s'_i$ have been picked by Left, and $b_i,d_i,a'_i,r_i,t_i,v_i,w_i$ have been picked by Right. Item (iii) of the induction hypothesis ensures that, before the start of Phase~$i$, Left had picked $u_{i-1}$ and that Right had picked $v_{i-1}$. Knowing this, we can easily check that every edge in $E_i \cup F_i \cup G_i \cup P_i \cup Q_i \cup R_i$ is dead:
\begin{itemize}[nolistsep,noitemsep]
\item Every edge in $E_i$ contains $u_{i-1}$ (picked by Left). They also all contain one of $b_i$ or $a'_i$ (both picked by Right). All in all, every edge in $E_i$ is dead by the end of Phase~$i$. The case of $Q_i$ is analogous.
\item Every edge in $F_i$ contains one of $c_i$ or $f_i$ (both picked by Left), and also contains $v_{i-1}$ (picked by Right), so it is dead by the end of Phase~$i$. The case of $P_i$ is analogous.
\item Every edge in $G_i$ contains one of $a_i$, $c_i$ or $g_i$ (all picked by Left), and one of $d_i$ or $a'_i$ (both picked by Right), so it is dead by the end of Phase~$i$.
\item Every edge in $R_i$ contains one of $u_i$ or $s_i$ (both picked by Left), and one of $t_i$ or $w_i$ (both picked by Right), so it is dead by the end of Phase~$i$.
\end{itemize}

\item Using the induction hypothesis, we only need to address the case $t=i$. Clearly, Left has picked $f_i$ and $u_i$, while Right has picked $d_i$ and $v_i$. Every trap-edge in $T_{i,j}$ contains both $f_i$ and $v_i$, and is therefore dead.

\item Using the induction hypothesis, we only need to address the case $\ind(\ell_j^k)=i$, {\em i.e.} $\ell_j^k \in \quickset{x_i,\neg x_i,y_i, \neg y_i}$.
\begin{itemize}[nolistsep,noitemsep]
\item Suppose $\ell_j^k =x_i$, {\em i.e.} $v(\ell_j^k) =b_i$. If $\mu(\ell_j^k)={\sf T}$, then $\mu(x_i)={\sf T}$, so by definition of regular play $b_i$ has been picked by Right. If $\mu(\ell_j^k)={\sf F}$, then $\mu(x_i)={\sf F}$, so by definition of regular play $b_i$ has not been picked.
\item Suppose $\ell_j^k =\neg x_i$, {\em i.e.} $v(\ell_j^k) =b'_i$. If $\mu(\ell_j^k)={\sf T}$, then $\mu(x_i)={\sf F}$, so $b'_i$ has been picked by Right. If $\mu(\ell_j^k)={\sf F}$, then $\mu(x_i)={\sf T}$, so $b'_i$ has not been picked.
\item Suppose $\ell_j^k =y_i$, {\em i.e.} $v(\ell_j^k) =r_i$. If $\mu(\ell_j^k)={\sf T}$, then $\mu(y_i)={\sf T}$, so by definition of regular play $r_i$ has been picked by Right. If $\mu(\ell_j^k)={\sf F}$, then $\mu(y_i)={\sf F}$, so by definition of regular play $b_i$ has not been picked.
\item Finally, suppose $\ell_j^k =\neg y_i$, {\em i.e.} $v(\ell_j^k) =r'_i$. If $\mu(\ell_j^k)={\sf T}$, then $\mu(y_i)={\sf F}$, so $r'_i$ has been picked by Right. If $\mu(\ell_j^k)={\sf F}$, then $\mu(y_i)={\sf T}$, so $r'_i$ has not been picked.
\end{itemize}
The consequence on the link-edges is clear.

\item This is a consequence of items (i) and (iii). If $i=n$, then Right has picked $v_n$, hence the updated red edges $\quickset{\omega,\omega'}$ and $\quickset{\omega,\omega''}$.\qedhere
\end{enumerate}
\end{proof}

A consequence of Claim \ref{cla:regular223-1} is that regular play ends without a winner. We now explain what happens afterwards.

\begin{claim}\label{cla:regular223-2}
Assume both players are constrained to obey regular play until it ends. Then, Left has a winning strategy on $\Ga$ as the first player if and only if Falsifier has a winning strategy for 3-QBF on $\phi$ as the first player.
\end{claim}

\begin{proof}[Proof of Claim~\ref{cla:regular223-2}]

Consider the situation at the end of regular play, as summarized by Claim~\ref{cla:regular223-1} for $i=n$. All guide-edges and trap-edges are dead. Some link-edges and switch-edges have yielded updated red 2-edges. All other edges are intact. In particular, all updated blue edges have size 3. Therefore, Left is now forced to destroy the red $P_3$ formed by the updated red edges $\{\omega,\omega'\}$ and $\{\omega,\omega''\}$, otherwise Right would pick $\omega$ next and win one move later by picking $\omega'$ or $\omega''$ (see Figure \ref{fig:223-Omega}, left). Since these two are the only updated edges containing $\omega'$ and $\omega''$ respectively, Lemma \ref{lem:dominating} ensures that it is optimal for Left to pick $\omega$. We assume that Left does pick $\omega$.

\begin{figure}[h]
    \centering
    \includegraphics[scale=.5]{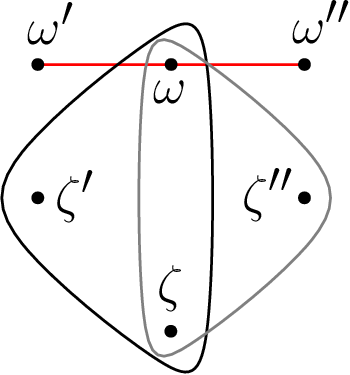}\hspace{10em}
    \includegraphics[scale=.5]{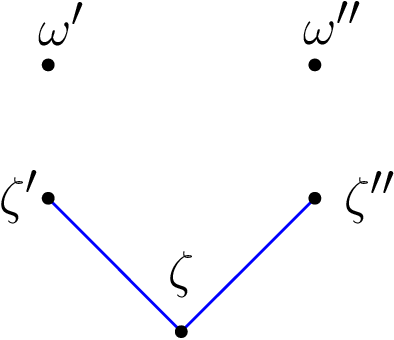}
    \caption{The switch-edges, as updated at the end of regular play (on the left) and after Left picks $\omega$ (on the right).
    }\label{fig:223-Omega}
\end{figure}

Right is now next to play, and her move is not forced. Indeed, even though the two switch-edges $\omega \otimes \zeta \otimes \quickset{\quickset{\zeta'},\quickset{\zeta''}}$ have yielded a blue $P_3$ in the updated game due to Left picking $\omega$ (see Figure \ref{fig:223-Omega}, right), Right does have some updated red 2-edges of her own.

In particular, consider some $(j,k) \in \segment{1}{m} \times \segment{1}{3}$ such that Right had picked $v(\ell_j^k)$ during regular play. The updated clause gadget associated with the clause $c_j$ is pictured in Figure \ref{fig:223-Clause-updated}. The link-edge $\{v(\ell_j^k),\alpha_{j,k} , \beta_{j,k}\}$ has yielded the updated red edge $\quickset{\alpha_{j,k} , \beta_{j,k}}$. Since all trap-edges are dead, the only other updated edge containing $\beta_{j,k}$ is the intact destruction-edge $\quickset{\alpha_{j,k} , \beta_{j,k},\alpha_{j,k-1}}$ (in $\alpha_{j,k-1}$, the index $k-1$ is taken modulo 3 in $\{1,2,3\}$, as will be the case for subsequent indices in this paragraph).
As such, both updated edges containing $\beta_{j,k}$ also contain $\alpha_{j,k}$, so $\alpha_{j,k}$ is a greedy move in the sense of Lemma \ref{lem:greedy}: it is optimal for Right to pick $\alpha_{j,k}$ and for Left to answer by picking $\beta_{j,k}$.

\begin{figure}[h]
    \centering
    \includegraphics[scale=.5]{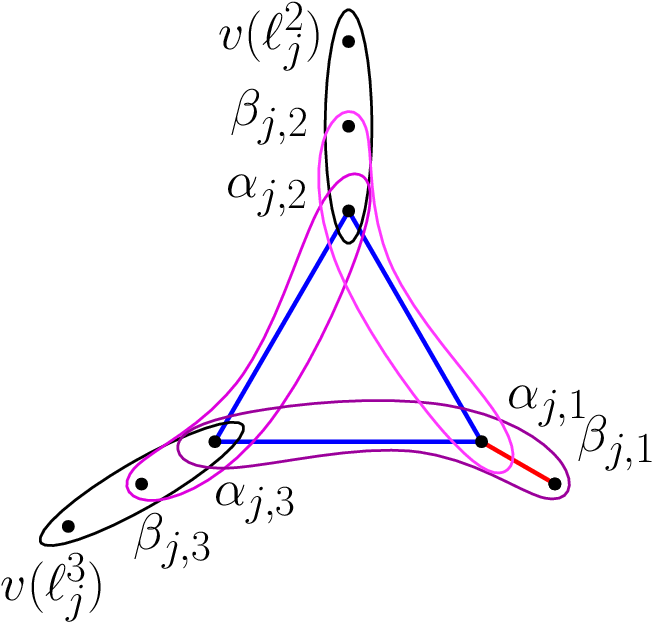}%
    \caption{Clause gadget associated with the clause $c_j = \ell_j^1 \vee \ell_j^2 \vee \ell_j^3$, as updated after Left picked $\omega$, in the case where $\mu(\ell_j^1) = {\sf T}$ and $\mu(\ell_j^2) = \mu(\ell_j^3) = {\sf F}$.}
    \label{fig:223-Clause-updated}
\end{figure}

\begin{center}
\begin{tikzcd}[row sep = tiny , sep = small]
\arrow{r}{\text{g}} & \textcolor{red}{\alpha_{j,k}} \arrow{r}{\text{f}} & \textcolor{blue}{\beta_{j,k}}
\end{tikzcd}
\end{center}    

After these two moves, notice that the destruction-edge $\{\alpha_{j,k+1},\beta_{j,k+1},\alpha_{j,k}\}$ has yielded the updated red edge $\{\alpha_{j,k+1},\beta_{j,k+1}\}$, as pictured on the left of Figure \ref{fig:223-Clause-updated-2}. Again, $\alpha_{j,k+1}$ is a greedy move, so optimal play dictates:

\begin{figure}[h]
    \centering
    \includegraphics[scale=.5]{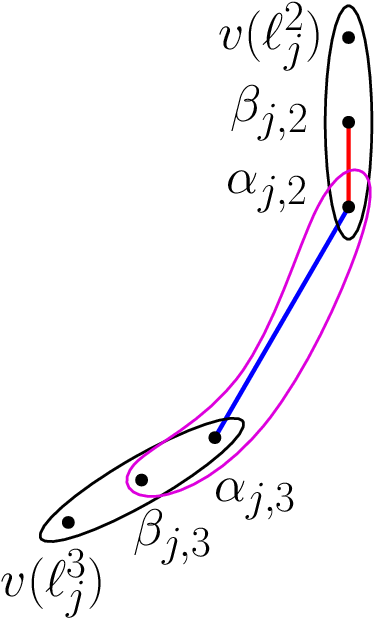} \hspace{10em}
    \includegraphics[scale=.5]{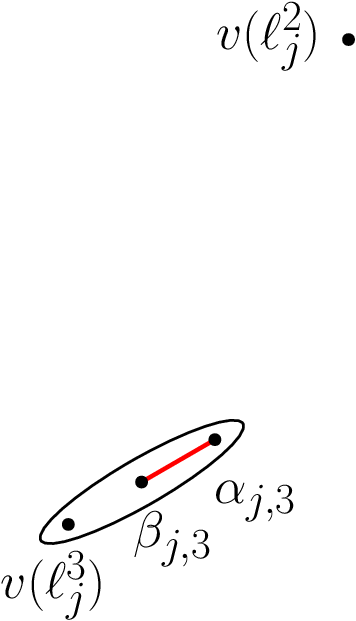}%
    \caption{On the left: clause gadget associated with the clause $c_j = \ell_j^1 \vee \ell_j^2 \vee \ell_j^3$, in the case where $\mu(\ell_j^1) = {\sf T}$ and $\mu(\ell_j^2) = \mu(\ell_j^3) = {\sf F}$, as updated after Right picked $\alpha_{j,1}$ and Left picked $\beta_{j,1}$. On the right: the same gadget after Right picked $\alpha_{j,2}$ and Left picked $\beta_{j,2}$.}
    \label{fig:223-Clause-updated-2}
\end{figure}

\begin{center}
\begin{tikzcd}[row sep = tiny , sep = small]
\arrow{r}{\text{g}} & \textcolor{red}{\alpha_{j,k+1}} \arrow{r}{\text{f}} & \textcolor{blue}{\beta_{j,k+1}}
\end{tikzcd}
\end{center}

Similarly, as pictured on the right of Figure \ref{fig:223-Clause-updated-2}, $\alpha_{j,k+2}$ is now a greedy move, so optimal play dictates:

\begin{center}
\begin{tikzcd}[row sep = tiny , sep = small]
\arrow{r}{\text{g}} & \textcolor{red}{\alpha_{j,k+2}} \arrow{r}{\text{f}} & \textcolor{blue}{\beta_{j,k+2}}
\end{tikzcd}
\end{center}

After this chain reaction, all edges (link-edges, destruction-edges and clause-edges) from the clause gadget associated with $c_j$ are dead. The same happens (in no particular order) for every clause $c_j$ such that Right has picked $v(\ell^1_j)$, $v(\ell^2_j)$ or $v(\ell^3_j)$ during regular play. By item (iv) of Claim~\ref{cla:regular223-1}, these are exactly the clauses that are satisfied by $\mu$. It is now easy to conclude:

\begin{itemize}
\item Suppose Satisfier has a winning strategy $\mathcal{S}$ for 3-QBF on $\phi$ as the second player. We define the following drawing strategy for Right on $\Ga$ as the second player. During regular play, for all $i\in\segment{1}{n}$, Right chooses $\mu(y_i)$ depending on $\mu(x_1), \mu(y_1), \dots, \mu(x_{i})$ according to $\mathcal{S}$, so that $\mu$ ends up satisfying $\phi$. After regular play ends, the optimal moves explained above are played, {\em i.e.}: Left picks $\omega$, then Right picks $\alpha_{j,k}$ and Left picks $\beta_{j,k}$ for all $(j,k) \in \segment{1}{m} \times \segment{1}{3}$ such that $\mu$ satisfies the clause $c_j$. Since all the clauses are satisfied, almost all edges are dead at this point: the only updated edges are $\{\zeta,\zeta'\}$ and $\{\zeta,\zeta''\}$, forming the blue $P_3$ which was created when Left picked $\omega$. Since Right is next to play, she can pick $\zeta$ and draw the game.
\item Now, suppose that Falsifier has a winning strategy $\mathcal{S}$ for 3-QBF on $\phi$ as the first player. We define the following winning strategy for Left on $\Ga$ as the first player. During regular play, for all $i\in\segment{1}{n}$, Left chooses $\mu(x_i)$ depending on $\mu(x_1), \mu(y_1), \dots, \mu(x_{i-1}),\mu(y_{i-1})$ according to $\mathcal{S}$, so that $\mu$ ends up falsifying $\phi$. After regular play ends, the optimal moves explained above are played, {\em i.e.}: Left picks $\omega$, then Right picks $\alpha_{j,k}$ and Left picks $\beta_{j,k}$ for all $(j,k) \in \segment{1}{m} \times \segment{1}{3}$ such that $\mu$ satisfies the clause $c_j$. Note that, in the updated game after these moves, all updated red edges have size 3, so Right cannot make any more immediate threats. On the other hand, since some clause $c_{j_0}$ is falsified, Right has failed to destroy the blue $P_3$ formed by the updated blue edges $\{\alpha_{{j_0},1},\alpha_{{j_0},2}\}$ and $\{\alpha_{{j_0},2},\alpha_{{j_0},3}\}$ (coming from the clause-edges). That blue $P_3$ is disjoint from the other blue $P_3$ formed by the updated blue edges $\{\zeta,\zeta'\}$ and $\{\zeta,\zeta''\}$ (coming from the switch-edges). Right is next to play, but she cannot destroy both blue $P_3$'s, so Left will win two moves later. \qedhere
\end{itemize}
\end{proof}

\subsubsection{Optimality of regular play}\strut
\indent We now show that regular play is indeed optimal for both players.

\begin{claim}\label{cla:regular223-4}
    Assume that regular play is not over and that both players have obeyed regular play thus far. If Left is next to play, then it is optimal for Left to obey regular play with his next move.
\end{claim}

\begin{proof}[Proof of Claim~\ref{cla:regular223-4}]

Consider a situation in regular play where Left is next to play. Let $\Ga'$ be the updated game, and let $i\in\segment{1}{n}$ be the current phase of regular play. There are three types of moves for Left during regular play: forced move, greedy move, or ``decision move'' between $a_i$ and $a'_i$.

\begin{itemize}
\item Let us first consider the case where regular play suggests a forced move (``f''). Clearly, Right is indeed threatening to win in one move since, in the updated game before Right's previous move, there was an updated red $2$-edge containing the vertex Right has just picked. Therefore, the only way that Left's move would not actually be forced is if Left could win in one move himself, but this is not the case since regular play has Right defending all of Left's threats.

\item Now, we consider the case where regular play suggests a greedy move (``g''). This only happens during Step~(4), for $s_i$ or $s'_i$ (recall Figure \ref{fig:223-VarRight}). By symmetry, consider the case where the suggested move is $s'_i$, {\em i.e.} the following sequence of moves was made during Step~(3):
        
\begin{center}
\begin{tikzcd}[row sep = tiny , sep = small]
 \arrow{r}{} & \textcolor{red}{r_i} \arrow{r}{} & \textcolor{blue}{s_i} \arrow{r}{} & \textcolor{red}{t_i} \arrow{r}{} & \textcolor{blue}{u_i} \arrow{r}{} & \textcolor{red}{v_i}
\end{tikzcd}
\end{center}

In $\Ga$, the only edge containing $w_i$ is the guide-edges $\{s_i,s'_i,w_i\}$. In $\Ga'$, since Left has picked $s_i$, the only edge containing $w_i$ is a red edge $\{s'_i,w_i\}$. Since there is no 1-edge in $\Ga'$, Lemma \ref{lem:greedy} ensures that it is indeed optimal for Left to pick $s'_i$ as a greedy move.

\item Finally, we consider the case where regular play suggests picking one of $a_i$ or $a'_i$ during Step~(1) (recall Figure \ref{fig:223-VarLeft}). In $\Ga'$, there is no 1-edge, and there are disjoint red $P_3$'s $a_if_ia'_i$ and $c'_ic_ic''_i$. Therefore, Left needs to pick a vertex inside some updated blue 2-edge, otherwise Right's answer would not be forced, so Right could win in two moves using a red $P_3$. By Claim \ref{cla:regular223-1}, the only blue 2-edges in $\Ga'$ are $\{a_i,b_i\}$ and $\{a'_i,b'_i\}$ (which, if $i>1$, come from the guide-edges $u_{i-1} \otimes \quickset{\quickset{a_i,b_i},\quickset{a'_i,b'_i}}$ in which Left has picked $u_{i-1}$ during Phase $i-1$).

Supposed that Left deviates from regular play and picks $b_i$ or $b'_i$. By symmetry, assume that Left picks $b_i$. This results in the following sequence of forced moves:
	\begin{center}
	\begin{tikzcd}[row sep = tiny , sep = small]
	\arrow{r}{} & \textcolor{blue}{b_i} \arrow{r}{f} & \textcolor{red}{a_i} \arrow{r}{f} & \textcolor{blue}{f_i} 
	\end{tikzcd}
	\end{center}
	After this, Right has no forced move, so she wins using the red $P_3$ $c'_ic_ic''_i$ which is still there in the updated game.
	
	Therefore, the only possible non-losing moves for Left are $a_i$ and $a'_i$. \qedhere
\end{itemize}
\end{proof}

\begin{claim}\label{cla:regular223-3}
    Assume that regular play is not over and that both players have obeyed regular play thus far. If Right is next to play, then it is optimal for Right to obey regular play with her next move.
\end{claim}

\begin{proof}[Proof of Claim~\ref{cla:regular223-3}]
    Consider a situation in regular play where Right is next to play. Let $\Ga'$ be the updated game, and let $i\in\segment{1}{n}$ be the current phase of regular play. There are three types of moves for Right during regular play: forced move, greedy move, or ``decision move'' between $r_i$ and $r'_i$.

\begin{itemize}
\item Let us first consider the case where regular play suggests a forced move (``f''). Clearly, Left is indeed threatening to win in one move since, in the updated game before Left's previous move, there was an updated blue $2$-edge containing the vertex Left has just picked. Therefore, the only way that Right's move would not actually be forced is if Right could win in one move herself, but this is not the case since regular play has Left defending all of Right's threats.

\item Now, we consider the case where regular play suggests a greedy move (``g''). This only happens during Step~(2), for $a_i$ or $a'_i$ (recall Figure \ref{fig:223-VarLeft}). By symmetry, consider the case where the suggested move is $a'_i$, {\em i.e.} the following sequence of moves was made during Step~(1):
        
\begin{center}
\begin{tikzcd}[row sep = tiny , sep = small]
\arrow{r}{} & \textcolor{blue}{a_i} \arrow{r}{} & \textcolor{red}{b_i} \arrow{r}{} & \textcolor{blue}{c_i} \arrow{r}{} & \textcolor{red}{d_i} \arrow{r}{} & \textcolor{blue}{f_i}
\end{tikzcd}
\end{center}

In $\Ga$, the only edges containing $g_i$ are the guide-edges $\{a_i,d_i,g_i\}$ and $\{a'_i,d_i,g_i\}$. In $\Ga'$, since Left has picked $a_i$ and Right has picked $d_i$, the only edge containing $g_i$ is a red edge $\{a'_i,g_i\}$. Since there is no 1-edge in $\Ga'$, Lemma \ref{lem:greedy} ensures that it is indeed optimal for Right to pick $a'_i$ as a greedy move.

\item Finally, we consider the case where regular play suggests picking one of $r_i$ or $r'_i$ during Step~(3) (recall Figure \ref{fig:223-VarRight}). In $\Ga'$, there is no 1-edge, and there are disjoint blue $P_3$'s $r_iv_ir'_i$ and $t'_it_it''_i$. Therefore, Right needs to pick a vertex inside some updated red 2-edge, otherwise Left's answer would not be forced, so Left could win in two moves using a blue $P_3$. By Claim \ref{cla:regular223-1}, the red 2-edges in $\Ga'$ are: $\{r_i,s_i\}$ and $\{r'_i,s'_i\}$ (which come from the guide-edges $d_i \otimes \quickset{\quickset{r_i,s_i},\quickset{r'_i,s'_i}}$ in which Right has picked $d_i$), as well as $\quickset{\alpha_{j,k},\beta_{j,k}}$ (which comes from the link-edge $\{v(\ell^k_j),\alpha_{j,k},\beta_{j,k}\}$) for all $(j,k) \in \segment{1}{m} \times \segment{1}{3}$ such that $\ind(\ell_j^k) \leq i$ and $\mu(\ell_j^k)={\sf T}$.

\begin{itemize}
	\item Suppose that Right picks $s_i$ or $s'_i$. By symmetry, assume that Right picks $s_i$. This results in the following sequence of forced moves:
	\begin{center}
	\begin{tikzcd}[row sep = tiny , sep = small]
	\arrow{r}{} & \textcolor{red}{s_i} \arrow{r}{f} & \textcolor{blue}{r_i} \arrow{r}{f} & \textcolor{red}{v_i} 
	\end{tikzcd}
	\end{center}
	After this, Left has no forced move, so he wins using the blue $P_3$ $t'_it_it''_i$ which is still there in the updated game.
	\item Suppose that Right picks $\alpha_{j,k}$ or $\beta_{j,k}$ for some $(j,k) \in \segment{1}{m} \times \segment{1}{3}$ such that $\ind(\ell_j^k) \leq i$ and $\mu(\ell_j^k)={\sf T}$. We assume that Right picks $\alpha_{j,k}$, as the case of $\beta_{j,k}$ is analogous. This is where the trap-edges come into play. Since Left has already picked $f_i$ during Step~(1), and $v_i$ has not yet been picked, the two trap-edges $\quickset{\quickset{v_i,f_i}} \otimes \quickset{\quickset{\alpha_{j,k}},\quickset{\beta_{j,k}}} $ have yielded the blue 2-edges $\quickset{v_i,\alpha_{j,k}}$ and $\quickset{v_i,\beta_{j,k}}$ in $\Ga'$. Therefore, Right picking $\alpha_{j,k}$ results in the following sequence of forced moves:
\begin{center}
\begin{tikzcd}[row sep = tiny , sep = small]
\arrow{r}{} & \textcolor{red}{\alpha_{j,k}} \arrow{r}{f} & \textcolor{blue}{\beta_{j,k}} \arrow{r}{f} & \textcolor{red}{v_i} 
\end{tikzcd}
\end{center}
After this, Left has no forced move, so he wins using the blue $P_3$ $t'_it_it''_i$ which is still there in the updated game.
\end{itemize}
All in all, the only possible non-losing moves for Right are $r_i$ and $r'_i$. \qedhere
\end{itemize}
\end{proof}

Putting Claims \ref{cla:regular223-2}, \ref{cla:regular223-4} and \ref{cla:regular223-3} together, we can see that Left has a winning strategy on $\Ga$ as the first player if and only if Falsifier has a winning strategy for \QBF~on $\phi$ as the first player. Note that $|V|=O(n+m)$, $|E_L \cup E_R|=O(nm)$ (because of the trap-edges), and $\Ga$ can be constructed from $\phi$ in polynomial time. This achieves the desired reduction and ends the proof of Theorem \ref{theo:33}.

\section{Conclusion}\label{section5}\strut
\indent We have introduced achievement positional games, a new convention for positional games where the players try to fill different edges. We have established some of their general properties, which do not significantly differ from those of Maker-Maker games, and obtained complexity results for all edge sizes which were not already settled by previous results on positional games.

One of the motivations behind achievement positional games is to have a framework for intermediate positions of the Maker-Maker convention, since edges in which a player has picked a vertex can only be filled by that same player from there on. This idea has already materialized, in two ways. First of all, since the publication of the short conference version of this work \cite{eurocomb}, our approach has been used to show {\sf PSPACE}-completeness of (starting positions of) the Maker-Maker convention on hypergraphs of rank 4 \cite{makermaker4}. This leaves hypergraphs of rank 3 as the only open case, and our Corollary \ref{coro:33} is the first step made towards solving that case, as it states that the Maker-Maker convention is {\sf PSPACE}-complete for positions that can be obtained from a 3-uniform hypergraph after just one round of play.

As mentioned in the introduction, one can define the Maker-Breaker convention as the special case of achievement positional games where each of $(V,E_L)$ and $(V,E_R)$ is the transversal hypergraph of the other. This inclusion of the transversals in the input game is mathematically redundant, but it is potentially significant from an algorithmic standpoint, as even hypergraphs with bounded rank can have an exponential number of minimal transversals. Do Maker-Breaker games become easy if both players' winning sets are part of the input rather than solely Maker's? Answering this question would help understanding what makes the Maker-Breaker convention hard (which it is for 4-uniform hypergraphs \cite{MBrank4}): does it come from the hardness of computing the transversal hypergraph, or would it still be hard if we were given all the transversals.

Like all positional games, achievement positional games can be generalized to CNF formulas in place of hypergraphs. We would have two formulas, one for Left and one for Right, on the same set of variables. Left and Right takes turns choosing a variable and a valuation ({\sf T} or {\sf F}) for that variable. Whoever first falsifies their own formula wins. Achievement positional games correspond to the special case where Left's formula is positive and Right's formula is negative, so that Left (resp. Right) always chooses the valuation {\sf F} (resp. {\sf T}). As \prob{$2$}{$2$} is in {\sf P}, it would be interesting to see if the case where all clauses have size at most 2 is tractable.

Another natural prospect would be to define avoidance positional games, where whoever first fills an edge of their color loses. Since the Avoider-Avoider convention is already {\sf PSPACE}-complete for 2-uniform hypergraphs \cite{col}, the complexity aspects would not be as interesting. However, an analogous study to that of Section \ref{section3} could be performed to better understand general properties of avoidance games.

Finally, one could go one step further and consider a vertex-partizan version of positional games, where some vertices can only be picked by Left (resp. Right). This could be used to model games where Left's moves are of a different nature than Right's. For instance, we could imagine a variation of the Maker-Breaker convention where Breaker deletes edges rather than vertices.

\section*{Acknowledgments}\strut
\indent We thank Sylvain Gravier for inspiring the topic of this paper and providing useful feedback. This research was partly supported by the ANR project P-GASE (ANR-21-CE48-0001-01).

\bibliographystyle{biblio_style}
\bibliography{biblio_new}

\newpage

\appendix

\section{Examples for the outcome of a disjoint union}\label{appendix}\strut
\indent Tables \ref{tab:LL_RR}-\ref{tab:LL_NN}-\ref{tab:LL_RRD}-\ref{tab:NN_NN}-\ref{tab:NN_LLD}-\ref{tab:LLD_LLD}-\ref{tab:LLD_RRD} feature examples of all possible outcomes for the disjoint union $\Ga \cup \Ga'$ depending on the outcomes of $\Ga$ and $\Ga'$, up to symmetries, for all cells of Table \ref{tab:union} where the outcome is not unique.

Some of them use the following construction, which can be seen as a generalization of the butterfly for any number of moves needed to win. We define the achievement positional game $\Win_k^L=(V,E_L,E_R)$ where $V=\{u,v_1,\ldots,v_{2k-2}\}$, $E_L=\{S \cup \{u\} \mid S \subseteq \{v_1,\ldots,v_{2k-2}\}, |S|=k-1\}$ and $E_R=\varnothing$. The idea is that $\Win_k^L$ is a game with outcome $\LLD$, on which Left needs $k$ moves to win as the first player (by picking $u$ then picking arbitrary vertices), and which Right can entirely destroy as the first player by picking $u$ herself. Similarly, we define $\Win_k^R$ by swapping $E_L$ with $E_R$.

For instance, consider the bottom row of Table \ref{tab:NN_NN}:
\begin{itemize}[nolistsep,noitemsep]
    \item We have $o(\Ga)=\NN$. Indeed, Left wins in two moves as the first player thanks to $\Win_2^L$, and Right wins as the first player by destroying $\Win_2^L$ and then winning thanks to whichever of $\Win_5^R$ or $\Win_7^R$ is left intact.
    \item We have $o(\Ga')=\NN$. Indeed, let $u$ be the vertex of degree 3 in the ``T''-shaped part. Left wins in two moves as the first player by picking $u$, and Right wins as the first player by picking $u$ (which forces Left's answer) and then winning thanks to $\Win_3^R$.
    \item We have $o(\Ga \cup \Ga')=\LLD$. Indeed, Left wins in two moves as the first player by picking $u$ for example. Now, suppose that Right starts. As she cannot ensure a win in less than three moves, she must start by destroying both of Left's threats of winning in two moves. For this, she must pick $u$ first (which forces Left's answer) and then destroy $\Win_2^L$. After that, it is Left's turn, but he cannot win in less than four moves while Right is threatening to win in three moves. Therefore, Left must destroy $\Win_3^R$. According to the same reasoning, the game continues with Right destroying $\Win_4^L$, Left destroying $\Win_5^R$, Right destroying $\Win_6^L$ and Left destroying $\Win_7^R$. In conclusion, the game ends in a draw.

\end{itemize}

\begin{center}
\begin{table}[h] \centering
\begin{tabular}{|c|c|c|}
\hline
$\Ga$ & $\Ga'$ & $o(\Ga \cup \Ga')$ \\
\hline
\includegraphics[scale=.45]{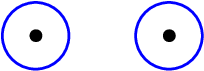}%
&
\includegraphics[scale=.45]{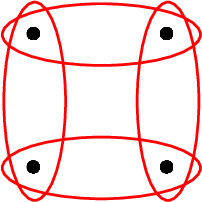}%
& $\LL$ \\
\hline
\includegraphics[scale=.45]{L_1.eps}%
&
\includegraphics[scale=.45]{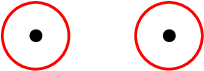}%
& $\NN$ \\
\hline
\includegraphics[scale=.45]{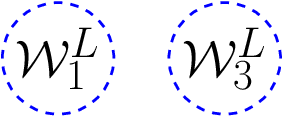}%
&
\includegraphics[scale=.45]{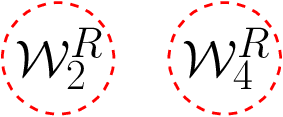}%
& $\LLD$ \\
\hline
\end{tabular}
\caption{Examples for all cases when $o(\Ga)=\LL$ and $o(\Ga')=\RR$.}\label{tab:LL_RR}
\end{table}
\end{center}

\begin{center}
\begin{table}[h] \centering
\begin{tabular}{|c|c|c|}
\hline
$\Ga$ & $\Ga'$ & $o(\Ga \cup \Ga')$ \\
\hline
\includegraphics[scale=.45]{L_1.eps}%
&
\includegraphics[scale=.45]{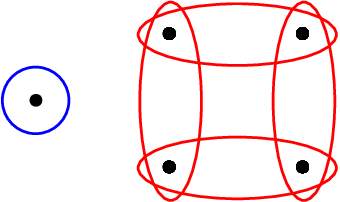}%
& $\LL$ \\
\hline
\includegraphics[scale=.45]{L_1.eps}%
&
\includegraphics[scale=.45]{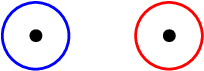}%
& $\NN$ \\
\hline
\includegraphics[scale=.45]{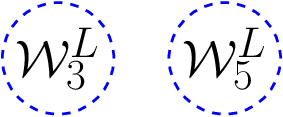}%
&
\includegraphics[scale=.45]{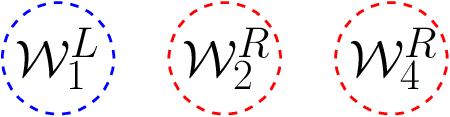}%
& $\LLD$ \\
\hline
\end{tabular}
\caption{Examples for all cases when $o(\Ga)=\LL$ and $o(\Ga')=\NN$.}\label{tab:LL_NN}
\end{table}
\end{center}

\begin{center}
\begin{table}[h] \centering
\begin{tabular}{|c|c|c|}
\hline
$\Ga$ & $\Ga'$ & $o(\Ga \cup \Ga')$ \\
\hline
\includegraphics[scale=.45]{L_1.eps}%
&
\includegraphics[scale=.45]{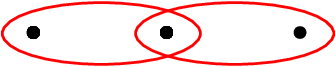}%
& $\LL$ \\
\hline
\includegraphics[scale=.45]{L_1.eps}%
&
\includegraphics[scale=.45]{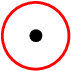}%
& $\NN$ \\
\hline
\includegraphics[scale=.45]{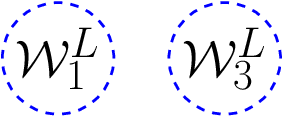}%
&
\includegraphics[scale=.45]{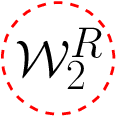}%
& $\LLD$ \\
\hline
\end{tabular}
\caption{Examples for all cases when $o(\Ga)=\LL$ and $o(\Ga')=\RRD$.}\label{tab:LL_RRD}
\end{table}
\end{center}

\begin{center}
\begin{table}[h] \centering
\begin{tabular}{|c|c|c|}
\hline
$\Ga$ & $\Ga'$ & $o(\Ga \cup \Ga')$ \\
\hline
\includegraphics[scale=.45]{N_2.eps}%
&
\includegraphics[scale=.45]{N_2.eps}%
& $\LL$ \\
\hline
\includegraphics[scale=.45]{N_1.eps}%
&
\includegraphics[scale=.45]{N_1.eps}%
& $\NN$ \\
\hline
\includegraphics[scale=.45]{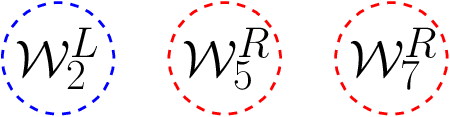}%
&
\includegraphics[scale=.45]{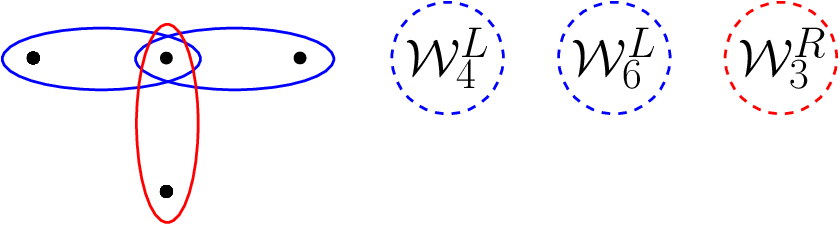}%
& $\LLD$ \\
\hline
\end{tabular}
\caption{Examples for all cases when $o(\Ga)=\NN$ and $o(\Ga')=\NN$.}\label{tab:NN_NN}
\end{table}
\end{center}

\begin{center}
\begin{table}[h] \centering
\begin{tabular}{|c|c|c|}
\hline
$\Ga$ & $\Ga'$ & $o(\Ga \cup \Ga')$ \\
\hline
\includegraphics[scale=.45]{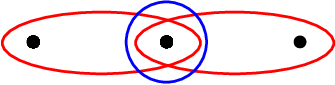}%
&
\includegraphics[scale=.45]{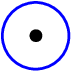}%
& $\LL$ \\
\hline
\includegraphics[scale=.45]{N_6.eps}%
&
\includegraphics[scale=.45]{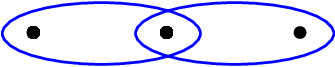}%
& $\NN$ \\
\hline
\includegraphics[scale=.45]{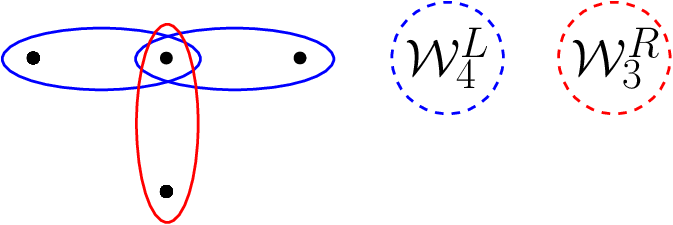}%
&
\includegraphics[scale=.45]{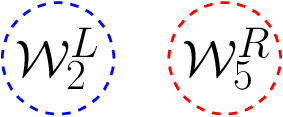}%
& $\LLD$ \\
\hline
\end{tabular}
\caption{Examples for all cases when $o(\Ga)=\NN$ and $o(\Ga')=\LLD$.}\label{tab:NN_LLD}
\end{table}
\end{center}

\begin{center}
\begin{table}[h] \centering
\begin{tabular}{|c|c|c|}
\hline
$\Ga$ & $\Ga'$ & $o(\Ga \cup \Ga')$ \\
\hline
\includegraphics[scale=.45]{LLD_1.eps}%
&
\includegraphics[scale=.45]{LLD_1.eps}%
& $\LL$ \\
\hline
\includegraphics[scale=.45]{LLD_2.eps}%
&
\includegraphics[scale=.45]{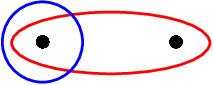}%
& $\LLD$ \\
\hline
\end{tabular}
\caption{Examples for all cases when $o(\Ga)=\LLD$ and $o(\Ga')=\LLD$.}\label{tab:LLD_LLD}
\end{table}
\end{center}

\begin{center}
\begin{table}[h] \centering
\begin{tabular}{|c|c|c|}
\hline
$\Ga$ & $\Ga'$ & $o(\Ga \cup \Ga')$ \\
\hline
\includegraphics[scale=.45]{LLD_1.eps}%
&
\includegraphics[scale=.45]{RRD_2.eps}%
& $\NN$ \\
\hline
\includegraphics[scale=.45]{LLD_1.eps}%
&
\includegraphics[scale=.45]{RRD_1.eps}%
& $\LLD$ \\
\hline
\end{tabular}
\caption{Examples for all cases when $o(\Ga)=\LLD$ and $o(\Ga')=\RRD$.}\label{tab:LLD_RRD}
\end{table}
\end{center}

\end{document}